\documentclass[11pt]{article}

\usepackage[margin=1in]{geometry}
\usepackage{setspace}
\usepackage{amsmath,amssymb,amsthm}
\usepackage{graphicx}
\usepackage{natbib}
\usepackage{hyperref}
\usepackage{authblk}
\theoremstyle{definition}
\newtheorem{definition}{Definition S}
\newtheorem{proposition}{Proposition}
\newtheorem{lemma}{Lemma}

\usepackage{tocloft}

\usepackage{etoc}
\usepackage{tikz}
\usepackage{pgfplots}
\usetikzlibrary{arrows.meta,positioning,calc,fit}

\usepackage{booktabs}
\usepackage{xcolor}
\usepgfplotslibrary{groupplots}

\def\wt{\widetilde{\alpha}}
\def\half{\small{1\over2}}

\title{\bf Predicting fixed-sample test decisions enables anytime-valid inference}

\author[1,2]{Chris Holmes\footnote{Corresponding author. Email: cholmes@eit.org; cholmes@stats.ox.ac.uk}}
\author[3]{Stephen G. Walker}
\affil[1]{AI \& Robotics Institute, Ellison Institute of Technology, Oxford, UK}
\affil[2]{Department of Statistics, University of Oxford, UK}
\affil[3]{University of Kent, UK}

\date{}

\begin{document}

\maketitle

\begin{abstract}

Statistical hypothesis tests typically use prespecified sample sizes, yet data often arrive sequentially. Interim analyses invalidate classical error guarantees, while existing sequential methods require rigid testing preschedules or incur substantial losses in statistical power. We describe a simple procedure that transforms any fixed-sample hypothesis test into an anytime-valid test while ensuring Type-I error control and near-optimal power with substantial savings in samples when the null hypothesis is false. At each step, the procedure predicts the probability that a classical test would reject the null hypothesis at its fixed-sample size, treating future observations as missing data under the null hypothesis. Thresholding this probability yields an anytime-valid stopping rule. In areas such as clinical trials, stopping early and safely can ensure that subjects receive the best treatments and accelerate the development of effective therapies.

\end{abstract}

\section*{Introduction}

Statistical hypothesis testing \citep{Neyman33} underpins decision-making across science and e-commerce, from laboratory experiments to clinical trials, to large-scale adaptive experimentation and online A/B testing.\footnote{A/B testing, also known as split testing, is a randomized experiment comparing digital assets with control A and variant B} In its classical form, a hypothesis test is defined for a fixed sample size and guarantees control of Type I error (false positives) only when the decision is made at the planned endpoint, often referred to  as the maximal sample size. In practice, however, data are rarely generated in a single block. Experiments can be monitored as they progress, often with the desire of stopping early when evidence of the end decision appears compelling \citep{Robbins70, Robbins71}. Such interim analyses fundamentally break the error guarantees of classical tests, making repeated ``peeking” at the data statistically invalid, despite the clear motivation.

A substantial literature has sought to reconcile sequential decision-making with statistical validity. Group sequential designs control error by prespecifying a limited number of interim analyses, requiring rigid testing schedules and careful calibration using numerical methods, see e.g. \citep{Jennison89, Jennison99}. More recent anytime-valid methods allow arbitrary stopping with arbitrary large samples
using  likelihood-ratios and decisions based on notions of betting-odds. However, such approaches suffer substantial losses in power \citep{Shafer11, DeHeideGrunwald2021, hendriksen2021optional, Vovk21, Shafer21, Ramdas2023, Grunwald24}. As a consequence, it is currently believed that validity under arbitrary stopping necessarily requires  accepting a large efficiency penalty.

Earlier, and largely forgotten in the literature, an anytime valid approach was introduced by \cite{Lan1982}. Under the name stochastic curtailment, the approach relies on martingales and the Doob martingale inequality to provide a suitably bounded Type I error for early stopping. See also \cite{Jennsion1990} and \cite{Koning2026}. A disadvantage here, by defining the predictive probability of rejecting the null hypothesis as a martingale is (i) the anytime valid tests which can be implemented are limited due to the martingale constraint, and (ii) such anytime valid tests based on the martingale and martingale inequality are inadmissable, in that there is an alternative test with matching Type I error and superior power.

In this article we provide a comprehensive analysis of stochastic curtailment which we do through a general predictive framework. 
We also show how with this predictive framework we can relax the martingale constraint in a simple way while maintaining the ability to control Type I errors.
The predictive perspective allows us to examine more complicated anytime valid tests, such as nonparametric tests and tests when the null hypothesis is not fully specified, for example, there are nuisance parameters. We can also extend the framework to present anytime valid bootstrap tests: effectively, whenever a bootstrap test is feasible, we can define an anytime valid bootstrap test.

The key idea is to view sequential testing as a prediction problem. At any interim stage, uncertainty about the test outcome at the maximal sample size arises solely from future, unobserved observations. Assuming the null hypothesis is true, we compute the probability that a fixed-sample test would reject the null hypothesis at its planned endpoint. This forward-looking quantity represents a prediction of the final test decision and can be updated continually as new data arrive. This is most often best done by first predicting the test statistic at the maximal sample size conditional on the current observed data.

By thresholding the predicted rejection probability, we obtain a simple sequential decision rule: the null hypothesis is rejected as soon as the predictive probability that the fixed-sample test would ultimately reject the null hypothesis exceeds a user set threshold. However, if this threshold 
is determined from a martingale sequence of test statistics and the martingale inequality, then the resulting test is  inadmissable. The superior test is one in which the threshold is evaluated using Monte Carlo methods, since in this case there is no inaccuracy arising from the martingale inequality.
Ultimately, therefore, there is no need for a martingale constraint on the test statistics and this allows for a large class of test to be adapted to anytime valid tests, even ones using bootstrap methods. 

Under the null hypothesis, we show that the predicted rejection probability evolves in a way that guarantees control of the overall Type I error under arbitrary stopping. Unlike existing anytime-valid approaches, the resulting tests preserve the original fixed-sample test itself rather than replacing it, and this applies broadly, including to nonparametric tests. Our approach applies whenever the null distribution of the fixed-sample test statistic is known or can be simulated, a condition satisfied by most classical parametric and nonparametric tests. Figure~\ref{fig:main1} provides a schematic overview of the procedure.

\begin{figure}[t]
  \centering
  \includegraphics[width=0.8\linewidth]{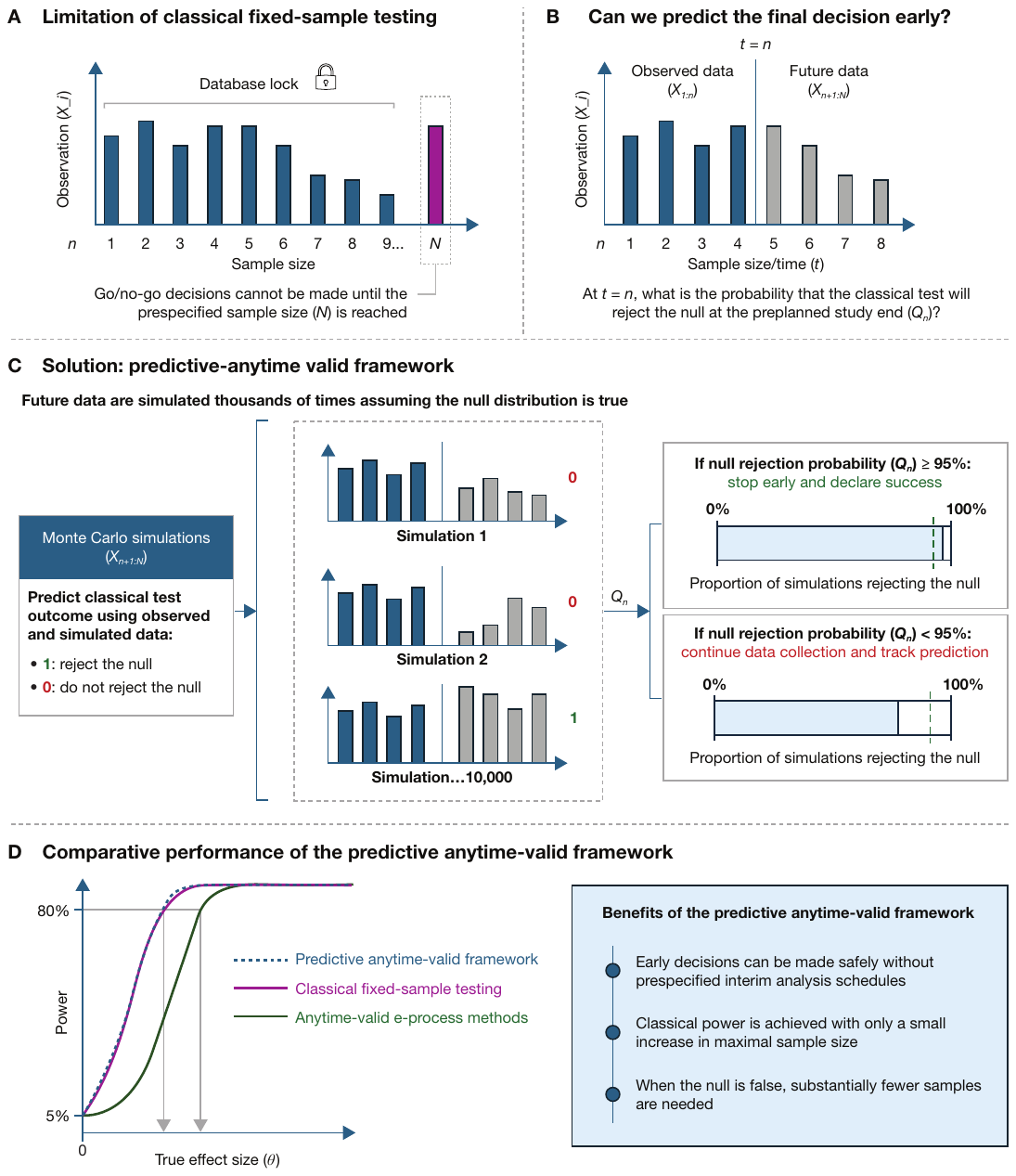}
  \caption{\textbf{Overview of the predictive procedure.} Panel A: the classical fixed-sample test precludes early analysis. Panel B: the uncertainty in the outcome of the fixed-sample test at $n < N$ is driven by the unobserved data $X_{n+1:N}$. Panel C: we can characterise the uncertainty in the test decision at $N$, assuming the null hypothesis to be true, by simulating the missing data under the null hypothesis. Repeated testing of completed datasets estimates the probability that the corresponding fixed-sample test will reject at $N$. Panel D: The predictive test has good power and saves samples when the null hypothesis is false.}
  \label{fig:main1}
\end{figure}

A closely related line of work arises in Bayesian adaptive trials, where decisions are guided by the posterior probability that a treatment will be effective at the planned end of the experiment \citep{berry2010bayesian, Spieg94}. These approaches adopt a forward-looking perspective, updating beliefs about the final outcome as data accrue. However, Bayesian adaptive designs require specification of priors and typically do not provide explicit control of frequentist Type I error \citep{DeHeideGrunwald2021}, making their operating characteristics sensitive to modeling assumptions that can be difficult to justify or communicate in many scientific and regulatory settings. The present work shares the predictive aspect of these Bayesian approaches, but differs fundamentally in that it provides objective, prior-free, predictions of the final fixed-sample test decision with rigorous control of Type I error.

Technical definitions, proofs, and additional examples to those contained in this manuscript are provided in the Supplementary Materials (SM). The background and notation to the anytime valid problem is described in Section 1 of the SM. Section 2 of the SM presents the mathematical foundations for the predictive approach to hypothesis testing.  In what follows, we focus on stopping for efficacy when the null hypothesis is false; stopping for futility,  when the null hypothesis is true, is mentioned in the Discussion and in the Supplementary Materials. We also describe how to construct repeated confidence intervals (RCI) derived from the anytime valid tests and we also consider nonparametric tests based on bootstraps.

\section*{Results}

\subsection*{Predicting the fixed-sample decision enables anytime validity}

At an interim sample size $n < N$, where $N$ is the fixed sample size, the future $N-n$ data points contributing to the final test are unobserved. Our approach provides the ability of early stopping at sample size $n$ by computing the predictive probability that the fixed-sample test would reject the null hypothesis at sample size $N$, conditional on the observed data and assuming the null hypothesis is true. We denote this quantity by $Q_n$. Intuitively, $Q_n$ measures how likely it is that, if the experiment were completed as planned under the null, the final test would reject the null hypothesis. The formal definition of \(Q_n\), together with generic Monte Carlo computation schemes, is given in Section 3 of the SM.

\begin{figure}[t]
  \centering
  \includegraphics[width=0.75\linewidth]{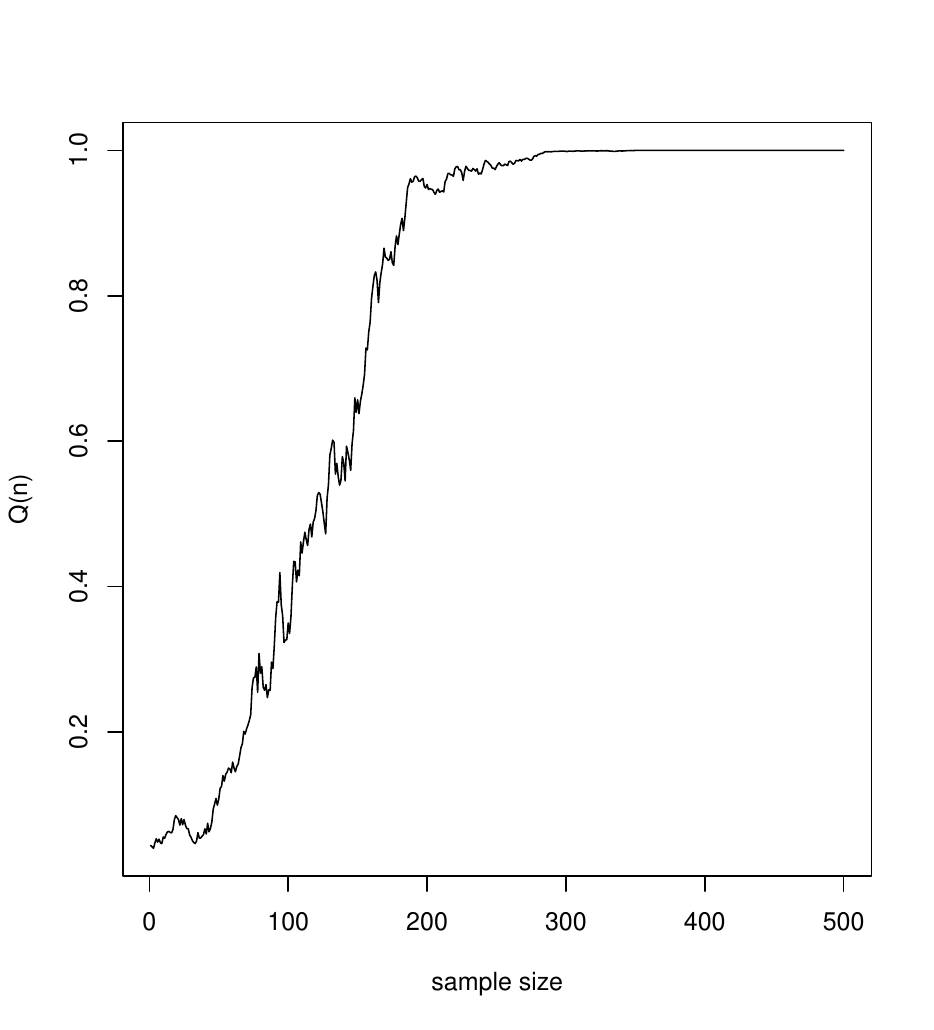}
  \caption{\textbf{Predicting the fixed-sample decision enables anytime-valid testing.} Shown is the evolution of the predicted rejection probability $(Q_n)$, defined as the probability that the fixed-sample test would reject at sample size $N$, conditional on data observed up to stage $n$ and assuming the null hypothesis is true. The experiment
  is testing $H_0:\theta=0$ versus $H_1:\theta>0$ for a normal mean $\theta$ with variance assumed known at 1. The value of $N$ is $500$ with Type I error of 0.05 for the sequential test.  The alternative value of $\theta=0.2$ is used.}
  \label{fig:main2}
\end{figure}

Figure~\ref{fig:main2} illustrates the evolution of $(Q_n)$ for a representative experiment where the null is false. Early in the experiment, the predicted probability of rejection is low, which reflects substantial uncertainty about the final outcome. As data accumulate, $Q_n$ increases and may cross a fixed threshold, $\gamma$, at which point the null hypothesis can be rejected immediately. This decision rule does not require a pre-specified protocol for interim analyses or adjustments based on how often the data are examined. 

The stopping threshold admits a direct probabilistic interpretation, where 
 early stopping occurs only when continued sampling is unlikely to alter the final fixed-sample decision. Section 4 of the SM details the properties of $(Q_n)$ under the null hypothesis which, under certain conditions, can be a martingale sequence. In this case, the martingale inequality can lead to the setting of a Type I error, using the Doob/Ville  inequalities, \cite{Ville39}, \cite{Doob53}. However, the martingale constraint does not lend itself to all type of test and indeed it is quite possible to control the Type I error, more accurately than the use of the martingale inequality, using Monte Carlo methods. 
 

Section 5 of the SM discuss the controlling of the Type I error which follow from rejecting the null hypothesis at sample size $n$ if ever $Q_n\geq \gamma$. The $\gamma$ is calculated based on the choice of Type I error at the maximal sample size. Denote this by $\wt$. For an overall anytime valid test to have Type I error $\alpha$ we need $\wt<\alpha$ and the reasonable default choice is $0.95\alpha$. With the choice of $\alpha$ and $\wt$ the $\gamma$ threshold can be calculated using Monte carlo methods. 
This choice of $\wt$ with $\alpha=0.05$ leads to around a 20\% saving in samples used when the null hypothesis is false for the typical examples we consider (see Section 6 of the SM). 
To allow for sequential monitoring without inflating false positives, the procedure uses a slightly more conservative rejection region than the classical test, i.e. $\wt<\alpha$. 
The practical cost of this conservatism in terms of power is minimal, to the order of a 0.5\% loss in power for typical settings. Any such small loss in sensitivity can be recovered by a minor increase in the maximum sample size, around 2\%  for the typical examples we consider. Importantly, these additional observations are only required if the experiment reaches the planned end of the original test without stopping. Formal definitions and calculations are given in Section 6 of the SM.


\subsection*{A canonical example: one-sided testing of a normal mean}

To make the construction concrete, consider a one-sided test of a normal mean with known variance, designed to reject the null hypothesis at sample size $N$ when the sample mean exceeds a fixed threshold; this example is considered in \citep{Spieg94}. At an interim stage $n < N$, the final test statistic can be written as a sum of the observed data and future, unobserved observations. Assuming the null hypothesis is true, the distribution of the future data is known, allowing the predicted rejection probability $Q_n$ to be computed analytically.

\begin{figure}[t]
  \centering
  \includegraphics[width=0.75\linewidth]{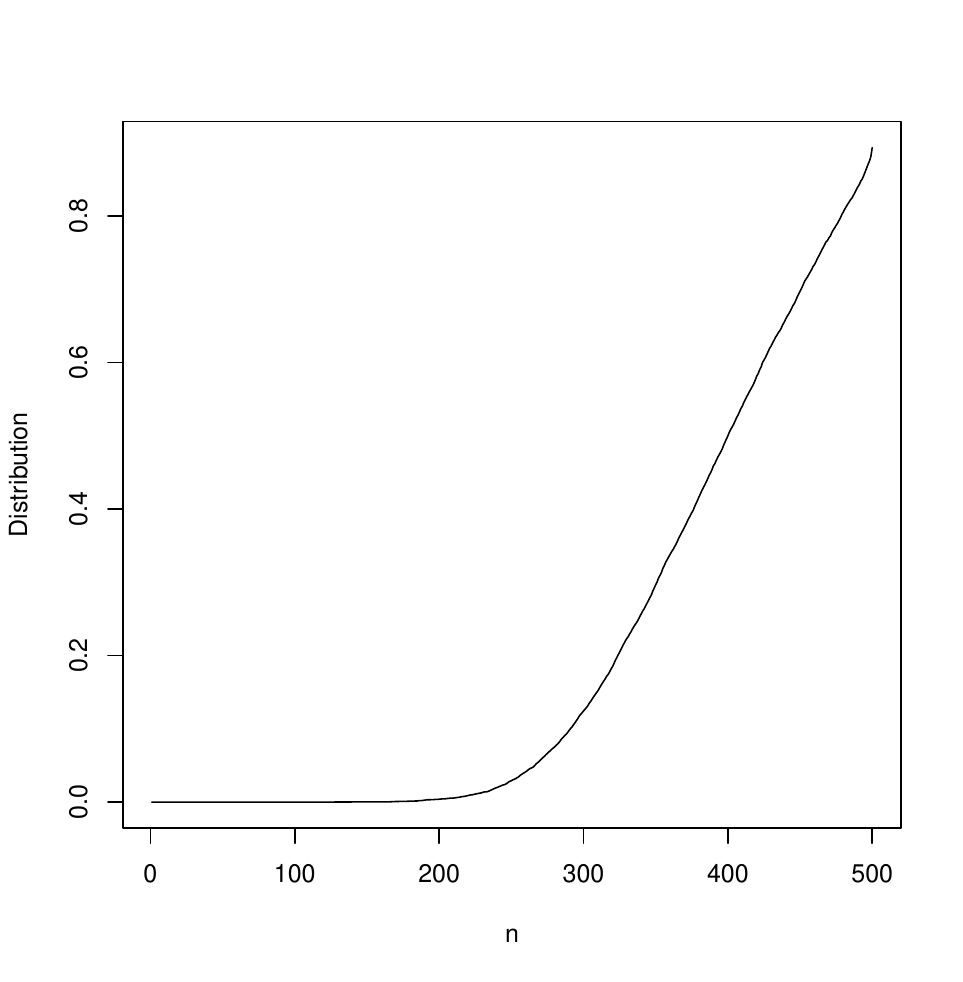}
  \caption{\textbf{
  Distribution of early stopping times.} Shown is the cumulative distribution of the early stopping times of the experiment of the same type as in Figure \ref{figsm1}. 
  The distribution records the sample size at which the $Q_n$ sequence crosses $0.95$ when the alternative hypothesis is true with $\theta^*=0.13$. 10,000 simulations were used to estimate the probability distribution.} 
  \label{fig:main3}
\end{figure}

As data accumulate, when the null hypothesis is false, the predicted probability that the fixed-sample test would reject the null hypothesis increases smoothly. When this probability exceeds the chosen threshold, the null hypothesis is rejected immediately. Figure~\ref{fig:main2} illustrates $Q_n$ for this example and Figure~\ref{fig:main3} shows the distribution of stopping-times under repeated experiments. The early stopping samples were simulated 10,000 times and the mean and median values are 384 and 388, respectively.
This illustrates how a familiar fixed-sample test can be transformed into an anytime-valid procedure, with substantial savings on the number of samples used when the null hypothesis is false, without altering its fundamental structure.
Closed-form expressions for \(Q_n\) in the one-sided Gaussian mean test, together with derivations, are provided in Section 7 of the SM.

\subsection*{Near-optimal power with minimal sample inflation}

Existing anytime-valid approaches, based on e-processes and likelihood-ratios, replace the fixed-sample test with a sequential statistic designed to ensure validity under optional stopping, replacing probabilities with the notion of betting-odds. This leads to a substantial loss in power, while affording sequential testing. 
Unlike e-processes, or Bayesian adaptive designs that monitor posterior probabilities, the predicted rejection probability $Q_n$ is defined entirely with respect to a classical fixed-sample test and the null hypothesis, ensuring explicit frequentist error guarantees and retaining near-optimal power. 

To match the power of the classical test, existing anytime-valid testing approaches often require a near doubling of the sample size, to approximately $2N$.  This has substantial impact on their utility in real-world settings, such as in clinical trials where sampling is costly and recruitment can be difficult. We find that this is not the case when sequential decisions are anchored to the original fixed-sample test. Across a range of standard testing problems, the predictive anytime-valid tests require only a small increase in maximal sample size, on the order of $2\%$, to ensure the power is at least as large as that of the corresponding fixed-sample tests (see Section 6 of the SM). A more detailed comparison with e-processes and betting-based constructions, including a discussion on power is given in Section 8 of the SM.

\begin{figure}[t]
  \centering
  \includegraphics[width=0.75\linewidth]{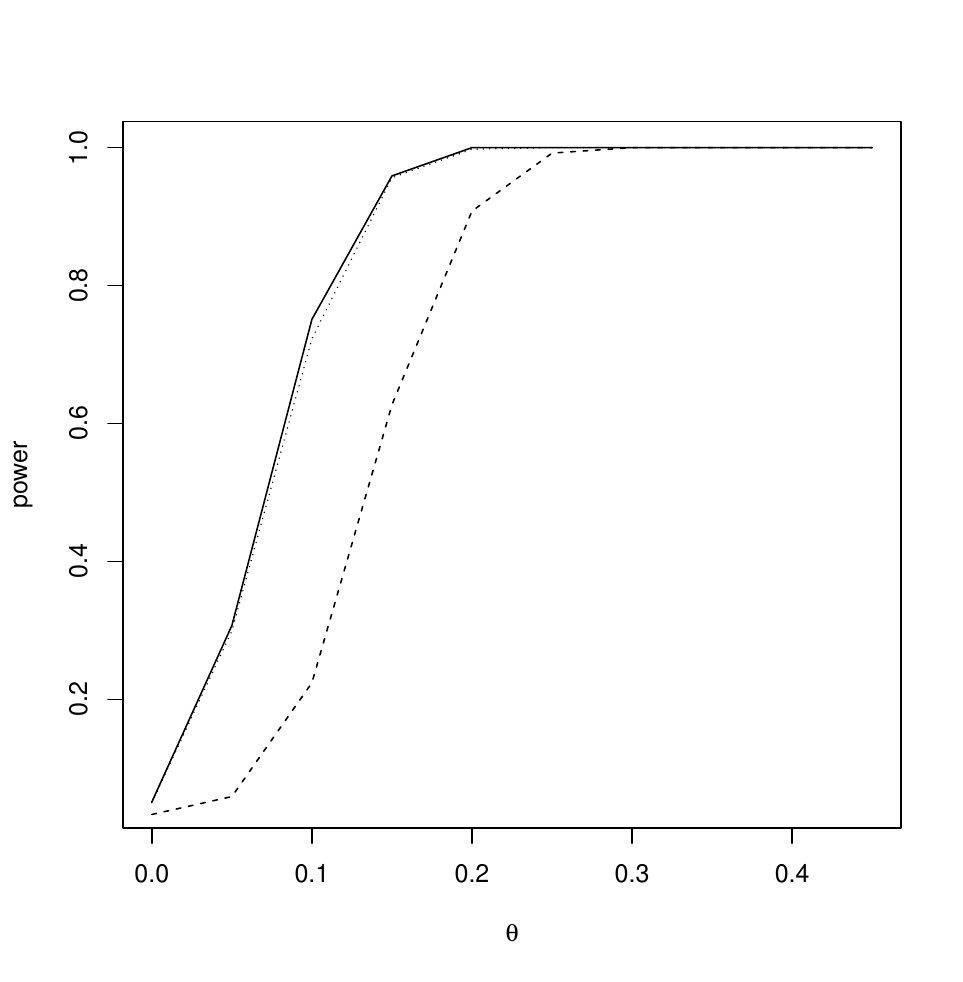}
  \caption{\textbf{Near-optimal power with minimal sample inflation.} 
  Power as a function of effect size is shown for a classical fixed-sample test (dotted line), the predictive anytime-valid test (bold line), and a representative anytime-valid likelihood-ratio–based method (dashed line); the latter two with a 2\% sample size increase. The predictive test closely tracks the power of the fixed-sample test across effect sizes, requiring only a small increase in maximal sample size, from $N=500$ for the original test to $N'=509$ for the sequential test, to recover classical power 
  while enabling early stopping and reducing expected sample usage under alternatives. The details of the experiment are the same as with Figures~\ref{fig:main2} and \ref{fig:main3}. The power functions are computed for varying $\theta$ values via simulation with Monte Carlo sample sizes of 10,000. }
  \label{fig:main4}
\end{figure}

Figure~\ref{fig:main4} compares power as a function of effect size for three procedures: the classical fixed-sample test, the predictive anytime-valid test, and a representative e-process anytime-valid likelihood-ratio based method. The details of the experiment are the same as for Figures~\ref{fig:main2} and \ref{fig:main3} save for the maximal sample size for the anytime-valid test and the predictive anytime-valid test being $N'=510$. The power functions are computed for varying $\theta$ values via simulation with Monte Carlo sample sizes of 10,000. The power curve of the predictive test closely tracks that of the fixed-sample test across effect sizes, while retaining the ability to stop early. In contrast, existing anytime-valid methods exhibit substantial power loss. Specifically, for $\theta^*=0.13$, the power values for the fixed-sample test, the predictive anytime-valid test, the e-process anytime-valid test are 0.90, 0.91, and 0.44, respectively.

Under alternatives where the null hypothesis is false, early stopping typically yields a substantive reduction in the expected number of samples required to reach a decision, often leading to around a $25\%$ saving in the mean number of required samples, see Section 6 of the SM. 
Thus, the predictive approach achieves flexibility in stopping without sacrificing efficiency. Notably, the same construction applies to nonparametric tests, such as distribution-free two-sample procedures, demonstrating that anytime validity need not rely on likelihood-ratio statistics that require a probability model under the alternative hypothesis. In practice, the predicted rejection probability can often be computed analytically when the null distribution of the test statistic is known, and efficiently approximated by Monte Carlo simulation otherwise. Because these simulations involve only forward sampling under the null hypothesis, they are straightforward to parallelize and scale to large experiments. 
A range of tests of varying types, including two sample tests, tests with composite null hypothesis and nonparametric tests are described in Section 9 of the SM.

\subsection*{Real-data illustration: early stopping in a clinical trial}

To illustrate predictive anytime-valid testing on real trial data, we analyze outcomes from the International Stroke Trial (IST), a large clinical trial investigating a number of aspects of care for individuals following a stroke. One of these is the effect of aspirin administered during the first 14 days after ischemic stroke; see \cite{Sander11}.  We focus on a comparison of death and censoring times between patients assigned to aspirin and those  not assigned to aspirin, using a standard two-sample nonparametric test, the two-sample log rank test, as the fixed-sample analysis. 
The group receiving aspirin had 9,071 individuals, and the other group had 10,326 individuals (i.e. 19,397 trial participants in total). The number of deaths recorded from the aspirin group is 1,897 and for the other group is 2,462. All remaining observations are censored with given censoring times. Full details of the experiment and the settings are provided in Section 10 of the SM. 

In the classical analysis, the null hypothesis of equal outcome distributions is tested after all patient data are observed, yielding a statistically significant difference in distribution of event times between groups. We treat this log rank fixed-sample test as the target decision and apply the predictive anytime-valid framework to the same data, analyzed sequentially in the order in which patient outcomes become available. 

At each interim stage $t$, for a discrete set of $t$ values, we compute the predicted rejection probability \(Q_t\), defined as the probability that the fixed-sample test would reject the null hypothesis at its maximal sample size, conditional on the data observed to time $t$ and assuming the null hypothesis is true. Early stopping is permitted only when the probability that the fixed-sample test would reject the null hypothesis falls above the prespecified threshold $\gamma$. 


In this illustration we observe substantial early stopping. Figure~\ref{fig:main5} shows the evolution of \(Q_t\) over the course of 15 months. Under the observed data, the predicted rejection probability increases as patient outcomes accumulate and crosses the stopping threshold, of $\gamma = 0.95$, at 7 months, quite well before the final sample size is reached. Importantly, the sequential procedure reaches the same qualitative conclusion as the classical fixed-sample test, but does so much earlier and without inflating the Type I error.

\begin{center}
\begin{figure}[!htbp]
\begin{center}
\includegraphics[width=0.75\linewidth]{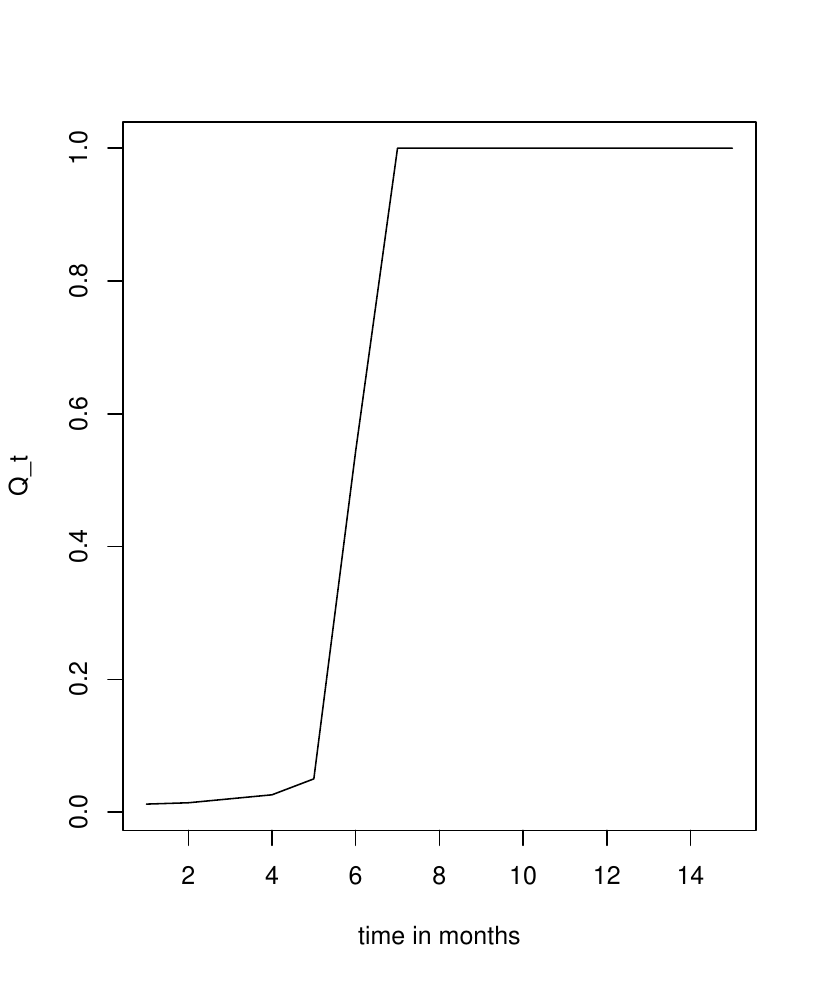}
\caption{\textbf{Predictive anytime-valid analysis of a clinical trial.}
    Predicted rejection probability \(Q_t\) for the International Stroke Trial (IST),
    comparing death including censoring times for patients assigned to aspirin for the first 14 days of the trial versus  not assigned to aspirin. Here the $t$ are represented by a discrete set of time points which are set over 15 months following the start of the trial.
    Full details of the experiment are presented in Section 10 of the Supplementary Materials.
    At each interim stage, \(Q_t\) represents the probability that the original
    fixed-sample two-sample test would reject the null hypothesis at the planned
    sample size, conditional on the data observed to time $t$ and assuming the null
    hypothesis is true. The predicted rejection probability crosses the stopping
    threshold before the final sample size is reached, allowing early stopping
    while preserving the calibration and conclusion of the fixed-sample analysis. }
\label{fig:main5}
\end{center}
\end{figure}
\end{center} 

This example demonstrates that predictive anytime-valid testing can be applied directly to real clinical trial data using familiar test statistics. The framework enables safe interim monitoring and early stopping for efficacy while preserving the integrity, interpretation, and calibration of the original fixed-sample analysis.
Additional details on the International Stroke Trial analysis, including data preprocessing, test specification, null completion, and sequential monitoring, are provided in Section 10 of the Supplementary Materials. 


Describing the remainder of the SM, in Section 12 we show how to derive repeated confidence intervals, in the spirit of \cite{Jennison89}, from the tests appearing in Section 9. Section 13 also follows \cite{Jennison89} by considering tests which rely on paths of statistics which behave as Brownian motion.
Finally, Section 14 presents anytime valid bootstrap tests which can be shown to exist whenever a test has a fixed sample bootstrap version.   

\section*{Discussion}

Sequential data generation and streaming data are defining features of modern applications, yet classical hypothesis testing frameworks, developed over 90 years ago, remain anchored to single test decisions at a maximal sample size. Existing sequential methods come with compromises. Our results show that the mismatch between theory and the practical needs of sequential testing is not fundamental. By re-framing sequential testing as a prediction problem, targeting the outcome of a fixed-sample test, while avoiding martingales and the martingale inequality, we demonstrate that anytime-valid error control, statistical efficiency, and broad applicability can be achieved simultaneously.

Our procedure turns any fixed-sample size test with a hypothesis into an anytime-valid sequential version of the test. Introducing a threshold for the predictive probabilities we can secure a Type I error at a chosen $\alpha$, and match the fixed-sample size power with substantial sample savings, by allowing for a modest increase in the maximal sample size. Alternatively, the maximal sample size of the sequential test can match that of the fixed sample size test and lose a small amount of power. The procedure is simple, requiring one user set parameter, $\wt<\alpha$.    

The central conceptual shift is to focus on the \emph{final decision} implied by a fixed-sample test rather than on accumulating evidence through intermediate statistics. At any interim stage, uncertainty about the test outcome arises solely from future, unobserved data. Treating this uncertainty explicitly, by predicting whether the fixed-sample test would reject under the null hypothesis, yields a sequential decision rule that remains valid under arbitrary stopping. Extensions to stopping for futility, when evidence accumulates that the null hypothesis is true, follow a similar predictive procedure and are considered in Section 11 of the SM. 

Prevailing wisdom has suggested that robustness to optional stopping necessarily comes at the cost of substantial efficiency losses, as reflected in existing anytime-valid and safe testing constructions \citep{Vovk21,Grunwald24}. Our results show that this belief is not intrinsic to anytime-valid inference, but rather reflects the limitations of existing constructions that replace fixed-sample tests with alternative sequential statistics based on likelihood-ratios.

The predictive perspective also clarifies the relationship between frequentist and Bayesian approaches to adaptive experimentation. Bayesian adaptive trials often monitor the posterior probability that a treatment will be effective at the planned end of the study \citep{berry2010bayesian,Spieg94}. While conceptually appealing, such procedures depend on specifying a prior and do not, in general, yield explicit guarantees on frequentist Type I error. Predictive anytime-valid testing can be viewed as providing an analogous forward-looking assessment of the final decision, but one that is anchored to a classical hypothesis test and yields objective probabilities with rigorously controlled error rates. In this sense, the framework here offers a bridge between the intuitive appeal of Bayesian prediction and the operating-characteristic guarantees required in many scientific and regulatory contexts.

The generality of the framework is particularly important. Because the construction operates on the distribution of the test statistic under the null hypothesis, it applies to a wide range of classical tests, including nonparametric procedures for which likelihood ratios are unavailable or ill-defined. This substantially expands the scope of anytime-valid inference.

These results have immediate practical implications in domains where data are monitored continuously and decisions must be made under uncertainty. In clinical trials, the ability to stop early for efficacy while maintaining rigorous error control can reduce patient exposure to inferior treatments and accelerate evaluation of effective therapies. More broadly, in adaptive experimentation and other forms of streaming data analysis, predictive anytime-valid testing offers a principled alternative to informal monitoring practices that are widespread but statistically unsound. 

Taken together, our findings show that classical fixed-sample inference and modern sequential decision-making are not competing paradigms, but can be unified within a single coherent framework.


\section*{Acknowledgments}
ChatGPT was used by the authors as part of the manuscript review. C.H. gratefully acknowledges support and interactions with Novo Nordisk and GSK through the University of Oxford that contributed to the background motivation for this research. Novo Nordisk and GSK had no role in the study design, analysis, interpretation, or manuscript preparation.

\clearpage
\section*{}

\begin{center}
\vspace*{0.2cm}
{\LARGE\bfseries Supplementary Materials}
\vspace{0.2cm}
\end{center}

\begingroup
\setlength{\cftsecnumwidth}{2.5em} 
\setlength{\cftsubsecnumwidth}{3em} 
\etocsetnexttocdepth{2}
\localtableofcontents
\endgroup












\setcounter{section}{0}
\setcounter{subsection}{0}
\renewcommand{\thesection}{S\arabic{section}}
\renewcommand{\thesubsection}{S\arabic{section}.\arabic{subsection}}

\setcounter{figure}{0}
\renewcommand{\thefigure}{S\arabic{figure}}

\setcounter{table}{0}
\renewcommand{\thetable}{S\arabic{table}}

\renewcommand{\theHsection}{S\arabic{section}}
\renewcommand{\theHsubsection}{S\arabic{section}.\arabic{subsection}}
\renewcommand{\theHfigure}{S\arabic{figure}}
\renewcommand{\theHtable}{S\arabic{table}}



\def\wt{\widetilde{\alpha}}
\def\half{\small{1\over2}}






\section{Background and notation}
\label{sec:S0}

Anytime valid testing is the idea of performing a  test as often and as frequently as desired without incurring an explosion in errors.   
Further, the power of the anytime valid testing procedure up to a maximal sample size $N$ should not be significantly reduced  from the corresponding fixed sample test of size $N$.
When comparing anytime valid tests with a fixed sample size test at sample size  $N$, the power for the anytime valid test will be reduced. The compensation for this is the possible early stopping of the experiment. 
In short, there is for anytime valid tests a balancing act to be performed between the degree of early stopping and the overall power of the test.
It goes without saying that the more early stopping is encouraged,  so the power is reduced. However, not necessarily by large amounts. 
The only way to maintain the same power as the fixed sample size test at $N$ is to not allow for early stopping. That is, any test performed at sample size $n<N$ will not reject the null hypothesis.

As far as we can ascertain, all current anytime valid testing procedures rely on sequences of statistics which are martingales.
The Type I errors are controlled by use of the martingale inequality.  This inequality gives the probability for a positive martingale sequence to remain bounded by a particular value. This probability can be used to set the Type I error.

The need for a martingale sequence of statistics is a serious constraint on the type of test which can be adapted to an anytime valid test. The main types are those based on simple likelihood ratio tests and tests for which the distribution of the data under the null hypothesis is fully specified. That is, there are no nuisance parameters. This would appear to exclude the well known Student -$t$ test which does not assume a known variance when testing for a normal mean.  Nonparametric tests are also problematic when requiring martingales. Another type of test which is problematic involve two sample tests when the common parameter value is unspecified. 

Anytime valid testing dates back to the 1950s, see \cite{Robbins70} and \cite{Robbins71}, which relied on martingales.  An alternative  martingale approach was introduced in \cite{Lan1982} and further developed by \cite{Koning2026}. Our major contribution is to change the perspective of the martingale construction and to view it as one of prediction. The essence of the \cite{Lan1982} approach is to predict the outcome of the test at sample size $N$ at any sample size $n<N$. The martingale approach requires a very strict version of the prediction. On the other hand, we  focus on the prediction of the test statistic at sample size $N$,  in a natural way, which could include the estimation of nuisance parameters. The key is that we do not need to find Type I errors using the martingale inequality; instead, we  can find suitable critical regions using Monte Carlo methods.

Group sequential methods also do not require martingales for the sequence of statistics. However, they are only able to find critical regions for a limited number of interim tests. For example, the Tables in \cite{Jennison89} give 10 critical values for 10 interim analyses. And even then they are only based on tests involving normal models. The issue with the group sequential methods is that the sequence of test statistics need their own critical value and this makes Monte Carlo methods for finding suitable values very difficult. With our approach, we have a standardized sequence of statistics and so only need a single critical value which can be found using Monte Carlo methods.

\subsection{Notation}
We use the following notation: 
\begin{itemize}
    \item $T_N$ --  the test statistic for the fixed maximal sample test with sample size $N$. 
    \item $\mathbb{P}_0$ the distribution of the $X_{1:N}$ defined by the null hypothesis when it is known in full. We write
    $\mathbb{P}_0(T_N\in\mathcal{C})$, for example, where $\mathcal{C}$ would represent a critical region.
    \item $X_{i:j}$ -- the sequence of data $\{X_i, X_{i+1}, \ldots, X_{j}\}$
    \item $\mathcal{F}_n=\sigma(X_1,\dots,X_n)$ -- the filtration generated
by the observed data up to the sample size $n$.
    \item $T_N^{(n)}$ -- the predicted value of $T_N$ of the test statistic at sample size $N$ given partial information from $X_{1:n}$ at $0\leq n \leq N$.
    \item $P$ is the distribution of the $X_{n+1:N}$ given $X_{1:n}$ when the null hypothesis not not fully define the distribution of the future sample. We write
    $P(T_N^{(n)}\in\mathcal{C})$, for example.
    \item $\alpha$ -- the Type I error of a test.
    \item $\mathcal{C}_{\alpha}$ -- the critical region of a fixed sample test with Type I error $\alpha$. We will often simplify to use $\mathcal{C}$ unless unclear. 

    \item $Q_n$ -- the predictive probability that the fixed sample test will reject the null hypothesis at $N$ given partial information  $\mathcal{F}_n$. 
    \item $\gamma$ -- the user set threshold probability for early stopping if $Q_n \ge \gamma$.
    \item $\tilde{\alpha} <\alpha$ -- a value used in the prediction probability of the sequential test to ensure non-inflation of Type I error; i.e. $Q_n=P(T_N^{(n)}\in \mathcal{C}_{\widetilde{\alpha}})$. The $\gamma$ and $\wt$ are set to ensure
    $P(\max_n Q_n\geq\gamma)=\alpha$.
\end{itemize}

\section{Problem setup}
\label{sec:S1}

We observe data sequentially, say $X_1,X_2,\dots$ and consider a classical hypothesis
test defined at a pre-specified maximal sample size $N$.
The fixed-sample test is specified by:
(i) a test statistic $T_N = T(X_{1:N})$ and
(ii) a critical rejection region $\mathcal{C}_{\alpha}$ such that
\[
\mathbb{P}_0(T_N \in \mathcal{C}_{\alpha}) \le \alpha ,
\]
where $\mathbb{P}_0$ denotes the probability of $T_N$ under the null hypothesis. Of course it is preferable for reasons of power for $\mathcal{C}_\alpha$ to provide a probability of $\alpha$.

The goal of predictive anytime-valid testing is to allow continuous monitoring, as data accrue,
and early stopping of the experiment through a rejection of the null hypothesis, while preserving the calibration, interpretation, and
scientific meaning of the original fixed sample size test.

Throughout, $\mathcal{F}_n=\sigma(X_1,\dots,X_n)$ denotes the filtration generated
by the observed data up to the sample size $n$. The goal is to predict the event $\{T_N\in \mathcal{C}\}$, for any $\mathcal{C}$, conditional on $\mathcal{F}_n$ and for each $n$. This can only be achieved by the construction of a conditional distribution of the missing part of the sample, i.e. $X_{n+1:N}$ conditional on what has been observed, i.e. $X_{1:n}$. This is in many cases going to be defined by the null hypothesis. Hence, for example, if $\mathbb{P}_0$ defines an independent and identically distributed (i.i.d.) sequence on $X$ then the $X_{n+1:N}$ will be taken to be distributed as i.i.d. from $\mathbb{P}_0$.

If the current observed sample size is $n<N$, the corresponding uncertainty 
about any decision is created by the missing $X_{n+1:N}$. The $X_{1:n}$ has been observed, and contains no uncertainty.  Quantification of the uncertainty, using probability, follows via a probability model for the missing data given what has been seen; i.e.
\begin{equation}\label{predictive}
p(X_{n+1:N}\mid X_{1:n}).
\end{equation}
Although we use a $\mid$ in the usual conditional probability sense, it is to be understood that this merely reflects for us that the $X_{1:n}$ have been observed and is given and used to construct a joint probability model for $X_{n+1:N}$. There is no implication of a joint model for $X_{1:N}$. Indeed, we are not modeling what has been seen. 

We argue that (\ref{predictive}) is crucial to the modeling of decision uncertainty as viewed in terms of complete datasets. It is also the foundation of the conventional Bayesian approach which views (\ref{predictive}) more in terms of data structures than density functions (see \cite{deFinetti37, Doob49, Fong2023}). However, while the Bayesian approach would be regarded as the standard tool for a joint predictive model, the appropriate predictive when considering the assessment of uncertainty under a null hypothesis is the distribution defined by the hypothesis. So, if the null hypothesis defines
$\mathbb{P}_0(X_1,\ldots,X_N)$ then the predictive (\ref{predictive}) for us will be given by
$\mathbb{P}_0(X_{n+1:N}\mid X_{1:n})$ and this predictive is available for some standard tests explicitly.
In this case, the standard approach is to define
$$Q_n=\mathbb{P}_0(T_N\in\mathcal{C}\mid\mathcal{F}_n).$$
Standard math yields this sequence of $(Q_n)$ as a martingale and hence the errors can be bounded using the martingale inequality.
We use a more general definition of the prediction $Q_n$ by defining the predicted statistic at sample size $N$  given samples up to size $n$ by $T_N^{(n)}$.
Our more general definition of $Q_n$ is given by $$Q_n=P\bigg(T_{N}^{(n)}\in\mathcal{C}\bigg).$$ 
 Under this definition there is no notion that the sequence is a martingale. However, we are going to be able to determine the critical value $\gamma$ for which $P(\max_{n\leq N} Q_n\geq\gamma)=\alpha$ using Monte Carlo methods. Note that, under the martingale definition, it is that $P(\max_{n\leq N}  Q_n\geq\gamma)\leq Q_0/\gamma$ and so $\gamma$ is set to be $Q_0/\alpha$.
 
 {\bf Inadmissibility of martingale inequality based approaches:} We can immediately see that if $(Q_n)$ is a martingale and the martingale inequality is used to obtain the Type I error, we end with
 $P(\max_{n\leq N} Q_n\geq\gamma)\leq\alpha$, whereas using Monte Carlo methods we can find the $\gamma$ for which $P(\max_{n\leq N} Q_n\geq\gamma)=\alpha$. The $\gamma$ will be smaller in the latter case and hence the power will be larger. This demonstrates the anytime valid test using the martingale inequality is inadmissable.


\section{Predictive rejection probability}
\label{sec:S2}

This section builds on Section \ref{sec:S1} and introduces the predictive rejection probability used throughout the
paper, and explains the role of the predictive critical region.

\subsection{Definition}
\label{sec:S2.1}

To assess at an interim stage $n<N$  whether the fixed-sample test defined in Section \ref{sec:S1} would ultimately reject the null hypothesis,  
we treat the remaining observations $X_{n+1:N}$ as missing data and generate completions
$X'_{n+1},\dots,X'_N$ from the null distribution $\mathbb{P}_0$ (or $P$, a null-consistent resampling
scheme for composite or nonparametric settings).
Define the predicted completed-sample statistic
\[
T_N^{(n)} = T(X_{1:n},X'_{n+1:N}).
\]
Hence, for each $n$, $T_N^{(n)}$ is a random variable. For completeness we define $T_N^{(0)}$ to be the random variable with the same distribution as $T_N$ under the null hypothesis and $T_N^{(N)}=T_N$ as the final test statistic derived from the completed experiment. 

\begin{center}
\begin{figure}[!htbp]
\begin{center}
\includegraphics[width=14cm,height=6cm]{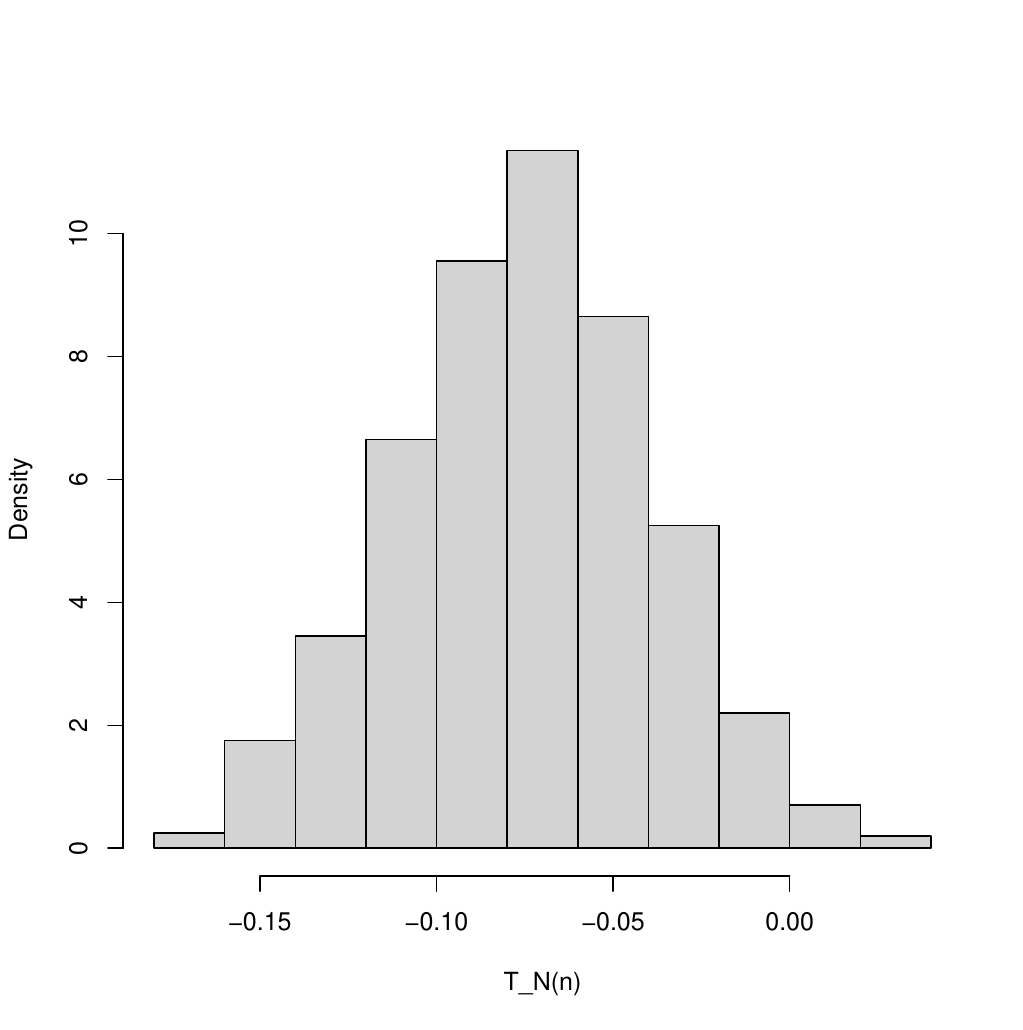}
\caption{Simulations of $X'_{n+1:N}$ leading to $T_N^{(n)}$ samples, with $n=200$ and $N=500$.}  
\label{figsm10}
\end{center}
\end{figure}
\end{center}

A histogram of $T_N^{(n)}$ is presented in Fig.\ref{figsm10} based on a null hypothesis for a normal mean being 0 with known variance 1.
So 
$$T_N^{(n)}=N^{-1}\left(n\,T_n+\sum_{i=n+1}^N X_i'\right),$$
where $T_n=n^{-1}\sum_{i=1}^n X_i$ and the $X'_{n+1}$ are i.i.d. normal with mean 0 (the null hypothesis value for a normal mean) and variance 1 (the assumed known variance). 
Here $N=500$ and $n=200$. 
Repeated sampling of the $X'_{n+1:N}$ yield the uncertainty in $T_N^{(n)}$ and from this we can derive $Q_n=P(T_{N}^{(n)}\in\mathcal{C})$. 

\begin{definition}
The \emph{predictive rejection probability} given a critical region $\mathcal{C}$ is defined as
\[
Q_n = P\!\left(T_N^{(n)}\in\mathcal{C}\right)
\]
with $Q_0=P(T_N\in\mathcal{C})$ and $Q_N={\bf 1}(T_N\in\mathcal{C})$.

\end{definition}

\noindent
Intuitively, $Q_n$ is the predictive probability, under the  null hypothesis, that
the test using critical region $\mathcal{C}$ would reject at sample size $N$, conditional on
the data observed so far and represented by the predicted test statistic. 

\vspace{0.1in}
\noindent
{\sc Normal mean example with known variance}. To illustrate a sequence of $(Q_n)$, consider the test of a normal mean with known unit variance, i.e. $H_0:\theta=0$ versus $H_1:\theta>0$. The standard test statistic for a sample of size $N$ is $T_N=\sqrt{N}\bar{X}_N$ and the Type I error $\alpha$ defines the critical region through $\mathcal{C}_\alpha=(z_{\alpha},\infty)$, where $z_\alpha=\Phi^{-1}(1-\alpha)$. Now
$$T_N^{(n)}=\sqrt{n/N}\,T_n+\sqrt{1/N}\,\sum_{i=n+1}^N X_i'$$
where the $X'_{n+1:N}$ are i.i.d. normal with zero mean and unit variance and $T_n=\sqrt{n}\,\bar{X}_n$. Hence,
$$Q_n=1-\Phi\left(\frac{\sqrt{N}\,z_\alpha-\sqrt{n}T_n}{\sqrt{N-n}}\right),\quad n=0,1,\ldots,N-1.$$
Note that $Q_0=\alpha$ and $Q_N$ will be 1 or 0 depending on whether $T_N\in \mathcal{C}$ or not. In this case, the $(Q_n)$ forms a martingale sequence.

\vspace{0.1in}
\noindent
{\sc Normal mean example with unknown variance}.
This normal case is an example of when the $X_{n+1:N}$ are defined by the null hypothesis. On the other hand, suppose we are in the situation of testing for a normal mean to be 0 when the variance is unknown. So now the standardized statistic is
$T_N=\sqrt{N}\bar{X}_N/S_N$
where $S_N$ is the sample variance. It is well known that $T_N$ is standard Student $t$ distributed with $N-1$ degrees of freedom. 
We define the predicted test statistic at $N$ given samples $X_{1:n}$ to be
$$T_N^{(n)}=\sqrt{N}\bar{X}_N/S_n,$$
where we take $X'_{n+1:N}$ to be i.i.d. normal with zero mean and variance $S_n^2$. Hence, for some independent standard normal random variable, we have
$$T_N^{(n)}=\sqrt{n/N}\,T_n+\sqrt{1-n/N}\,Z,$$
where $T_n=\sqrt{n}\bar{X}_n/S_n$,
giving
$$Q_n=1-\Phi\left(\frac{c\sqrt{N}-\sqrt{n}T_n}{\sqrt{N-n}}\right)$$
for some critical value choice $c$. Note that since the $(T_n)$ do not depend on the unknown $\sigma$ neither do the $(Q_n)$ and so the distribution of $(Q_2,\ldots,Q_N)$ can be established via Monte Carlo simulation using any choice of $\sigma$. Note in this case the sequence $(Q_n)$ is not a martingale.

\subsection{Monte Carlo computation of the predictive rejection probability}

In many settings the conditional distribution of the predicted sample statistic
$T_N^{(n)}$ under the null hypothesis is not available in closed form.
In such cases, the predictive rejection probability $(Q_n)$ can be computed by Monte
Carlo simulation.
Conditional on the observed data $\mathcal{F}_n$, we can generate independent replicates
of the unobserved future data
$X_{n+1}(b),\dots,X_N(b) \sim P$, $b=1,\dots,B$,
according to the null hypothesis which defines $P$, and which may include parameter estimation as in the Student $t$ example.
Each replicate is combined with the observed data to form a completed-sample
statistic $T_N^{(n)}(b)$.
The predictive rejection probability is then approximated by
\[
\widehat Q_n
= \frac{1}{B}\sum_{b=1}^B
\mathbf{1}\!\left\{T_N^{(n)}(b) \in C\right\}.
\]
This estimator is unbiased for $Q_n$ conditional on $\mathcal{F}_n$, and its accuracy
can be controlled by the number of replicates $B$.
In practice, modest values of $B$ (e.g.\ a few thousand) are sufficient, depending on the value of the stopping criteria. 






\subsection{Sequential stopping rule}

Given a threshold critical value $\gamma \in (0,1)$, 
the stopping time for the experiment $\tau$ is defined by
\[
\tau = \min\{n \le N : Q_n \ge \gamma\},
\]
where $N$ is the maximum sample size.
If no such $n$ occurs before $N$ observations are collected, the experiment automatically stops at
$N$. 

\subsection{Predictive critical region}

On moving from a fixed sample size test at $N$ with Type I error $\alpha$ to an anytime valid test we need to define $Q_n$ using an $\wt<\alpha$. For if we define $Q_n=P(T_N^{(n)}\in \mathcal{C}_\alpha)$ then, for any $0<\gamma<1$, it is that $P(Q_N\geq\gamma)=P(T_N\in \mathcal{C}_\alpha)=\alpha$.
Hence, the overall Type I error will exceed $\alpha$ making it necessary to define $Q_n$ via $P(T_N^{(n)}\in\mathcal{C}_{\wt})$ for a $\wt<\alpha$.  
The region $C_{\tilde{\alpha}}$ acts as a predictive proxy for the original
rejection region.
It is used only within the predictive calculation of $(Q_n)$ and does not define
a new hypothesis test. As shown in Section \ref{sec:S5} of the SM , setting $\tilde{\alpha} <\alpha $ and hence $C_{\tilde{\alpha}}$ in this way guarantees that the Type I error of the sequential procedure is  $\alpha$, matching that of the original test. Indeed, using Monte Carlo simulation we can find the threshold $\gamma$ so that the Type I error is precisely $\alpha$.
Fig.~\ref{fig:S-critical-regions} illustrates this relationship for a canonical
one-sided normal test, where $C_{\tilde{\alpha}}$ corresponds to a slightly tightened
upper-tail cutoff. The Fig.~\ref{fig:S-critical-regions} assumes that $\wt$ has been constructed as $\alpha\gamma$ for some $\gamma<1$.
It is important to stress that while the sequential test uses critical region, $C_{\tilde{\alpha}}$, the procedure has Type I error control of $\alpha$ (see Section \ref{sec:S5} of the SM).

\begin{figure}[t]
\centering
\begin{tikzpicture}

\def\za{1.645}
\def\zag{1.670}

\begin{groupplot}[
  group style={group size=1 by 2, vertical sep=1.0cm},
  width=0.85\linewidth,
  height=5cm,
  domain=-3:3,
  samples=250,
  axis lines=left,
  ylabel={density},
  ymin=0,
  ymax=0.45,
  no markers
]

\nextgroupplot[
  xlabel={$z$},
  xmin=-3, xmax=3,
  xtick={-3,-2,-1,0,1,2,3},
  xticklabels={-3,-2,-1,0,1,2,3}
]

\addplot {1/sqrt(2*pi)*exp(-x^2/2)};

\addplot[dashed] coordinates {(\za,0) (\za,0.45)};
\addplot[dashed] coordinates {(\zag,0) (\zag,0.45)};

\addplot [draw=none, fill=gray, fill opacity=0.25, domain=\za:3]
{1/sqrt(2*pi)*exp(-x^2/2)} \closedcycle;

\addplot [draw=none, fill=gray, fill opacity=0.45, domain=\zag:3]
{1/sqrt(2*pi)*exp(-x^2/2)} \closedcycle;

\node[anchor=south west] at (axis cs:\za,0.20) {$z_{1-\alpha}\approx \za$};
\node[anchor=south west] at (axis cs:\zag,0.14) {$z_{1-\alpha\gamma}\approx \zag$};

\nextgroupplot[
  xlabel={$z$},
  xmin=1, xmax=3,
  xtick={1,1.5,2,2.5,3},
  xticklabels={1,1.5,2,2.5,3}
]

\addplot {1/sqrt(2*pi)*exp(-x^2/2)};

\addplot[dashed] coordinates {(\za,0) (\za,0.45)};
\addplot[dashed] coordinates {(\zag,0) (\zag,0.45)};

\addplot [draw=none, fill=gray, fill opacity=0.25, domain=\za:3]
{1/sqrt(2*pi)*exp(-x^2/2)} \closedcycle;

\addplot [draw=none, fill=gray, fill opacity=0.45, domain=\zag:3]
{1/sqrt(2*pi)*exp(-x^2/2)} \closedcycle;

\node[anchor=north west] at (axis cs:1.05,0.42)
{$C_{\alpha\gamma}\subset C_\alpha$};

\end{groupplot}
\end{tikzpicture}

\caption{\textbf{Illustration of the predictive critical region (one-sided).}
Under a standard normal null, the original fixed-sample rejection region is
$C_\alpha=\{Z\ge z_{1-\alpha}\}$.
The predictive calculation uses the slightly more conservative region
$C_{\alpha\gamma}=\{Z\ge z_{1-\alpha\gamma}\}$, so that
$C_{\alpha\gamma}\subset C_\alpha$.
The upper plot shows the two regions for $\alpha=0.05$ and $\gamma=0.95$, and the lower panel zooms in on the region $z\in[1,3]$. }
\label{fig:S-critical-regions}
\end{figure}

\section{Properties of $(Q_n)$}
\label{sec:S3}

When we have $P\equiv \mathbb{P}_0$, and the best illustration of this is the comparison of the $z$ and Student $t$ tests, then we have
$Q_n=P(T_N\in\mathcal{C}\mid \mathcal{F}_n)$ and as is well known such sequences are martingales. For the sake of completeness, we present the proof.

\begin{proposition}
Assume 
$X_{n+1:N}$ arise from the null model $\mathbb{P}_0$ conditional on $X_{1:n}$.
Then $(Q_n)_{n=0}^N$ is a martingale with respect to $(\mathcal{F}_n)$.
\end{proposition}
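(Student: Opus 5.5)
The plan is to identify $Q_n$ with a conditional expectation of a single fixed, bounded random variable. Then the martingale property is just the tower property of conditional expectation along the filtration $(\mathcal{F}_n)$.

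\textbf{Step 1: identify $Q_n$.} Under the hypothesis, the completions $X'_{n+1:N}$ are drawn from $\mathbb{P}_0(X_{n+1:N}\in\cdot\mid X_{1:n})$. I would first check that
\[
Q_n=P\bigl(T(X_{1:n},X'_{n+1:N})\in\mathcal{C}\bigr)=\mathbb{P}_0\bigl(T_N\in\mathcal{C}\mid\mathcal{F}_n\bigr)=\mathbb{E}_0[Y\mid\mathcal{F}_n]\quad\text{a.s.},
\qquad Y:={\bf 1}(T_N\in\mathcal{C}).
\]
This is the one point needing care. Write $T_N=T(X_{1:n},X_{n+1:N})$ and use a regular conditional distribution $\kappa_n(x_{1:n},\cdot)$ of $X_{n+1:N}$ given $X_{1:n}$ under $\mathbb{P}_0$. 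This exists because the data take values in a Polish space such as $\mathbb{R}^N$. The disintegration (``freezing'') lemma then gives
\[
\mathbb{E}_0\bigl[g(X_{1:n},X_{n+1:N})\mid\mathcal{F}_n\bigr]=\int g(X_{1:n},y)\,\kappa_n(X_{1:n},dy)
\]
for bounded measurable $g$. Applying this with $g={\bf 1}(T(\cdot)\in\mathcal{C})$ shows that the right-hand side is exactly the simulation probability defining $Q_n$. The endpoint cases follow directly: $Q_0=\mathbb{P}_0(T_N\in\mathcal{C})$ since $\mathcal{F}_0$ is trivial, and $Q_N=Y$ since $Y$ is $\mathcal{F}_N$-measurable. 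Both agree with the Definition.

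\textbf{Step 2: adaptedness and integrability.} $Q_n$ is $\mathcal{F}_n$-measurable because it is a measurable function of $X_{1:n}$. It also satisfies $0\le Q_n\le 1$, so it is integrable.

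\textbf{Step 3: martingale property.} For $0\le n<N$, the inclusion $\mathcal{F}_n\subseteq\mathcal{F}_{n+1}$ and the tower property give
\[
\mathbb{E}_0[Q_{n+1}\mid\mathcal{F}_n]=\mathbb{E}_0\bigl[\mathbb{E}_0[Y\mid\mathcal{F}_{n+1}]\mid\mathcal{F}_n\bigr]=\mathbb{E}_0[Y\mid\mathcal{F}_n]=Q_n .
\]
Hence $(Q_n)$ is a Doob martingale, closed by $Y$.

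The main obstacle is Step 1. The paper's $P$ is described operationally, as ``simulate the future given the past.'' The proof therefore has to make precise that this simulation kernel coincides with $\mathbb{P}_0$'s conditional law at every stage $n$ simultaneously, i.e.\ that it comes from one joint law. That consistency is exactly the hypothesis of the proposition. It is also the condition that fails in the Student $t$ example, where the plug-in variance $S_n^2$ changes with $n$.
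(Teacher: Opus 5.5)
Your proposal is correct and is essentially the paper's argument. The paper computes $E(Q_{n+1}\mid X_{1:n})$ as an iterated integral and collapses it with $p_0(x'_{n+2:N}\mid x_{n+1},X_{1:n})\,p_0(x_{n+1}\mid X_{1:n})=p_0(x_{n+1},x'_{n+2:N}\mid X_{1:n})$, which is the density-level form of the tower property you apply after identifying $Q_n=\mathbb{E}_0[\mathbf{1}(T_N\in\mathcal{C})\mid\mathcal{F}_n]$. Your kernel-based version does not presuppose densities, and it covers the endpoints $Q_0$ and $Q_N$ in the same argument.
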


\begin{proof}
With an observed sample size of $n$ we will provide a complete dataset by taking $X'_{n+1:N}$ to be from the null density $p_0$ conditional on $X_{1:n}$.
We then define 
$$Q_n=P\big(T(X_{1:n},X'_{n+1:N})\in \mathcal{C}\mid X_{1:n}\big).$$
We also define $Q_0=\mathbb{P}_0(T^{(0)}_N\in \mathcal{C})$ and $Q_N={\bf 1}(T_N\in \mathcal{C})$.
Expanding the definition of $Q_n$, we have
$$Q_n=\int\cdots\int {\bf 1}(T(X_{1:n},x'_{n+1:N})\in \mathcal{C})\,p_0(x'_{n+1:N}\mid X_{1:n})\,dx'_{n+1:N}.$$
Now
$E(Q_{n+1}\mid X_{1:n})$ is given by
$$\begin{array}{ll}
& \int\cdots\int 1(T(X_{1:n},x_{n+1},x'_{n+2:N})\in C)\,p_0(x'_{n+2:N}\mid x_{n+1},X_{1:n})\,p_0(x_{n+1}\mid X_{1:n}) \,dx_{n+1}\,dx'_{n+2:N} \\ \\
= &  \int\cdots\int 1(T(X_{1:n},x_{n+1},x'_{n+2:N})\in C)\,p_0(x_{n+1},x'_{n+2:N}\mid X_{1:n})\,dx_{n+1}\,dx'_{n+2:N}
\end{array}
$$
which is easily seen to be $Q_n$. 
\end{proof}

\noindent
A more explicit version of the proof can be seen by investigating the illustration involving the normal mean in Section \ref{sec:S2.1}. In this case we have seen that
$Q_n=1-\Phi((\sqrt{N}z_\alpha-T_n)/\sqrt{N-n})$
which we start by writing as 
$P(Z\geq (\sqrt{N}z_\alpha-T_n)/\sqrt{N-n})$ where $Z$ is a standard normal random variable. So
$$Q_{n+1}=P\bigg(Z\geq (\sqrt{N}z_\alpha-T_n-Z')/\sqrt{N-n-1}\bigg),$$
where $Z'$ is also a standard normal random variable independent of $Z$, 
and hence
$$Q_n=P\bigg(Z+Z'/\sqrt{N-n-1}\geq (\sqrt{N}z_\alpha -T_n)/\sqrt{N-n-1}\bigg).$$
Now $Z+Z'/\sqrt{N-n-1}=Z''\sqrt{(N-n)/(N-n-1)}$
for some standard normal variable $Z''$.
So $$Q_n=P(Z''\sqrt{(N-n)/(N-n-1)}\geq (\sqrt{N}z_\alpha-T_n)/\sqrt{N-n-1})=1-\Phi((\sqrt{N}z_\alpha-T_n)/\sqrt{N-n})$$
as required.

While we can define a martingale process for the $(Q_n)$ for the Student $t$ test, we are unable to compute the $Q_n$ since we are unable to compute the statistic $T_N$ with only knowledge of the $X_{1:n}$. Hence, we are required to define a predicted statistic $T_N^{(n)}$ which we can compute given $X_{1:n}$. The feature now is that the $(Q_n)$ will not form a martingale, but this does not prevent us from controlling the Type I error for anytime valid tests.

There are a number of tests which do not yield practical martingales and these include

\begin{description}

\item 1. Tests which specify equality of e.g. means without specifying the common value; i.e. 
$$H_0:\theta_X=\theta_Y.$$

\item 2. Tests which include nuisance parameters; i.e.
$$H_0:\theta=\theta_0$$
with $X\sim f(\cdot\mid\theta,\phi)$ with $\phi$ a nuisance parameter.

\end{description}

\noindent
In case 1. the statistic is 
$$T_N=|\widehat{\theta}(X_{1:N})-\widehat{\theta}(Y_{1:N})|.$$
and the predictive statistic based on a sample of size $n$ is
$$T_N^{(n)}=|\widehat{\theta}(X_{1:n},X'_{n+1:N})-\widehat{\theta}(Y_{1:n},Y'_{n+1:N})|$$
where the $X'_{n+1:N}$ and $Y'_{n+1:N}$ are independent from the model with common parameter
$$\theta=\widehat\theta(X_{1:n},Y_{1:n}).$$
Then $Q_n=P(T_N^{(n)}\geq c)$ and in this case the $(Q_n)$ is not a martingale sequence.

\vspace{0.2in}
\noindent
In case 2. the statistic is
$$T_N=T\left(\widehat\theta(X_{1:N}),\widehat{\phi}_N\right).$$
and the predictive statistic is
$$T_N^{(n)}=T\left(\widehat\theta(X_{1:n},X'_{n+1:N}),\widehat\phi_n\right)$$
where the $X'_{n+1:N}$ are from the model $f(\cdot\mid\theta_0,\widehat\phi_n)$.
As before, $Q_n=P(T_N^{(n)}\geq c)$ for some critical value $c$ with $(Q_n)$ not being a martingale sequence.

\section{Predictive anytime-valid Type I error control}
\label{sec:S4}

When  $(Q_n)$ is a martingale as a special case, from the Doob maximal inequality (\cite{Doob53}) it follows that for any $0<\gamma\leq 1$ it is that
$$P\left(\max_{1\leq n\leq N}Q_n\geq \gamma\right)\leq Q_0/\gamma.$$
Note that $Q_0$ is determined by the choice of $\mathcal{C}$. In order to guarantee a Type I error bounded by $\alpha$ for the anytime valid test we need to set $Q_0/\gamma=\alpha$. Hence, for the anytime valid test we determine $\mathcal{C}$ so that $P(T_N\in \mathcal{C})=\wt$ and therefore we have the relation $\alpha\gamma=\wt.$ Throughout we will use both $\wt$ and $\alpha\gamma$ to denote this quantity. The $\gamma$ can take any value in $(0,1)$ and thus it is possible to have some control over Type II errors while guaranteeing that the overall Type I error is $\alpha$. Note that for $\gamma=1$ we recover the fixed maximal sample size test.


However, we do not require the martingale inequality to determine a suitable $\gamma$ and Type I error. 
Since we have the distribution of the $(Q_n)$ via simulation we can use Monte Carlo methods to find the $\gamma$ for which
$$P\left(\max_{n\leq N} Q_n\geq\gamma\right)=\alpha,\quad\mbox{with}\quad Q_n=P\left(T_N^{(n)}\in \mathcal{C}_{\wt}\right).$$
For completeness we present the algorithm.

For some large $B$ compute $(Q_n^{(b)})$ for $b=1,\ldots,B$. For each $b$ a data set $X_{1:N}^{(b)}$ is generated. So, for example, for the Student $t$ test, we would have
$$Q_n^{(b)}=1-\Phi\left(\frac{c_{\wt}\sqrt{N}-\sqrt{n}T_n^{(b)}}{\sqrt{N-n}}\right)$$
and
$Q_N^{(b)}=1(T_N^{(b)}\geq c_{\wt})$,
where the $(T_{n}^{(b)})$ is the sequence of Student $t$ statistics from the data set $X_{1:N}^{(b)}$.  
We then find the $\gamma>0$ for which
$$B^{-1}\sum_{b=1}^B 1\bigg(\max_{n\leq N} Q_n^{(b)}\geq \gamma\bigg)=\alpha,$$ 
This will actually gives a more precise connection between $\gamma$ and $\alpha$ than can be provided by the martingale inequality.


In summary, when $(Q_n)$ is not a martingale,  we choose the critical value $\gamma$ for $(Q_n)_{n\leq N}$
by Monte Carlo methods; i.e. we find the $\gamma$ for which
$$P\bigg(\max_{n\leq N}Q_n \leq\gamma\bigg)=1-\alpha.$$ 
This can be found via repeated simulation of $(Q_{n\leq N})$ and such a $0<\gamma<1$ will exist when $\wt<\alpha$, 

\section{Power, sample-size, and stopping thresholds}
\label{sec:S5}

\subsection{Reduction in power when tightening the critical region.} The use of the  predictive critical region $C_{\tilde{\alpha}} \subset C_\alpha$ may lead to a small reduction in power. We can characterise a worse case loss in power by considering the sequential test having not rejected the null before $N$, and compare the power of the fixed-sample test versus that of the sequential test at $N$. 

For illustration we consider a one-sided test whose fixed-sample design has Type~I
error $\alpha$ and power $100\% (1 - \beta)$ at a chosen
design alternative hypothesis used to calculate $N$. Under the standard normal (or large-sample $z$) approximation,
the design alternative satisfies
\[
\sqrt{N}\,\theta^* \;=\; z_{1-\alpha} + z_{1-\beta},
\]
where $z_p:=\Phi^{-1}(p)$ and $\theta^*$ is the standardised effect size.
If we keep the same sample size $N$ but tighten the Type~I error in the predictive test to
$\tilde{\alpha}<\alpha$, then the critical value increases from $z_{1-\alpha}$ to
$z_{1-\tilde{\alpha}}$, and the resulting power at the same design alternative is
approximately
\[
1-\tilde{\beta}
\;\approx\;
\Phi\!\bigl(z_{1-\beta}-(z_{1-\tilde{\alpha}}-z_{1-\alpha})\bigr),
\qquad\text{equivalently}\qquad
\tilde{\beta}
\;\approx\;
\Phi\!\bigl((z_{1-\tilde{\alpha}}-z_{1-\alpha})-z_{1-\beta}\bigr).
\]
Thus, tightening $\alpha$ reduces power by an amount governed primarily by the
increase in the critical value $\Delta z := z_{1-\tilde{\alpha}}-z_{1-\alpha}$.

\vspace{0.2in}
\noindent
{\sc Example.}
For $N=500$, $\alpha=0.05$, and $\beta=0.1$ (i.e.\ 90\% power), tightening to
$\tilde{\alpha}=0.0475$, with $\gamma=0.95$, increases the critical value by
$\Delta z = z_{0.9525}-z_{0.95}\approx 0.025$.
The approximation above gives
\[
1-\tilde{\beta}
\;\approx\;
\Phi\!\bigl(z_{0.9}-\Delta z\bigr)
=
\Phi(1.2816-0.025)
\approx
0.895,
\]
so the power decreases from 0.90 to approximately 0.895 (a loss of about
0.5 percentage points) when $N$ is held fixed.

\subsection{Maintaining power via a small increase in sample size}

As shown above, the modest change to the critical region used by the predictive test may lead to a loss in power. In order to guarantee no loss in power relative to the original fixed-sample test, the maximal sample size for the predictive test can be increased slightly to $N' = N + m$. The value of $m$ can be set using standard power analysis (shown below). In Table \ref{tab:power} we illustrate this for a typical example with a threshold stopping decision of $\gamma=0.95$. Correcting to guarantee matching of power leads to around a 2\% increase in maximal sample size, i.e. $N'=1.02N$, and up to a 30\% reduction in samples required under the alternative hypothesis. Note, clearly, that the additional samples, $(N+1, N+2, \ldots, N+m)$, are only ever needed if the procedure has not already stopped by $N$. 

\begin{table}[t]
\caption{\textbf{Additional samples required to preserve 90\% power when tightening the one-sided level with $\alpha=0.05$ and $\gamma=0.95$.} 
For a fixed standardized effect size $\delta$, the required sample size rescales as
$N' = N\left((z_{1-\tilde{\alpha}}+z_{0.90})/(z_{1-\alpha}+z_{0.90})\right)^2$.
Values shown use $z_{\alpha} = z_{0.95}\approx 1.6449$, $z_{\tilde{\alpha}}=z_{0.9525}\approx 1.6696$, and
$z_{0.90}\approx 1.2816$, with $N'$ rounded up to the nearest integer.
For small sample sizes (e.g.\ $N=10$ or $20$), the apparent percentage increase
is dominated by integer rounding of $N'$; the underlying multiplicative inflation
factor is approximately $1.7\%$ across all $N$, and the observed percentage increase
converges rapidly to this value as $N$ grows.
The final four columns report the mean, median, and interquartile range
($q_{0.25}$ to $q_{0.75}$) of the stopping time $\tau=\inf\{n:Q_n\ge\gamma\}$ under the
design alternative calibrated to yield 90\% fixed-sample power at $(N',\tilde{\alpha})$,
estimated using 10{,}000 Monte Carlo runs per $N'$.
Across these $N'$, the interquartile range scales approximately linearly with $N'$
and is about one quarter of $N'$ (i.e. $q_{0.75}(\tau)-q_{0.25}(\tau)\approx 0.25N'$),
indicating that typical stopping times concentrate well before the maximum sample size.}
\centering
\vspace{0.2in}
\label{tab:alpha-tightening}
\begin{tabular}{c c c c c c c c}
\hline 
$N$ & $N'$ & $m$ & \% increase $(m/N)$ &
$\mathbb{E}(\tau)$ & $\mathrm{median}(\tau)$ & $q_{0.25}(\tau)$ & $q_{0.75}(\tau)$ \\
\hline
10   & 11   & 1  & 10.0\% & 9.2  & 9   & 8   & 11  \\
20   & 21   & 1  & 5.0\%  & 17.1 & 17  & 15  & 20  \\
50   & 51   & 1  & 2.0\%  & 40.9 & 41  & 35  & 48  \\
100  & 102  & 2  & 2.0\%  & 81.1 & 82  & 70  & 95  \\
500  & 509  & 9  & 1.8\%  & 400.0 & 403 & 342 & 467 \\
1000 & 1017 & 17 & 1.7\%  & 799.2 & 804 & 681 & 934 \\
\hline
\end{tabular}
\label{tab:power}
\end{table}

Using $C_{\alpha\gamma}$ introduces a small amount of conservatism relative to the
original fixed-sample test that may lead to a loss in power.
To guarantee the original power, this conservatism can be offset by allowing a modest
increase in the maximum sample size from $N$ to $N'=N+m$. Table~\ref{tab:power}  summarizes the required inflation for $\gamma = 0.95$ for a range of sample sizes in representative settings.

Returning to the normal mean illustration, the power for the fixed sample test, with true mean $\theta^*$, is
given by 
$$1-\Phi\bigg(\Phi^{-1}(1-\alpha)-\sqrt{N}\theta^*\bigg).$$
For the sequential test, the power has a lower bound of 
$$1-\Phi\bigg(\Phi^{-1}(1-\alpha\gamma)-\sqrt{N'}\theta^*\bigg).$$
We can find the $N'$ for which these two match; i.e.
$$\Phi^{-1}(1-\alpha)-\sqrt{N}\theta^*=\Phi^{-1}(1-\alpha\gamma)-N'\theta^*,$$
leading to
$$\sqrt{N'}=\sqrt{N}+\frac{\Phi^{-1}(1-\alpha\gamma)-\Phi^{-1}(1-\alpha)}{\theta^*}.$$
Define $z_{1-u}:=\Phi^{-1}(1-u)$ so that the increment can be written as
$\{z_{1-\alpha\gamma}-z_{1-\alpha}\}/\theta^*$, leads to, 
\[
N' \;=\; N\left(1+\frac{z_{1-\alpha\gamma}-z_{1-\alpha}}{\theta^*\sqrt{N}}\right)^2.
\]
If the fixed-sample test at $N$ has power $100\% \times \beta$ at the standardised effect size
$\theta^*$, then $\sqrt{N}\,\theta^* = z_{1-\alpha}+z_{\beta}$ (with
$z_{\beta}:=\Phi^{-1}(\beta)$), and substitution gives
\[
N' \;=\; N\left(1+\frac{z_{1-\alpha\gamma}-z_{1-\alpha}}{z_{1-\alpha}+z_{\beta}}\right)^2
\;=\;
N\left(\frac{z_{1-\alpha\gamma}+z_{\beta}}{z_{1-\alpha}+z_{\beta}}\right)^2.
\]



\subsection{The effect of setting $\gamma$ on the distribution of stopping times} 

In this section we view the choice of $\wt$ being set by a first choice of $\gamma$ while maintaining the relation
$P(\max_{n\leq N}Q_n\geq\gamma)=\alpha$ and $Q_n=P(T_N^{(n)}\in\mathcal{C}_{\wt})$.
The parameter $\gamma$ governs a tradeoff between conservatism and efficiency.
Smaller values of $\gamma$ lead to earlier stopping on average under the alternative
but require a larger $N'$ to guarantee power, while larger values of $\gamma$ reduce
the required $N'$ at the cost of later stopping.

This tradeoff is illustrated empirically in Fig.~\ref{fig:gamma_tradeoff_N500} (design alternative) and 
Fig.~\ref{fig:stopping_tradeoff_null} (null hypothesis).
Under the null hypothesis, most trajectories run to $N'$, as expected for Type~I
error control of $\alpha=0.05$, whereas under the design alternative the procedure typically stops
substantially earlier.
These figures provide practical guidance for selecting $\gamma$ in applications.

The setting of $\gamma$ should, in part, be driven by the confidence of the experimentalist that the alternative hypothesis holds. Most experiments will be motivated by a certain level of belief that the alternative is true. The price to pay for the benefits of early stopping is the increased maximal sample size expected to be used if the null is true. Setting $\gamma = 0.95$, or $\gamma=0.90$ seems a reasonable default, leading to a 20-23\% average saving in samples under the alternative, with a modest increase in samples used, of around 2-4\%, if the null is true.  

Let us return to the illustration involving the normal mean and let the true value of $\theta$ be $\theta^*\ne 0$.
Now
$Q_n\geq\gamma$ if and only if
$$T_n\geq \sqrt{N}z_{\alpha\gamma}-\sqrt{N-n}z_{\gamma}.$$
Under the alternative hypothesis, $T_n=\sqrt{n}Z_n+n\theta^*$
and so the hypothesis is rejected if ever
$$Z_n\geq \sqrt{N}z_{\alpha\gamma}-\sqrt{N-n}z_\gamma-n\theta^*,$$
where the $Z_n$ are a sequence of dependent standard normal variables.

If $\gamma$ is close to 1 then $z_{\alpha\gamma}$ is close to $z_\alpha$ and so when $n=N$ the test will be similar to the fixed maximal sample test and hence will have a power close to that for this test. On the other hand, for this choice of $\gamma$ ,$z_\gamma$ is a large negative value and so rejecting the null hypothesis early will suffer for small $n$. 
If $\gamma$ is close to 0 then $z_\gamma$ is large and positive and so the probability of early stopping increases. However, at the same time, $z_{\alpha\gamma}$ is now large, reducing the power of the test.


\begin{figure}[t]
\centering
\begin{tikzpicture}
\begin{axis}[
    width=0.85\linewidth,
    height=6cm,
    xlabel={Maximum sample size $N'$},
    ylabel={Stopping time $\tau$},
    xmin=500, xmax=590,
    xtick={500,510,520,530,540,550,560,570,580,590},
    ymin=260, ymax=480,
    legend style={at={(0.97,0.97)},anchor=north east},
    axis lines=left,
    tick align=outside
]

\addplot [
    draw=none,
    fill=blue!20,
]
coordinates {
    (502,375) (506,355) (509,342) (514,328) (518,326) (527,315) (537,305)
    (548,295) (559,289) (571,284) (584,279)
    (584,458) (571,457) (559,454) (548,454) (537,459) (527,459) (518,466)
    (514,464) (509,468) (506,472) (502,477)
} \closedcycle;
\addlegendentry{IQR (25th--75th)}

\addplot[
    color=blue,
    mark=o,
    thick
]
coordinates {
    (502,420.0) (506,408.3) (509,400.4) (514,392.7) (518,391.8)
    (527,384.9) (537,380.6) (548,375.7) (559,373.4) (571,372.6) (584,371.4)
};
\addlegendentry{Mean}

\addplot[
    color=orange,
    mark=square*,
    thick
]
coordinates {
    (502,427) (506,413) (509,403) (514,394) (518,390)
    (527,382) (537,374) (548,368) (559,364) (571,361) (584,358)
};
\addlegendentry{Median}

\end{axis}
\end{tikzpicture}
\caption{\textbf{Stopping-time tradeoff as a function of $\gamma$ and hence maximum sample size $N'$ guaranteeing power} for the one-sided normal-mean test with known variance, original sample size $N=500$
and $\alpha=0.05$.  
Curves show the mean and median stopping times under the alternative, and the shaded
region indicates the interquartile range (25th--75th percentiles), based on 10{,}000
Monte Carlo simulations per $\gamma$.
The values of $\gamma$ corresponding to the plotted points are left-to-right $\{0.99,\,0.97,\,0.95,\,0.93,\,0.90,\,0.85,\,0.80,\,0.75,\,0.70,\,0.65,\,0.60\}$, For each $\gamma$, the maximum sample size $N'$ is chosen to guarantee 90\% power, the same as the original design alternative.
Lower values of $\gamma$ lead to earlier stopping on average at the cost of a larger allowable maximum sample size.}
\label{fig:gamma_tradeoff_N500}
\end{figure}

\begin{figure}[t]
\centering
\includegraphics[width=0.75\linewidth]{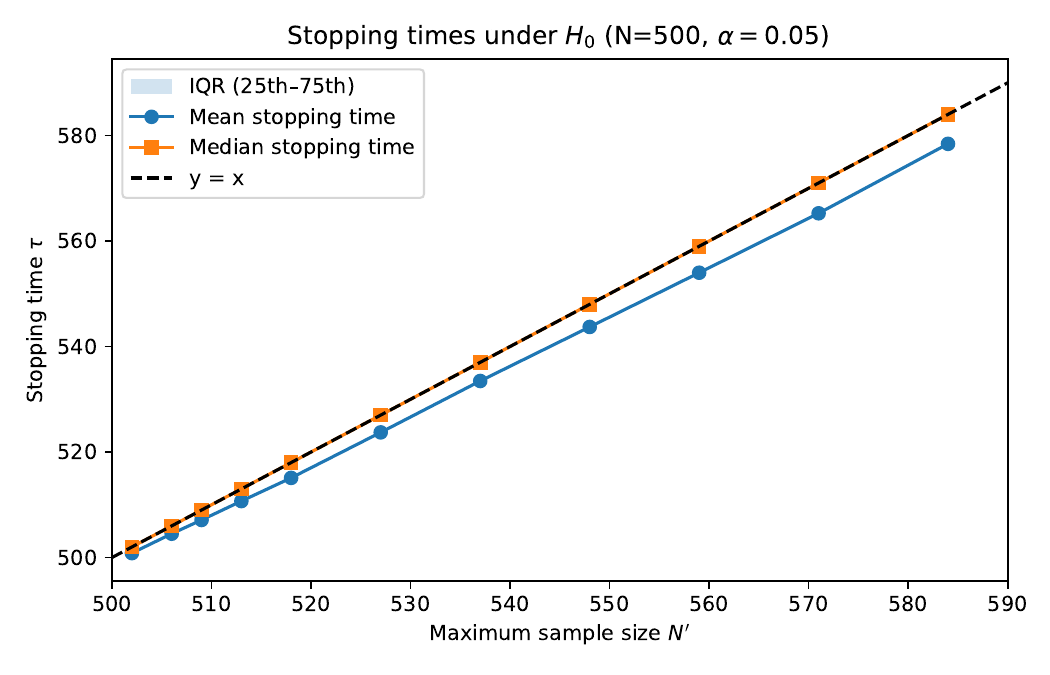}
\caption{\textbf{Stopping times under the null hypothesis $H_0$.}
The $x$-axis is the maximum allowed sample size $N'$, where $N'$ is chosen for each $\gamma$, as in Figure  \ref{fig:gamma_tradeoff_N500}, to preserve 90\% fixed-sample power at the original design
alternative, and the y-axis is the stopping time $\tau$.
Curves show the mean and median stopping times and the shaded region indicates the
interquartile range (25th--75th percentiles), based on 10{,}000 Monte Carlo
simulations per $\gamma$.
The dashed line $y=x$ indicates stopping at $N'$; under $H_0$, most trajectories lie
on or near this line, with only a small fraction stopping earlier due to false
rejection, consistent with the Type~I error control.}
\label{fig:stopping_tradeoff_null}
\end{figure}

\section{Worked example: one-sided Gaussian mean test}
\label{sec:S6}

As a first illustration to demonstrate the anytime valid test including the ability to control both Type I and Type II errors, we look at a normal mean hypothesis test, where under the null hypothesis the data are independent and identically distributed from the standard normal distribution, and under the alternative the mean is greater than 0. The test based on likelihood ratios would need a prior in order to deal with the composite alternative and therefore the test effectively becomes a Bayesian one using Bayes factors (see \cite{Kass95}).

The same setting is discussed by \cite{Spieg94} for a trial involving cancer therapies, one being the old conventional therapy and the other being a new treatment of high energy neutron therapy. The treatment difference is represented by a single parameter $\delta$ from which the observations arise as normal variables with mean $\delta$ and known variance. 

To proceed, the appropriate classical test at maximal sample size $N$ is to reject $H_0$ if $\sqrt{N}\,\bar{X}_N>z_{\wt}$ for some $\wt$, where $X_{1:N}$ are the observed sample and $\bar{X}_N$ is the sample mean.
Then define
$$Q_n=P\left(\sum_{i=1}^n X_i+\sum_{i=n+1}^N X'_i>z_{\wt}\sqrt{N} \,\Big| \,X_{1:n} \right),$$
where the $X_{1:n}$ are observed and the $X'_{n+1:N}$ are sampled from $\mathbb{P}_0$ - the standard normal distribution.
Under the null hypothesis,  $\sum_{i=n+1}^N X'_i$ is normal distributed with mean 0 and variance $N-n$, so, for $n=0,\ldots,N-1$,
$$Q_n=1-\Phi\left(\frac{z_{\wt}\sqrt{N}-\sum_{i=1}^n X_i}{\sqrt{N-n}}\right).$$
Observe that $Q_0=\wt$ and $Q_N=1(\bar{X}_N>z_{\wt}/\sqrt{N})$. 
Hence, since $(Q_n)$ is a martingale under the null hypothesis,
$$P\Big(\max\{Q_1,\ldots,Q_N\}>\gamma\Big)\leq\wt/\gamma$$
for any $0<\gamma\leq 1.$ We write $\wt=\alpha\gamma$ and throughout take $\alpha=0.05$, a benchmark value.

To highlight the difference between using the martingale inequality and Monte Carlo methods for determining $\gamma$; for the martingale inequality if we choose $\wt=0.95\,\alpha$ then the value for $\gamma$ is 0.95. On the other hand, if we determine the $\gamma$ using Monte Carlo methods, then it is 0.88.



Here we investigate the Type I and Type II errors for the sequential test using a maximal sample size of $N$ and for ease of analysis and exposition  we use the martingale inequality. Hence we use $\wt=\alpha\gamma$ and the rejection threshold for the $(Q_n)$ is therefore $\gamma$.
First we provide a look at the power curve as $\gamma$ ranges from 0.1 to 0.9, for fized $N$, using $N=500$, $\alpha=0.05$, and $\theta^*=0.13$. We used Monte Carlo simulation with a sample of size 10,000. As can be seen in Fig.~\ref{figsm6} the power curve increases with $\gamma$ though does not change very much. The value at $\gamma=0.1$ and $\gamma=0.9$ are 0.82 and 0.89, respectively.

Next we look at the distribution function of stopping times.
In Fig.~\ref{figsm7a} we show the distribution function with $\gamma=0.1$ and using $N=500$ with $\theta^*=0.13$ 
and $\alpha=0.05$. Fig.~\ref{figsm7b} shows the distribution function with $\gamma=0.95$. The means are 244 and 384 for $\gamma=0.1$ and $\gamma=0.95$, respectively.

For the theory for this example, if the alternative is $\theta^*$, and $\tau$ is defined to be the stopping time, i.e. $\tau=\min_n\{Q_n\geq\gamma\}$, then
$P(\tau\leq m)\geq P(Q_m\geq\gamma)$ and
$$Q_m=1-\Phi\left(\frac{\sqrt{N}\Phi^{-1}(1-\alpha\gamma)-T_m-(N-m)\theta^*}{\sqrt{N-m}}\right),$$
so
$$P(\tau\leq m)\geq 1-\Phi\left(\frac{\sqrt{N}\Phi^{-1}(1-\alpha\gamma)-\sqrt{N-m}\Phi^{-1}(1-\gamma)-N\theta^*}{\sqrt{m}}\right)$$
which becomes large very quickly due to the presence of $N\theta^*$.



\begin{center}
\begin{figure}[!htbp]
\begin{center}
\includegraphics[width=12cm,height=4cm]{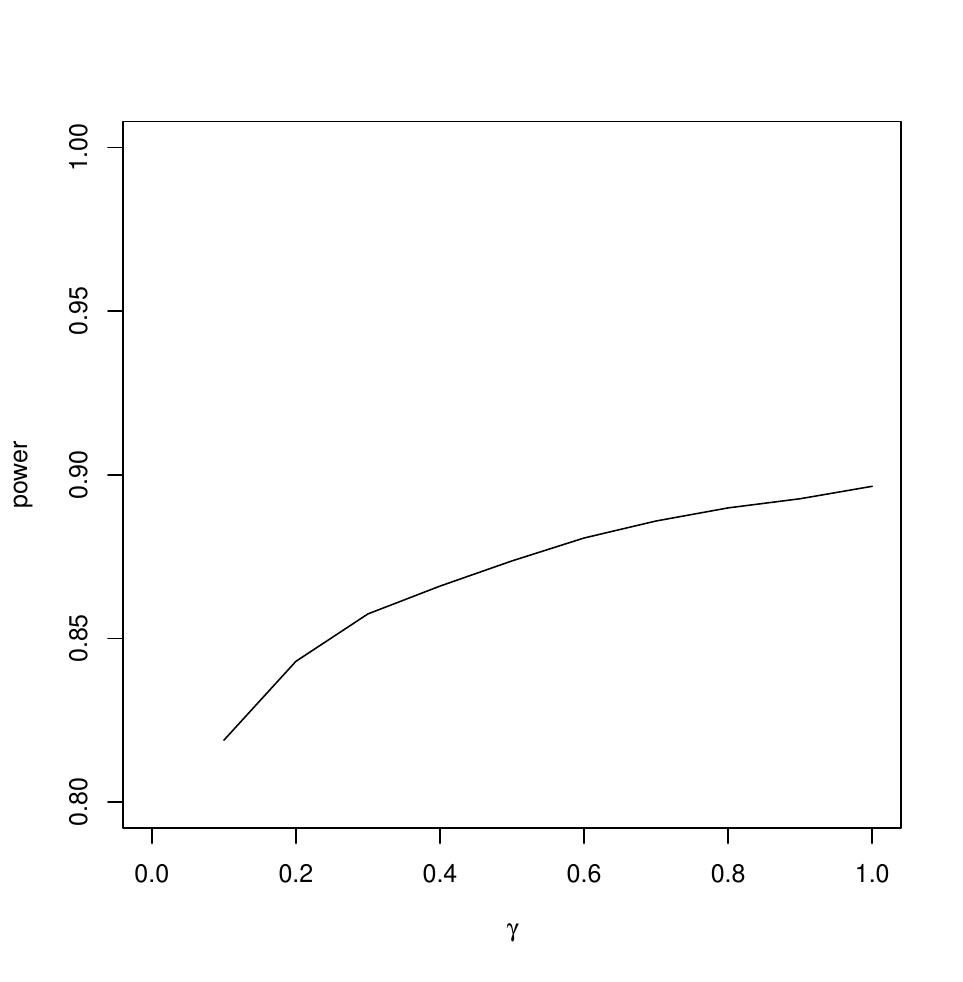}
\caption{ Power function of sequential test plotted against value of $\gamma$, using $\theta^*=0.13$ with $N=500$ and a Monte Carlo sample size of 10,000.}
\label{figsm6}
\end{center}
\end{figure}
\end{center} 


\begin{center}
\begin{figure}[!htbp]
\begin{center}
\includegraphics[width=12cm,height=4cm]{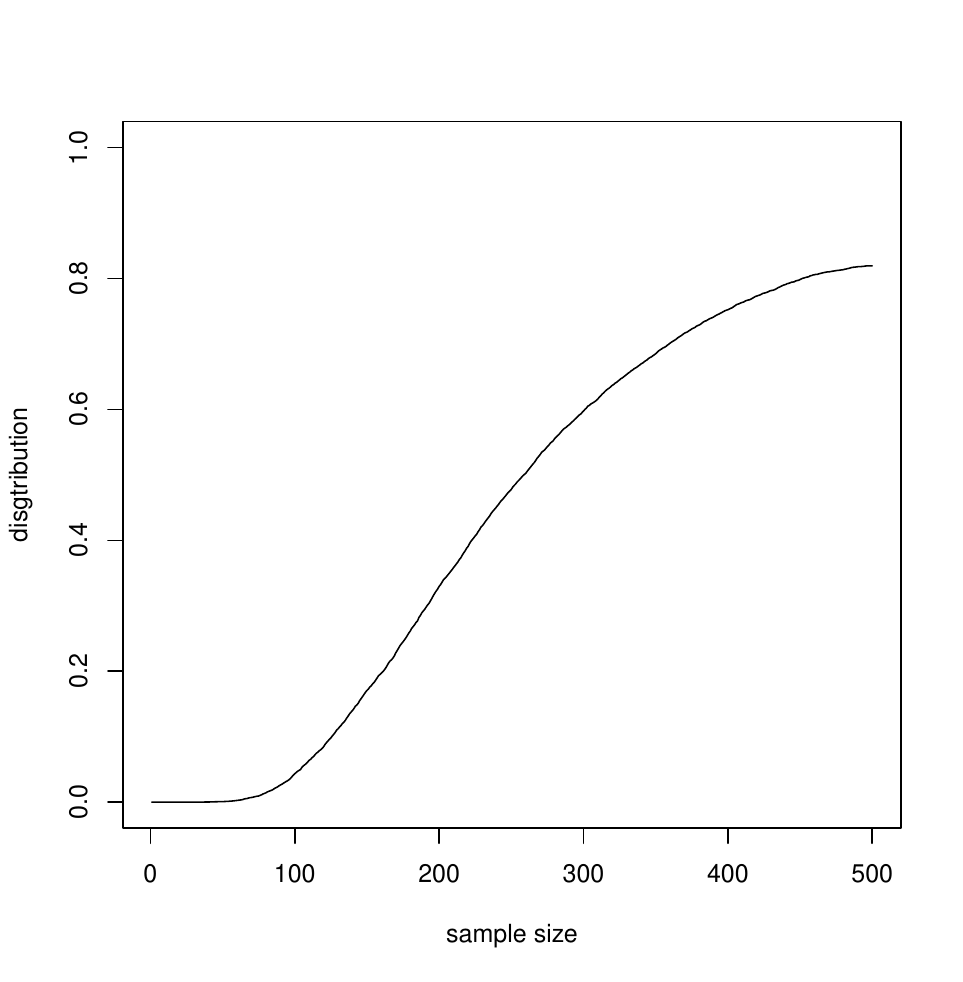}
\caption{Distribution of stopping times with $\gamma=0.1$ and $\theta^*=0.13$  and $N=500$ }
\label{figsm7a}
\end{center}
\end{figure}
\end{center} 

\begin{center}
\begin{figure}[!htbp]
\begin{center}
\includegraphics[width=12cm,height=4cm]{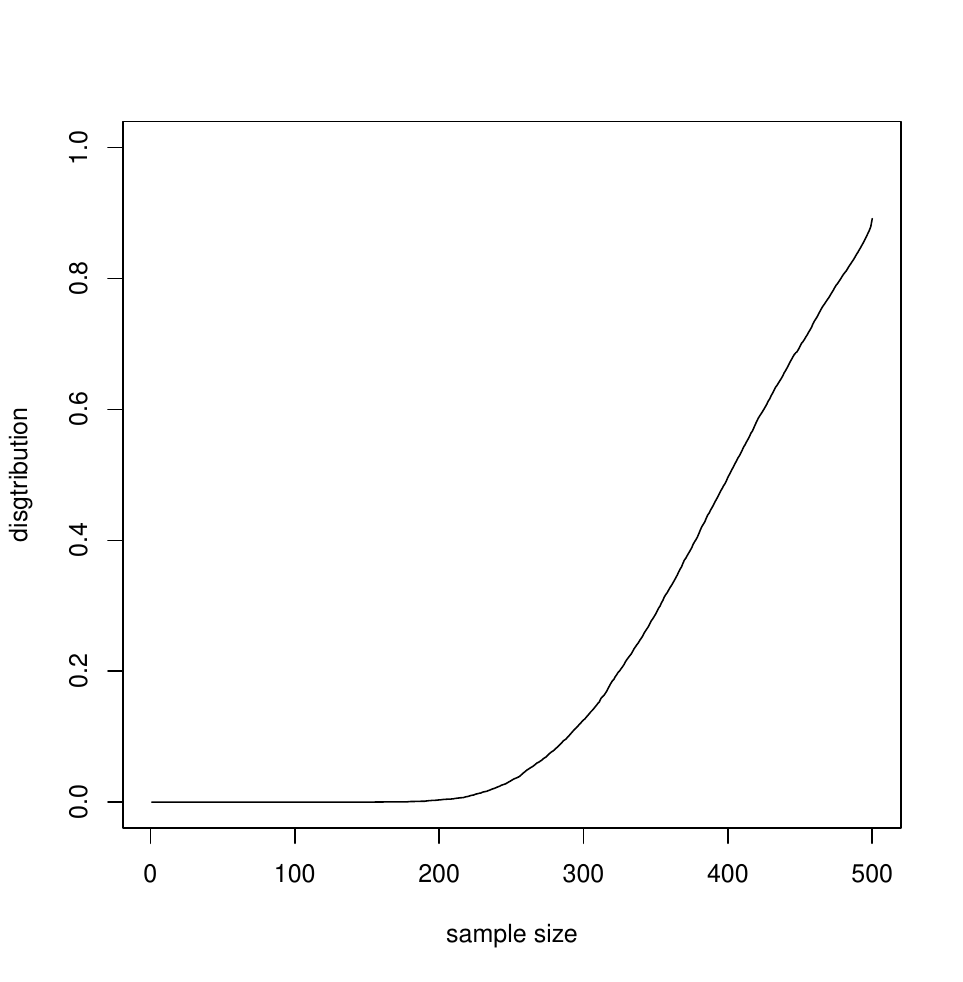}
\caption{Distribution of stopping times with $\gamma=0.95$, and $\theta^*=0.13$  and $N=500$.}
\label{figsm7b}
\end{center}
\end{figure}
\end{center} 




\noindent
Here we show that the power for the anytime valid test is upper bounded by the power for the fixed sample size $N$ test. The Type I error for the fixed sample size $N$ test is given by $\alpha$ and the power for true alternative $\theta^*>0$ is given by
$p_N(\alpha)=1-\Phi(z_\alpha-\sqrt{N}\theta^*).$
 We can obtain an upper bound for the anytime valid test, using the martingale inequality, which is given by
 $p_\gamma=p_N(\wt)/\gamma$ where $\wt=\alpha\gamma$. 
 
 \begin{lemma}
 It is that for all $\alpha$ and $\psi>0$,
 $$1-\Phi(z_\alpha-\psi)\geq \gamma^{-1}\bigg(1-\Phi(z_{\alpha\gamma}-\psi)\bigg)$$
 with equality only when $\gamma=1$.
 Here $\psi>0$ stands in for $\sqrt{N}\theta^*$.
 \end{lemma}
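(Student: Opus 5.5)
The plan is to turn the lemma into a statement about a single function of the tail level, and then use concavity. For $u\in(0,1)$ put $z_u=\Phi^{-1}(1-u)$, as in Section~\ref{sec:S5}, and define the power curve of the fixed-sample $z$-test at level $u$,
\[
g(u)=1-\Phi(z_u-\psi),\qquad \psi>0 .
\]
The lemma then reads $g(\alpha)\ge g(\alpha\gamma)/\gamma$. Since $\alpha>0$, this is equivalent to
\[
\frac{g(\alpha\gamma)}{\alpha\gamma}\le\frac{g(\alpha)}{\alpha}.
\]
So the whole proof comes down to the monotonicity of $u\mapsto g(u)/u$.

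\textbf{Step 1: $g$ is strictly concave.} Differentiate using $dz_u/du=-1/\phi(z_u)$:
\[
g'(u)=\frac{\phi(z_u-\psi)}{\phi(z_u)}=\exp\!\big(\psi z_u-\tfrac12\psi^2\big).
\]
This is the familiar likelihood-ratio slope of a concave ROC curve. Because $\psi>0$ and $z_u$ is strictly decreasing in $u$, the derivative $g'$ is strictly decreasing. Hence $g$ is strictly concave on $(0,1)$.

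\textbf{Step 2: boundary value and chord slopes.} As $u\downarrow0$ we have $z_u\to\infty$, so $g(u)\to0$, and we extend $g$ by setting $g(0)=0$. The quantity $g(u)/u$ is the slope of the chord of $g$ from $0$ to $u$. For a strictly concave function with $g(0)=0$, this chord slope is strictly decreasing in $u$.

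\textbf{Step 3: compare the two levels.} Apply Step 2 to the levels $u=\alpha\gamma$ and $u=\alpha$.
\begin{itemize}
\item If $\alpha\gamma\ge\alpha$, i.e.\ $\gamma\ge1$ (with $\alpha\gamma<1$ so that $z_{\alpha\gamma}$ is defined), then $g(\alpha\gamma)/(\alpha\gamma)\le g(\alpha)/\alpha$. Multiplying by $\alpha$ gives exactly the displayed inequality of the lemma.
\item Strictness of the monotonicity in Step 2 gives equality if and only if $\alpha\gamma=\alpha$, that is, $\gamma=1$. This is the equality clause of the lemma.
\end{itemize}

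\textbf{The main obstacle is the admissible range of $\gamma$, not the analysis.} The same chord-slope argument reverses the inequality when $\gamma<1$, which is the range used elsewhere in this section, where $\wt=\alpha\gamma<\alpha$. The direction stated in the lemma therefore holds precisely for $\gamma\ge1$ with $\alpha\gamma<1$. I will state this range explicitly as the hypothesis under which the lemma is proved, and note that for $\gamma\in(0,1)$ Steps 1--3 yield the reverse inequality. The only genuinely analytic points to check are:
\begin{itemize}
\item the closed form for $g'$, which is an exponential-family computation for Gaussian densities;
\item the limit $g(0^+)=0$, which legitimises the chord from the origin.
\end{itemize}
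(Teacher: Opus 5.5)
Your analysis is correct, and it shows that the statement, not your argument, is at fault. In the paper $\gamma\in(0,1]$ is a threshold for $Q_n$, with $\widetilde{\alpha}=\alpha\gamma<\alpha$. The only proof offered is the sentence ``This is an easy proof using calculus.'' In exactly that range your Steps 1--2 give the \emph{reverse} strict inequality. Those steps are that $g'(u)=\exp(\psi z_u-\psi^2/2)$ is strictly decreasing and $g(0^+)=0$, so $u\mapsto g(u)/u$ is strictly decreasing. The result is $\gamma^{-1}\{1-\Phi(z_{\alpha\gamma}-\psi)\}>1-\Phi(z_\alpha-\psi)$ for every $\alpha\in(0,1)$, $\psi>0$, $\gamma\in(0,1)$.

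The paper's own numbers confirm this. With $\alpha=0.05$, $\gamma=0.95$ and $\psi=z_{0.05}+z_{0.10}\approx 2.93$, the left side is $0.900$ while the right side is about $0.896/0.95\approx 0.943$. As $\psi\to\infty$ the right side tends to $1/\gamma>1$. Reading $z_u$ as $\Phi^{-1}(u)$ instead, as the notation of Section S5 would suggest, makes things worse: the numerator on the right then already exceeds the left side before dividing by $\gamma<1$. So the Doob-type bound $p_N(\alpha\gamma)/\gamma$ on the sequential power always exceeds $p_N(\alpha)$, and it cannot support the conclusion the paper draws after the lemma.

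I would push back on your repair. Adding the hypothesis $\gamma\ge 1$ makes the display true, but in this paper that case has no content. Since $Q_n\in[0,1]$, a threshold $\gamma>1$ is never crossed, and $\alpha\gamma>\alpha$ contradicts the requirement $\widetilde{\alpha}<\alpha$. You should therefore present your result as a counterexample to the lemma in its intended range $\gamma\in(0,1)$, and mention the $\gamma\ge 1$ inequality only as a side remark. Do not present it as a proof under an added hypothesis.

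If you want the conclusion the authors evidently intended, use Neyman--Pearson (Karlin--Rubin) rather than this inequality. That conclusion is that the anytime-valid test never has more power than the fixed-$N$ test, with equality only at $\gamma=1$. The argument runs as follows.
\begin{itemize}
\item The rejection event $\{\max_{n\le N}Q_n\ge\gamma\}$ is $\sigma(X_{1:N})$-measurable and has null probability at most $\alpha$.
\item Hence its power at $\theta^*>0$ is at most $1-\Phi(z_\alpha-\sqrt{N}\theta^*)$, the power of the most powerful level-$\alpha$ test.
\item Equality requires the event to agree a.s.\ with $\{\sqrt{N}\bar X_N>z_\alpha\}$.
\item That fails for $\gamma<1$: paths that cross the stopping boundary early but end with $\sqrt{N}\bar X_N\le z_\alpha$ have positive probability.
\end{itemize}
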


\vspace{0.1in}
\noindent
This is an easy proof using calculus. Hence, we can only recover the fixed sample size $N$ test when we take $\gamma=1$ which means the anytime valid test becomes the fixed sample size test. There can be no early rejection of the null hypothesis.  


\section{Relation to e-processes and betting odds}
\label{sec:S8}

E-processes can be interpreted as betting strategies whose capital is a
supermartingale under $H_0$.
While this guarantees anytime validity, it ties construction to likelihood ratios
and often leads to substantial power loss for fixed maximal sample size.
Predictive anytime-valid testing differs fundamentally by tracking the fixed-$N$
decision event directly.

There has been a substantial amount of research on sequential testing in recent years and the key device for ensuring the protection of Type I errors is martingales. Further, the crucial property of the martingale is the maximal inequality due to Doob: \cite{Doob53}, and the inequality due to Ville: \cite{Ville39}. 
The former results, which we rely on, are that if $X_1,X_2,\ldots$ is a submartingale, then for all $a>0$ it is that
$$P\left(\max_{1\leq i\leq n} X_i\geq a\right)\leq E\left(\max(X_n,0)\right)/a,$$
whereas if $X_1,X_2,\ldots$ is a non-negative supermartingale then 
$$P\left(\max_{1\leq i\leq n} X_i\geq a\right)\leq E(X_1)/a.$$
The latter result is that if $X_0,X_1,X_2,\ldots$ are a non-negative supermartingale, then for all $a>0$ it is that
$$P\left(\sup_n X_n\geq a\right)\leq E(X_0)/a.$$
These results can be used for controlling a Type I error when tests are performed sequentially. See for example, \cite{Shafer11}, \cite{Vovk21}, \cite{Grunwald24}, \cite{ramdas2023game}.

Prior to these papers, the essence of the idea behind e-processes and test martingales had been introduced by \cite{Robbins70} and demonstrated for clinical trials in \cite{Jennison89}. In \cite{Robbins70} the author introduced what would be recognized as a Bayes factor, define
$z_n=\tilde{g}_n(X_{1:n})/g_n(X_{1:n}),$
where $g_n$ and $\tilde{g}_n$ are density functions on $X_{1:n}$,
and was interested in the \cite{Ville39} result, namely
\begin{equation}\label{robbins}
P_{g_n}(z_n\geq\epsilon\,\,\mbox{for some}\,\,n\geq 1)\leq 1/\epsilon.
\end{equation}
In particular \cite{Robbins70} considered $g_n(X_{1:n})=\prod_{i=1}^n f(X_i\mid\theta_0)$ and
$$\tilde{g}_n(X_{1:n})=\int \prod_{i=1}^n f(X_i\mid\theta)\,\pi(\theta)\,d\theta$$
for some density function $\pi(\theta)$. Of particular interest was a sequence of confidence intervals $I_n(\epsilon)$,
derived from (\ref{robbins}), for which
$$P_{\theta}\bigg(I_n(\epsilon)\ni \theta\,\,\mbox{for all}\,\,n\geq 1\bigg)\geq 1-\epsilon.$$
A number of illustrations are presented in \cite{Robbins70}. Given the well known connection between confidence intervals and tests, \cite{Robbins70} also discusses sequential tests.
Further, \cite{Jennison89} adopt the settings of \cite{Robbins70} to a finite framework for which there is a maximal sample size  and focus on clinical trials. Specifically they use the sequence of confidence intervals of the type introduced by \cite{Robbins70} to perform sequential tests. However, only a few interim analyses can be performed due to the difficulty in setting appropriate critical regions.

We differ from \cite{Jennison89} in the respect that our predictive anytime-valid test allows for testing at every $n \le N$. This by-passes the need to pre-specify a rigid interim testing schedule and, for example, we can incorporate streaming data which is becoming more prevalent\footnote{https://engineering.atspotify.com/2023/07/bringing-sequential-testing-to-experiments-with-longitudinal-data-part-1-the-peeking-problem-2-0}


Here we describe the ideas behind what has become known as ``safe testing" which does allow for testing for streaming data. Suppose $(S_n)_{n\geq 0}$ is a sequence of observable test statistics which under the simple null hypothesis have expectations bounded by 1, i.e. $E_{p_0}(S_n)\leq 1$ for all $n$, where $p_0$ is the posited distribution of observables under the null hypothesis. Also assume the $S_n$ form a supermartingale. Then
Ville's inequality provides
$P\left(S_n\geq 1/\alpha\mid H_0\right)\leq \alpha$
for all $\alpha>0$. A sequential test with optional stopping will have the Type I error bounded by $\alpha$. For example, suppose the simple null hypothesis is that a sample is i.i.d. from $p_0$ and the simple alternative is that observations come from density $p_1$. Define
$$S_n=\prod_{i=1}^n \frac{p_1(X_i)}{p_0(X_i)}$$
so that $E_{p_0}(S_n)=1$ for all $n$. It is now possible to reject the null hypothesis at any sample size $n$ whenever $S_n\geq 1/\alpha$. So, if a typical $\alpha=0.05$ is chosen, the null hypothesis is rejected whenever $S_n\geq 20$.  

For a composite alternative hypothesis, say $p\in \Omega$, the recommendation is to use a Bayes factor of the form
$$B_n=\frac{\int_\Omega \prod_{i=1}^n p(X_i)\,\pi(dp) }{\prod_{i=1}^n p_0(X_i)}$$
which remains a martingale under the null hypothesis. Hence, safe testing is again possible due to Ville's inequality.

Without linking to a formal test with the notion of Type I errors, an alternative concept is required for deciding on the weight of evidence in favor or otherwise of a hypothesis. A notion of ``betting" has therefore been introduced into the e-process framework as a means by which to discuss evidence.
See, for example, \cite{Shafer21}.
We have no need for such a concept since we ground our decisions using a classical test framework and probability.

For a comparison we return to the normal mean illustration though now we consider the alternative hypothesis $H_1:\theta\ne 0.$ 
For the existing anytime-valid approach the test statistic is given by
$$E_n=\exp\left\{\half T_n^2/(n+1)\right\}/\sqrt{n+1}$$
where $T_n=\sum_{i=1}^n X_i$. This is based on a prior for $\theta$ which is standard normal. The null hypothesis is rejected if ever $E_n\geq 1/\alpha$.
For the predictive procedure we have
$$Q_n=1-\Phi\left(\frac{\sqrt{N}z_{\wt/2}-T_n}{\sqrt{N-n}}\right)+\Phi\left(\frac{-\sqrt{N}z_{\wt/2}-T_n}{\sqrt{N-n}}\right)$$
and the null hypothesis is rejected if ever $Q_n\geq\gamma$ where $\alpha=\wt\gamma$. 

\section{Further tests}
\label{sec:S10}

\subsection{Student $t$ test}\label{sec:student}

Here we consider an example not included in the paper of \cite{Jennison89} and neither is it part of the stochastic curtailment literature.
We consider the well known Student $t$ test for a normal mean with unknown variance. The statistic of interest is
$T_N=\sqrt{N}\bar{X}_N/S_N$ with the null hypothesis being the normal mean with unknown variance is 0. It is not possible to include this in the stochastic curtailment setting due to the lack of knowledge of the distribution of $T_N$ given $X_{1:n}$ for $n<N$.

We define the predictive test statistic given a sample of size $n$ as $T_N^{(n)}=\sqrt{N}\bar{X}_N/S_n$
where $S_n$ is the sample standard deviation from the sample of size $n$. Then, for a one sided test,
$$Q_n=P\bigg(T_N^{(n)}\geq c\bigg),$$
for some critical value $c$,
where the $X_{n+1:N}$ are sampled from the normal distribution with zero mean and variance $S_n^2$.
Then, some straightforward algebra gives
$$Q_n=1-\Phi\left(\frac{c\sqrt{N}-\sqrt{n}T_n}{\sqrt{N-n}}\right),$$
for $n<N$ and 
$Q_N=1(T_N>c)$. While $(Q_n)$ is not a martingale, the joint distribution of $(Q_{n_0},\ldots,Q_n)$ does not depend on the unknown $\sigma$. We introduce $n_0$ here as the minimal interim analysis time so as to ensure suitable estimation of the unknown parameter. Theoretically $n_0$ could be 2 but in practice it would be taken to be larger.
Now $Q_n\geq\gamma$ is equivalent to
$T_n\geq c\sqrt{N/n}-\sqrt{N/n-1}\Phi^{-1}(1-\gamma)$.
We can easily find the connection between $\alpha$, the Type I error, and $\gamma$ using Monte Carlo sampling.

\begin{figure}
\center
\includegraphics[width=12cm,height=6cm]{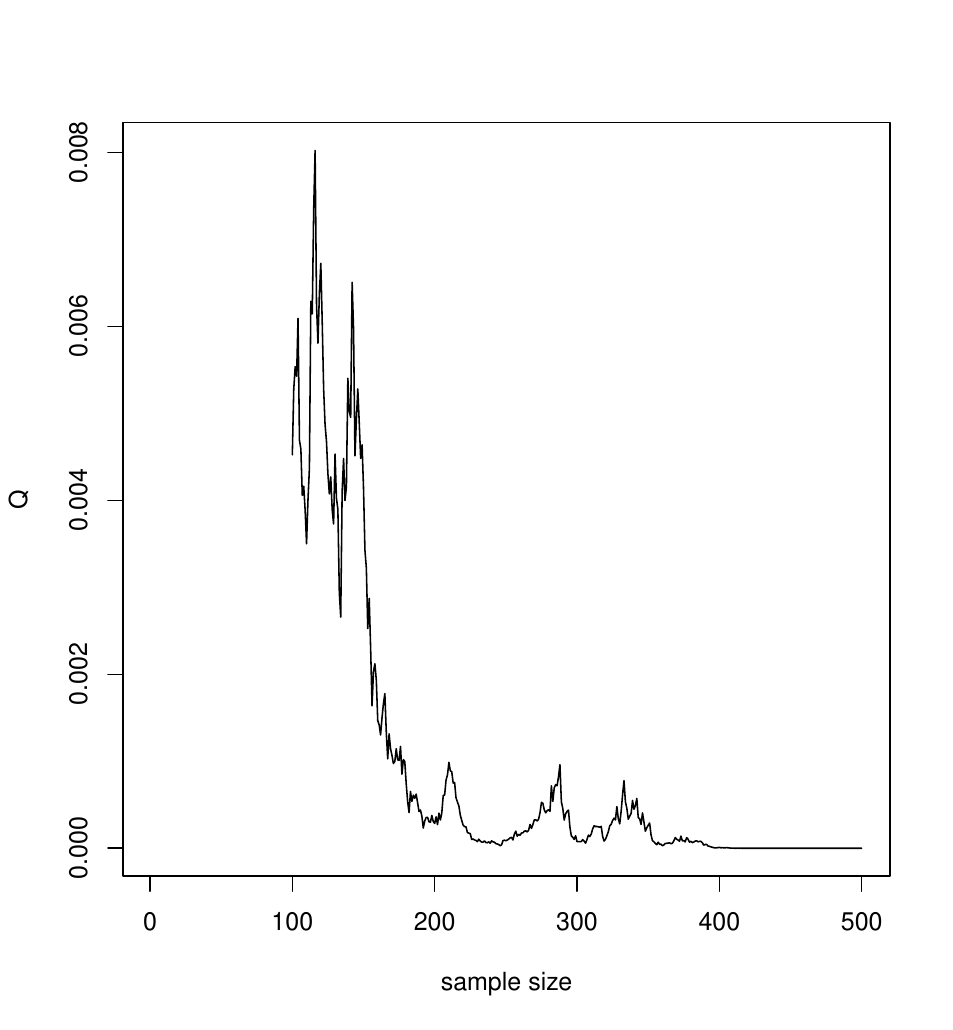}
\caption{Sample path of $(Q_n)_{n\geq 100}$ for Student $t$ test under null hypothesis}
\label{fig20}
\end{figure}

\begin{figure}
\center
\includegraphics[width=12cm,height=6cm]{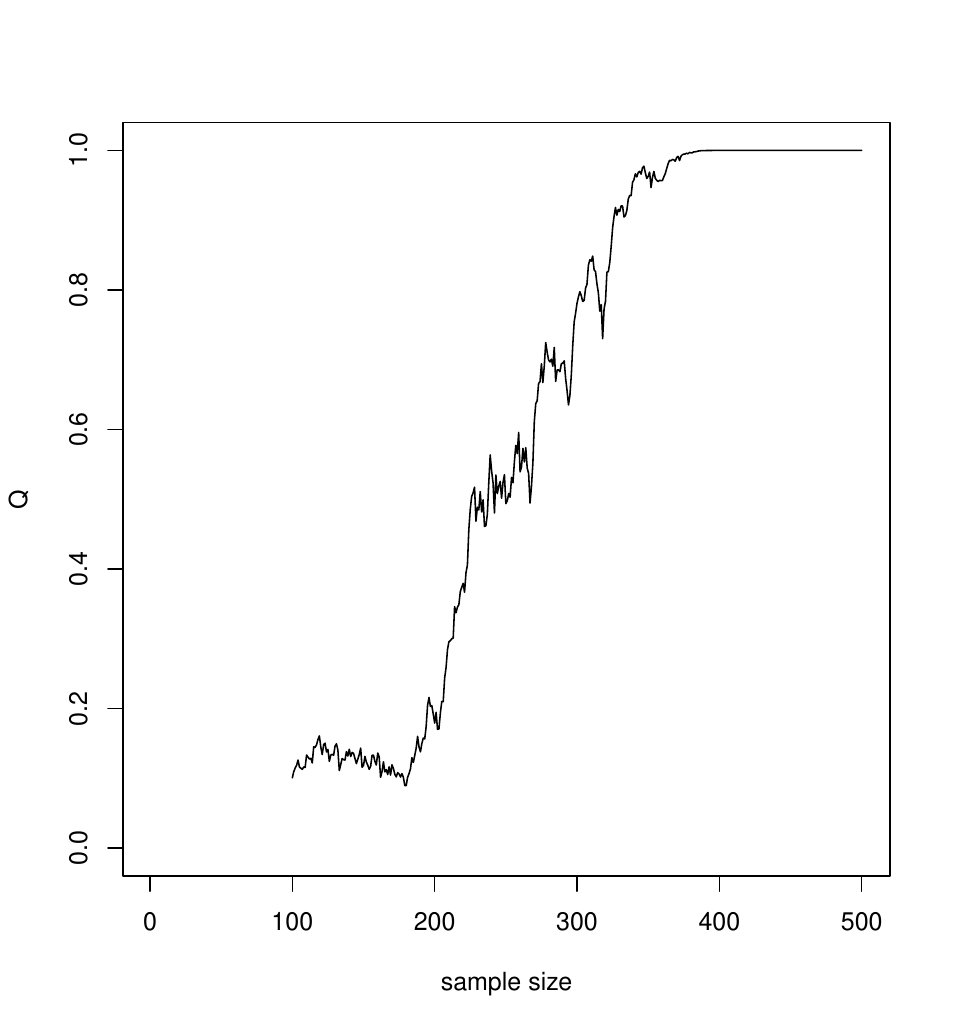}
\caption{Sample path of $(Q_n)_{n\geq 100}$ for Student $t$ test under alternative hypothesis with true mean 0.2}
\label{fig21}
\end{figure}

For example, taking $n_0=100$ and $N=500$ we can show that $\alpha=0.047$, i.e.
$P (\max_{n_0\leq n\leq N}Q_n\geq\gamma)=0.047$, using a Monte Carlo sample size of 10,0000,
when taking $c=\Phi^{-1}(1-0.05\gamma)$ and $\gamma=0.95$. This does not change with different true $\sigma$.
Fig.~\ref{fig20} and \ref{fig21} present sample paths of $(Q_n)$ for $n\geq 100$ with $N=500$, under the null hypothesis and alternative hypothesis (true mean being 0.2). The variance was taken to be 1 though the results will not depend on this value. 

To elaborate on the Monte Carlo sampling, for each Monte Carlo sample we generate $X_{1:N}$ to be i.i.d. from the normal distribution with mean 0 and variance 1. We then compute $T_{n_0:N}$ and then $Q_{n_0:N}$. These have the same joint distributions regardless of the choice of the variance as 1. We then check for $\max_{n_0\leq n\leq N}\{Q_n\}\geq\gamma$. Repetition allows for the Monte Carlo approximation to
$P(\max_{n_0\leq n\leq N}\{Q_n\}\geq\gamma)$. We could either fix $\alpha$ and find the corresponding $\gamma$ or alternatively we could fix $\gamma$ and check the probability is upper bounded by $\alpha$. The former would make for the more powerful test.

\subsection{Normal means with known variance}
Here we consider the test of two normal means, so the  $(X_n)$ are i.i.d. $N(\cdot\mid\theta_X,\sigma^2)$ and $(Y_n)$ are i.i.d. $N(\cdot\mid \theta_Y,\sigma^2)$ and the goal is to test $H_0:\theta_X=\theta_Y$ versus $H_1:\theta_X\ne \theta_Y$. We assume $\sigma$ is known and, without loss of generality, is equal to 1. The relevant statistic from a maximal sample size $N$ is $T_N=|\bar{X}_N-\bar{Y}_N|$ and the null hypothesis is rejected if $T_N\geq c$ for some $c>0$. Specifically, for a Type I error of $\wt$ it is that $c=z_{\wt/2}\sqrt{2/N}$.

We now define $Q_n=\mathbb{P}_0(T_N^{(n)}>c)$ where
$$T_N^{(n)}=\left|\frac{n(\bar{X}_n-\bar{Y}_n)+\sum_{i=n+1}^N Z_i}{N}\right|$$
and the $Z_{n+1:N}$ are independent normal variables with zero mean and variance 2.
Therefore,
$$Q_n=1-\Phi\left(\frac{cN-n(\bar{X}_n-\bar{Y}_n)}{\sqrt{2(N-n)}}\right)+\Phi\left(\frac{-cN-n(\bar{X}_n-\bar{Y}_n)}{\sqrt{2(N-n)}}\right).$$
It is straightforward to confirm that this is a martingale and hence we provide an anytime valid test with Type I error $\alpha$ for the two-sample problem based on rejecting the null hypothesis if ever $Q_n\geq\gamma$ with $\wt=\alpha\gamma$. 

\subsection{Normal means with unknown common variance}\label{sec:Normalmean}

In this subsection we consider a two sample normal mean test  assuming the common variance $\sigma^2$ is unknown. The maximal sample size test comes from the test statistic
$$T_N=\frac{\bar{X}_N-\bar{Y}_N}{S_N\sqrt{2/N}}$$
which under the null hypothesis $H_0:\theta_X=\theta_Y$
has a standard Student $t$ distribution with $2N-2$ degrees of freedom. Here $S_N$ is the pooled sample variance which for $2\leq n\leq N$ is defined by
$$S_n^2=(n-1)(S_{n,X}^2+S_{n,Y}^2)/(2(n-2))$$
where $S_{n,X}^2$ and $S_{n,Y}^2$ are the sample variance for the $X$ sample and $Y$ samples, respectively. 
The final maximal sample size test based on $(X_{1:N},Y_{1:N})$ is to reject the null hypothesis the means are identical if $|T_N|>c$ for some $c$ which will determine the Type I error.


We define the predicted statistic $T_N^{(n)}$ similar to how it was done in the one sample Student $t$ test, i.e.
$$T_N^{(n)}=\frac{\sqrt{N}(\bar{X}_N-\bar{Y}_N)}{S_n\sqrt{2}}$$
which can be written as
$$T_N^{(n)}= \frac{\sqrt{1/N}\{n\bar{X}_n-n\bar{Y}_n+\sum_{i=n+1}^N(X'_i-Y'_i)\} }{S_n\sqrt{2}}$$
where the $(X'_{n+1:N})$ and $(Y'_{n+1:N})$ are i.i.d. normal with mean 0 and variance $S_n^2$.  Hence,
$$T_N^{(n)}=\sqrt{n/N}\,T_n+\sqrt{1-n/N}\,Z$$
for some independent standard normal random variable.
We now define 
$Q_{n}=P\left(T_{N}^{(n)}\geq c\right)$ 
for $n\geq 2$, so that the $T_N^{(n)}$ is properly defined.
Then
$$Q_n=1-\Phi\left(\frac{c\sqrt{N}-\sqrt{n}T_n}{\sqrt{N-n}}\right),$$
for $n<N$ and $Q_N=1(T_N\geq c)$.
The evaluation of settings for the appropriate Type I error are as in section~\ref{sec:student}.


\begin{center}
\begin{figure}[!htbp]
\begin{center}
\includegraphics[width=12cm,height=5cm]{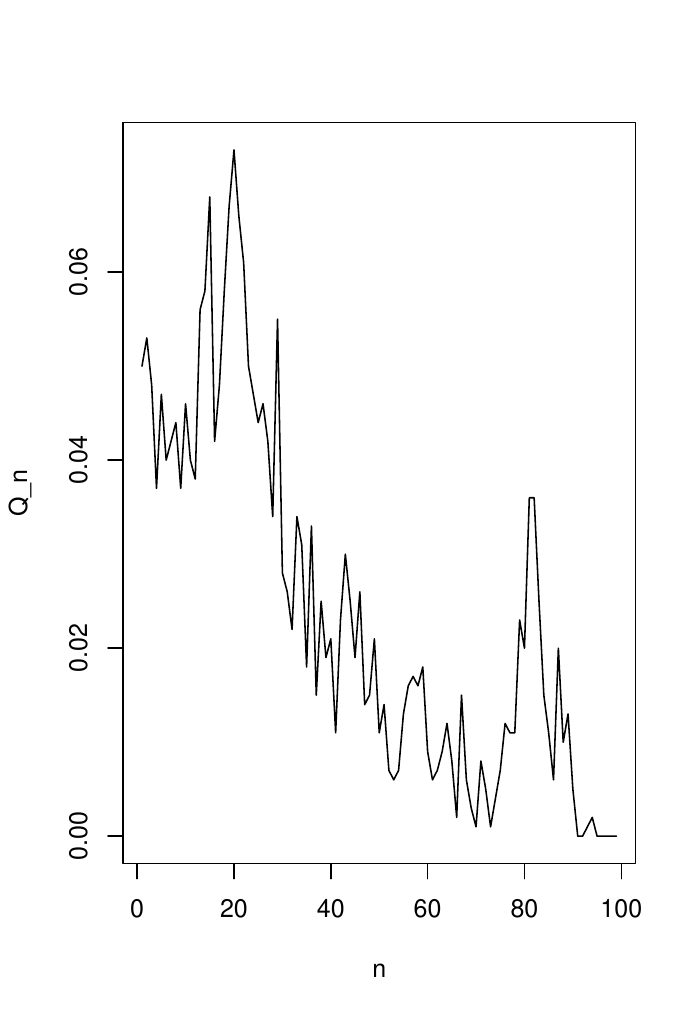}
\caption{Plot of $Q_n$ for normal means with unknown common variance under null hypothesis with the difference of means being 0.}
\label{fig24}
\end{center}
\end{figure}
\end{center} 

\begin{center}
\begin{figure}[!htbp]
\begin{center}
\includegraphics[width=12cm,height=5cm]{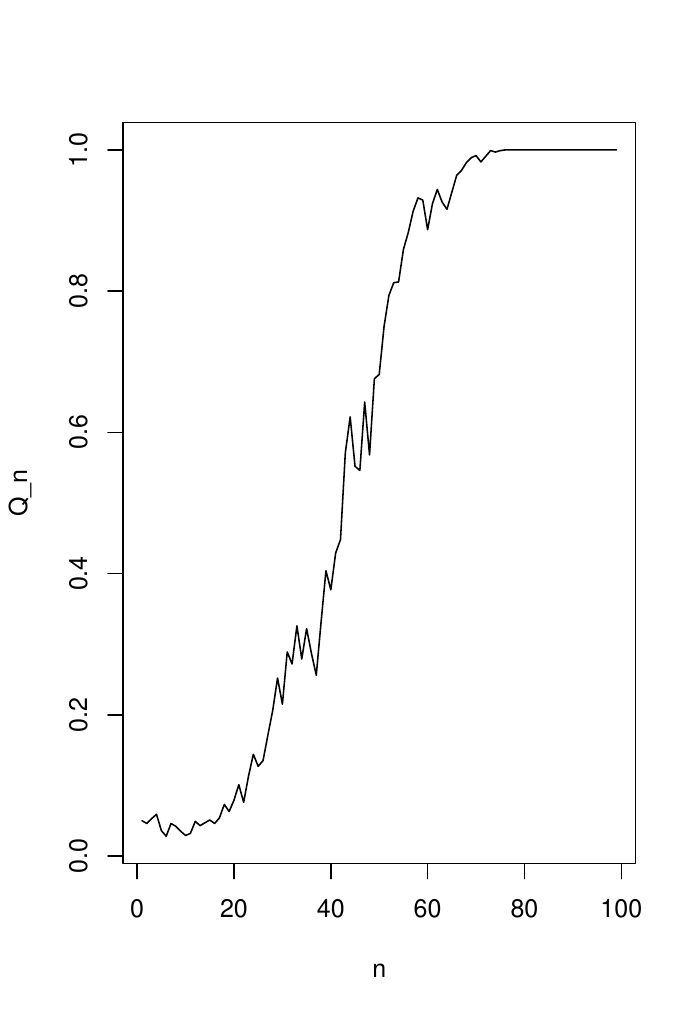}
\caption{Plot of $Q_n$ for normal means with unknown common variance under null hypothesis with difference of means being 1}
\label{fig25}
\end{center}
\end{figure}
\end{center} 

\noindent By means of illustration we take $N=100$ and set the Type I error to be $\wt=0.05$ which requires $c=2.05$. With both groups having standard normal distributions we compute $Q_n$ for $n=2,\ldots,N-1$. A sample path is provided in Fig.~\ref{fig24}.
For the alternative hypothesis we now take one of the groups to have a mean of 1, all other setting remaining as before. A sample path of $Q_n$ is presented in Fig.~\ref{fig25}.

\subsection{Asymptotic anytime valid test}

May fixed sample size tests rely on asymptotic distributions of statistics. In this section we look at the MLE. As is well known, under mild regularity conditions it is that $\widehat\theta_n\sim N(\theta,\sigma^2_n)$ where $\sigma^2_n=I(\theta)^{-1}/n$ and $I(\theta)$ is the Fisher information. If the test is $H_0:\theta=\theta_0$ we need to be able to predict the $\widehat\theta_N$ conditional on $\widehat\theta_n$.
Asymptotically, $\mbox{Cov}(\widehat\theta_n,\widehat\theta_N)=\sigma^2_N$ and so asymptotically, based on standard bivariate normal theory, it is that
$$\widehat\theta_N=\theta_0(1-n/N)+(n/N)\widehat\theta_n+N^{-1}\sqrt{N-n}Z/\sqrt{I(\theta_0)},$$
where $Z$ is an independent standard normal variable. So we would take $T_n=\widehat\theta_n$ and $T_N^{(n)}=\widehat\theta_N$ as given above.
Then we define
$$Q_n=P\left(T_N^{(n)}\geq c\right)=1-\Phi\left(\frac{(cN-\theta_0(N-n)-n\,T_n)\sqrt{I(\theta_0)}}{\sqrt{N-n}}\right).$$
Here the sequence $(Q_n)$ will not be a martingale but as has been previously highlighted we can easily find the $\gamma$ for which
$P(\max_{n_0\leq n\leq N}Q_n\geq\gamma)=\alpha$ using Monte Carlo methods.
The practical application to determine $\gamma$ involves running multiple Monte Carlo runs, one of which would be to sample $X_{1:N}$ to be i.i.d. from $f(\cdot\mid\theta_0)$ and from this to construct the sequence $(T_n)$ from which sequence $(Q_n)$ follows. It is then easy to check on the event  $\{\max_{n\leq N} Q_n\geq\gamma\}$.

For illustration, we use the Poisson distribution and test the hypothesis the mean value is 1.
Hence, taking $N=500$, for a single Monte Carlo run, we sample $X_{1:N}$ and compute $Q_n$. Note the $I(\theta_0)=1/\theta_0=1$ in this case. Over the full Monte Carlo runs for the  $(Q_n)$  we approximate $p(\gamma)=P(\max_{n\leq N} Q_n\geq\gamma)$ for $0<\gamma<1$.
This is a decreasing function and taking $c$ such that the Type I error for the fixed sample size $N$ test is $0.95\times 0.05$; i.e.
$c=\Phi^{-1}(1-0.95\times 0.05)/\sqrt{N}$, we find the $\gamma$, which yields a Type I error of $\alpha=0.05$, is $\gamma=0.90$ and note that this is smaller than the value which we would have obtained (if we could) using the martingale inequality, which would be 0.95. We took the Monte Carlo sample size to be 10,000.

\subsection{Nonparametric model}

To demonstrate the generality of predictive sampling anytime valid tests, we show how  to perform a nonparametric test. 
Here we consider the test of the type $H_0:F=F_0$ where $F_0$ is a specified distribution function.
Many tests rely on comparing the empirical distribution function $F_n$ with $F_0$, such as the Kolmogorov-Smirnov test which rejects the null hypothesis if $T_N=\sup_x|F_N(x)-F_0(x)|\geq c$, for some $c$ which determines the Type I error.
We define
$$F_N^{(n)}(x)=\frac{nF_n(x)+(N-n)\,F'_{N-n}(x)}{N}$$
where the $F_n$ is observed and the $F'_{N-n}$ is generated from $X'_{n+1:N}$ which are coming i.i.d. from $F_0$. 
This defines the predictive statistic $T_N^{(n)}$. Then
$$Q_n=P\left(\sup_x|F_N^{(n)}(x)-F_0(x)|\geq c\right).$$
It is straightforward to show that $(Q_n)$ is a martingale under the null hypothesis. 
It is also straightforward to sample $(Q_n)$ to allow for a Monte Carlo evaluation of the $\gamma$ for which
$P(\max_{n\leq N} Q_n\geq\gamma)=\alpha$ which will lead to a more powerful anytime valid test compared with the $\gamma$ determine via the martingale inequality.

\subsection{Bernoulli outcomes}\label{sec:Bernoulli}

For this test, to be observed are $X_{1:N}$ and $Y_{1:N}$ which are Bernoulli random variables and the null hypothesis is that the mean is common to both groups. The $N$ sample test is based on the statistic
$$T_N=\frac{|S_{NX}-S_{NY}|}{\sqrt{2N\widehat{p}_N(1-\widehat{p}_N)}},$$
where $P_N$ is the common estimator; i.e.
$\widehat{p}_N=(S_{NX}+S_{NY})/(2N)$ and $S_{NX}=\sum_{1=1:N}X_i$ and $S_{NY}=\sum_{i=1:N}Y_i$. The null hypothesis is rejected if $T_N>c$ which will be chosen to ensure a suitable Type I error, e.g. $\alpha=0.05$ 

Similar to subsection \ref{sec:Normalmean} we define the predicted test statistic at sample size $N$ given current sample size $n$ by
$$T_N^{(n)}=\frac{S_{NX}-S_{NY}}{\sqrt{2N\widehat{p}_n(1-\widehat{p}_n})}$$ 
where
$$S_{NX}-S_{NY}=S_{nX}-S_{nY}+\sum_{i=n+1}^N (X'_i-Y'_i)$$
and the $X'_{n+1:N}$ and $Y'_{n+1:N}$ are i.i.d. Bernoulli with parameter $\widehat{p}_n$.
Hence, using the normal approximation motivating the test statistic and critical values, we have
$$T_N^{(n)}=\sqrt{n/N}\,T_n+\sqrt{1-n/N}\,Z$$
for some independent standard normal random variable. 
As usual we define
$Q_n=P(T_N^{(n)}\geq c)$
which is given by
$$Q_n=1-\Phi\left(\frac{c\sqrt{N}-\sqrt{n}\,T_n}{\sqrt{N-n}}\right),$$
with
$Q_N=1(T_N\geq c)$.  While $(Q_n)$ is not a martingale it is possible to determine a critical value for rejecting the null hypothesis with Type I error $\alpha$ using Monte Carlo methods; see section~\ref{sec:student}.


\subsection{Group means for binary data}

In this section we consider two groups where each individual provides a binary outcome. The test statistic is given by
$$T_{N,M}=\frac{\widehat{p}_X-\widehat{p}_Y}{\sqrt{\widehat{p}(1-\widehat{p})(1/N+1/M)}}$$
where $\widehat{p}_X=n_X/N$, $\widehat{p}_Y=n_Y/M$ and 
$\widehat{p}=(n_X+n_Y)/(N+M)$ and  $n_X$ and $n_Y$ are the number of successes from group $X$ and group $Y$, respectively. Under the null hypothesis that $p_X=p_Y$ the $T_{N,M}$ statistic is approximately standard normal.
After time $t_j$ assume there are $N_j$ observations from group $X$ 
and $M_j$ observations from group $Y$ with the number of successes as $n_{Xj}$ and $n_{Yj}$, respectively.
We then estimate the pooled mean as
$$\widehat{p}_j=\frac{n_{Xj}+n_{Yj}}{N_j+M_j}.$$
We now define 
$T_{N,M}^{(j)}$ to be the predicted $T_{N,M}$ at stage $j$ by taking
$X'_{N_j+1:N}$ to be i.i.d. Bernoulli with probability $\widehat{p}_j$ and likewise for $Y'_{M_j+1:M}$.  This gives us a full data set for which we can obtain the prediction for $T_{N,M}$. Repeated simulation using Monte Carlo methods gives
an estimator for $Q_j=P(T_{N,M}^{(j)}\geq c)$.

\begin{figure}
\center
\includegraphics[width=12cm,height=6cm]{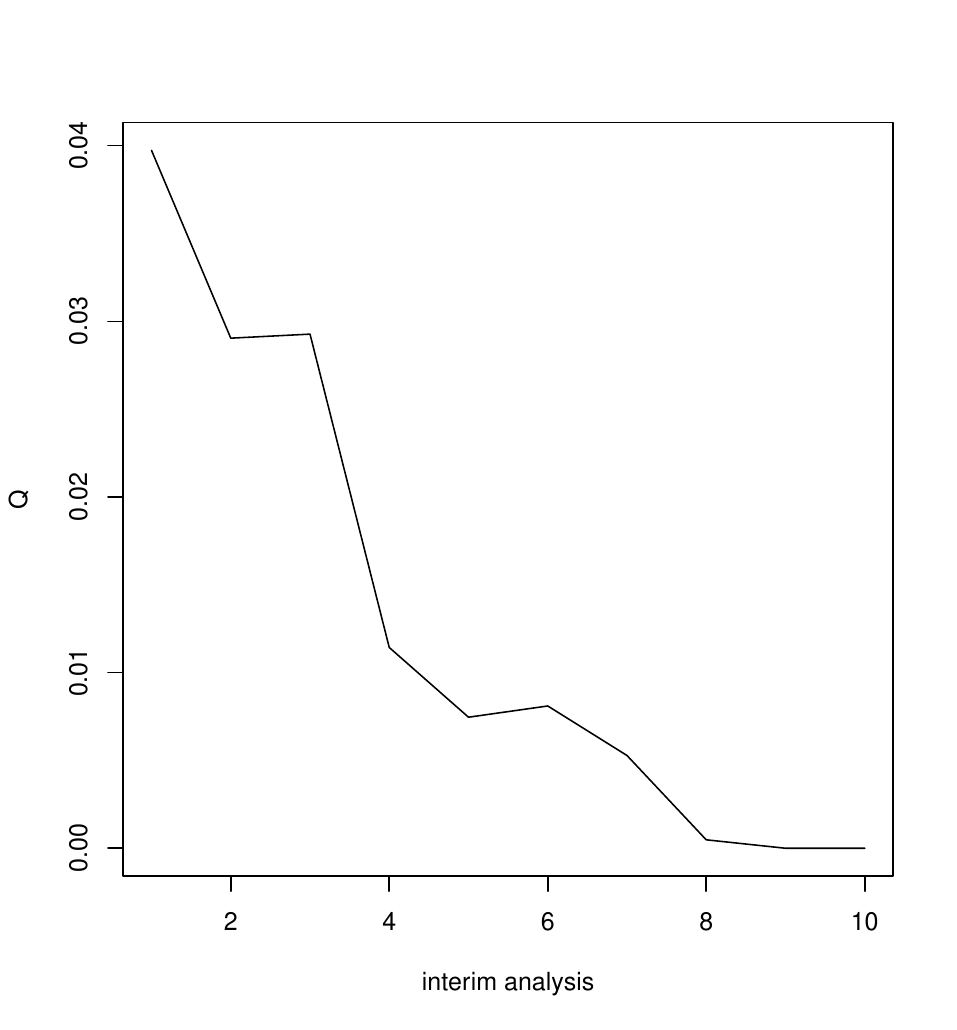}
\caption{Sample path of $Q_j$ with $p_X=p_Y=0.3$}
\label{fig18}
\end{figure}

\begin{figure}
\center
\includegraphics[width=12cm,height=6cm]{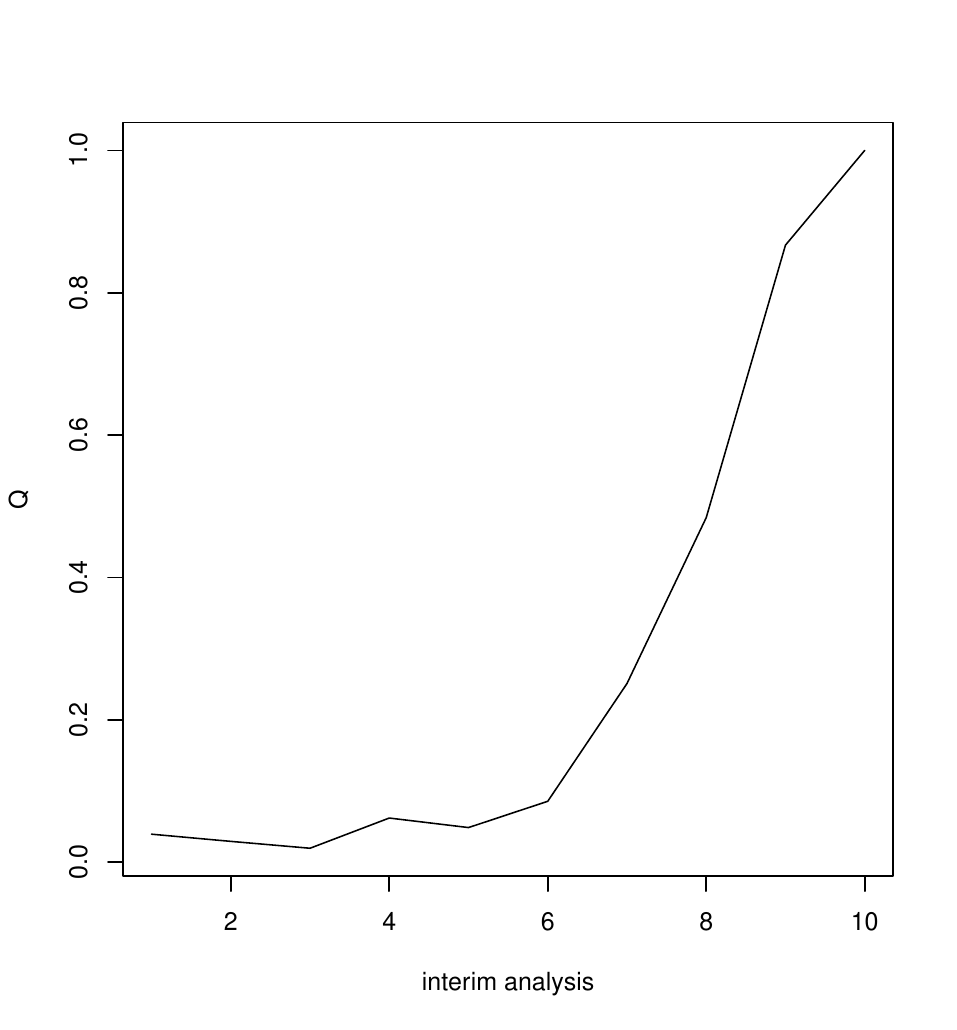}
\caption{Sample path of $Q_j$ with $p_X=0.3$ and $p_Y=0.4$}
\label{fig19}
\end{figure}

For an illustration using simulated data we took $N=M=250$ and the times for interim analysis were $t_j=j/10$ for $j=1,\ldots,10$ while the observation times for the 500 individuals were generated uniformly on $(0,1)$. At each $t_j$ the missing observations are sampled using the current pooled estimator $\widehat{p}_j$. From the completed data we then compute the predicted $T_{N,M}$ statistics. Repeated simulation of this leads to $Q_j$ where we take $c=\Phi^{-1}(1-\alpha/2)$ with $\alpha=0.05$. 
A sample path for $(Q_j)$ with $p_X=p_Y=0.3$ is presented in Fig.~\ref{fig18} while a path with $p_X=0.3$ and $p_Y=0.4$ is presented in Fig.~\ref{fig19}.

\subsection{Test based on Wilks statistics}

Here we define the predictive statistic $T_N^{(n)}=T(\theta(X_{1:n},X'_{n+1:N}),\widehat\phi_n)$ where $X'_{n+1:N}$ 
come from the distribution defined by the null hypothesis with nuisance parameter estimated by $\widehat\phi_n$ 
and $\widehat\phi_n=\phi(X_{1:n})$. We then define
$Q_n=P(T_N^{(n)}\in C)$. So now $Q_n$ is not a martingale but can be evaluated for all $n$ and Monte Carlo methods can be used to find the $\gamma$ for which $P(\max_{n\leq N} Q_n\geq\gamma)=\alpha$.

A general approach to constructing anytime valid tests involves Wilks' statistics (\cite{Wilks1938}). The statistic is given by
$$T_N=2\sum_{i=1}^N\log \frac{f(X_i\mid\widehat\theta_N,\widehat\phi_N)}{f(X_i\mid \theta_0,\widehat\phi_N)}$$
for hypothesis $H_0:\theta=\theta_0$ in the presence of nuisance parameter $\phi$ which is estimated via the MLE. The distribution of $T_N$ is approximately $\chi^2_p$ where $p$ is the dimension of $\theta$.
We define $T_N^{(n)}$ using the current estimator for $\phi$, i.e. $\widehat\phi_n$. Hence, we take $X'_{n+1:N}$ to be i.i.d. from $f(\cdot\mid\theta_0,\widehat\phi_n)$ and define
$$T_N^{(n)}=2\sum_{i=1}^N\log \frac{f(X_i\mid\widehat\theta_N,\widehat\phi_n)}{f(X_i\mid \theta_0,\widehat\phi_n)}$$
where $\widehat{\theta}_N$ is evaluated using $(X_{1:n},X'_{n+1:N})$. Then we define
$Q_n=P(T_N^{(n)}\geq c)$
where $c=F_{\chi^2_p}^{-1}(1-\alpha\gamma)$
and the null hypothesis is rejected if ever $Q_n\geq\gamma$. Since the $(Q_n)$ does not form a martingale we evaluate $\gamma$ using Monte Carlo methods.


\begin{figure}
\center
\includegraphics[width=12cm,height=6cm]{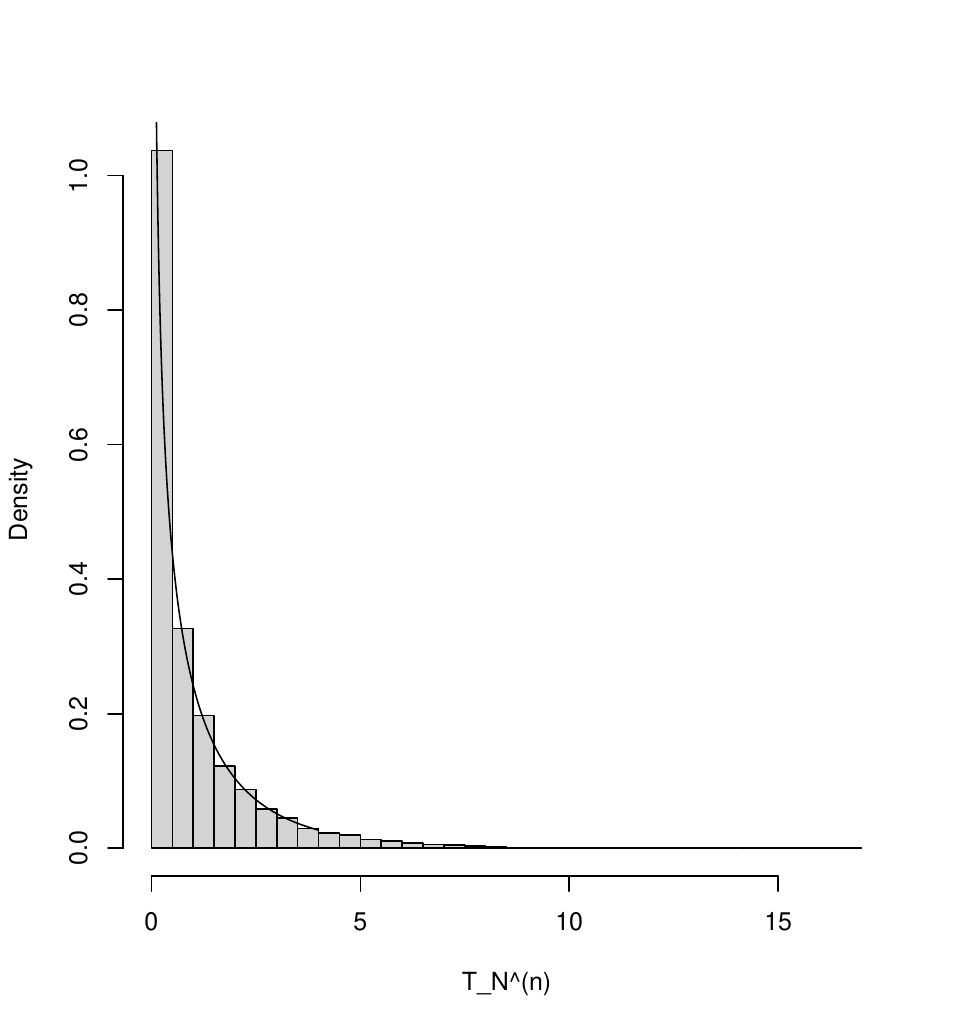}
\caption{Comparison of samples from $T_N^{(n)}$ for predictive Wilks statistic with the $\chi_1^2$ density function}
\label{fig17}
\end{figure}

To look at how good the predictive statistic is, we compare the distribution of the $T_N^{(n)}$ statistic with the $\chi^2_1$ distribution. The comparison is provided in Fig.~\ref{fig17}. The relevant summaries from 10,000 $T_N^{(n)}$ samples, with $N=500$ and $n=250,$ are that the sample mean is 1.007, the true value should be 1, and the sample variance is 2.05, the true value being 2.

\subsection{Composite null hypotheses}\label{sec:Composite} 


\subsubsection{Normal mean}  We start with a normal mean example where the variance is known (and without loss of generality is taken to be 1) and the hull hypothesis is $H_0:\theta\leq \theta_0$ versus $H_1:\theta>\theta_0$. Also  without loss of generality we take $\theta_0=0$. 
The statistic at sample size $N$ is given by $T_N=\sqrt{N}\bar{X}_N$ and the critical value is given by $c=\Phi^{-1}(1-\alpha)$ since under the hull hypothesis it is that
$P(T_N\geq c)\leq\alpha$.
We define
$$T_N^{(n)}=\left(\sum_{i=1}^n X_i+\sum_{i=n+1}^N X'_i\right)/\sqrt{N},$$
where the $X'_{n+1:N}$ are i.i.d. normal with mean $0$ and variance 1. So
$Q_n=P(T_N^{(n)}\geq c)$ for $c=\Phi^{-1}(1-\wt)$ for some $\wt<\alpha$.
 
The aim now is to show that $(Q_n)$ is a supermartingale. To this end, consider
$$Q_{n+1}=P\left(\sum_{i=1:n}X_i+X_{n+1}+\sum_{i=n+1:N}X'_i\geq c \Big| X_{1:n+1}\right),$$
which we now take the expectation of, conditional on $X_{1:n}$,
with respect to the null hypothesis and the correct distribution within the null constraint. Regardless of which one this is, the expectation is upper bounded by
$$Q_n=P\left(\sum_{i=1:n} X_i+X'_{n+1}+\sum_{i=n+1:N}X'_i\geq c \Big|  X_{1:n}\right).$$ 
Therefore, $E(Q_{n+1}\mid X_{1:n})\leq Q_n$ and so we recover the same result as with the martingale, i.e. if we reject the null hypothesis whenever $Q_n\geq \gamma$, we do so with a Type I error bounded by $\wt/\gamma$.

If we wish to avoid the martingale approach and the approximations that come with it, we can define the predicted statistic
$$T_N^{(n)}=\left(n\bar{X}_n+\sum_{i=n+1}^N X'_i\right)/\sqrt{N}$$
with the $X'_{n+1:N}$ taken to be i.i.d. normal with mean $\widetilde{X}_n=\min\{0,\bar{X}_n\}$, which is the estimator of $\theta$ restricted to $H_0$, and variance 1. Then
$$\sqrt{N}\,T_N^{(n)}=\sqrt{n}T_n+(N-n)\widetilde{X}_n+\sqrt{N-n}Z$$
for some independent standard normal random variable $Z$. So
$$Q_n=1-\Phi\left(\frac{c_{\wt}\sqrt{N}-\sqrt{n}T_n-(N-n)\widetilde{X}_n}{\sqrt{N-n}}\right),$$
where $c_{\wt}$ is so that $P(T_N\geq c_{\wt})=\wt$.
Here $(Q_n)$ is not a martingale.

\subsubsection{Bayes factors}\label{sec:BayesF}

For the model $f(x\mid\theta)$ we consider the null hypothesis $H_0:\theta\in\Theta_0$ versus $H_1:\theta\in\Theta_1$. For this type of hypothesis, a Bayesian would proceed by assigning prior distributions $\pi_0$ and $\pi_1$ restricted to $\Theta_0$ and $\Theta_1$, respectively. With the full maximal sample of size $N$ the Bayes factor is given by
$$B_N=\frac{\int \prod_{i=1}^N f(X_i\mid\theta)\,\pi_1(\theta)\,d\theta}{\int \prod_{i=1}^N f(X_i\mid\theta)\,\pi_0(\theta)\,d\theta}$$
and the null hypothesis is rejected if $B_N>c$. The idea here is that the sequence $(X_{1:N})$ is exchangeable with prior $\pi_0$ 
under the null hypothesis. Hence, we define, in the usual way,
the sample size $n$ Bayes factor
$$B_N^{(n)}=\frac{\int \prod_{i=1}^n f(X_i\mid\theta)\prod_{i=n+1}^N f(X'_i\mid\theta)\,\pi_1(\theta)\,d\theta}{\int \prod_{i=1}^n f(X_i\mid\theta)\prod_{i=n+1}^N f(X'_i\mid\theta)\,\,\pi_0(\theta)\,d\theta},$$
where the $X_{1:n}$ are observed and the $(X'_{n+1:N})$ are from the
predictive model $p_0(X_{n+1:N}\mid X_{1:n})$ given by
$$p_0(X_{n+1:N}\mid X_{1:n})=\int \prod_{i=n+1}^N f(X_i\mid\theta)\,\pi_0(\theta\mid X_{1:n})\,d\theta$$
where $\pi_0(\theta\mid X_{1:n})$ is the usual Bayesian posterior distribution.
Under the null hypothesis, the sequence $(B_N^{(n)})$ is a martingale and so also will be 
$Q_n=P(B_N^{(n)}\geq c)$. Hence, the anytime valid test procedure rejects $H_0$ whenever $Q_n\geq \gamma$. 

\subsubsection{Classical tests}\label{sec:Classic}

Here we consider the same test as in subsection \ref{sec:BayesF} 
but without the use of prior distributions. Instead, we rely on MLEs. For a sample of size $N$ the outcome of a test for $\theta\in\Theta_0$ will rely on the value of 
$$T_N=\prod_{i=1}^N f(X_i\mid\widetilde\theta_N)/f(X_i\mid\widehat\theta_N)$$
where $\widehat\theta_N$ is the MLE restricted to lie in $\Theta_0$ and $\widetilde\theta_N$ is the MLE constrained to lie in $\Theta_1$. 

The appropriate null hypothesis can be adapted to
\begin{equation}\label{MLEhyp}
H_0:X_{n+1}\sim f(\cdot\mid\widehat\theta_n)\quad n=1,\ldots,N-1,
\end{equation}
where the sequence $(\widehat\theta_n)$ are the MLE constrained to lie in $\Theta_0$ from the $X_{1:n}$.
We view this hypothesis as equivalent to the aims of the hypothesis $H_0:\theta\in\Theta_0$. 
Indeed, \cite{ramdas2023game} used such a framework of relying on predictive densities conditional on plug-in estimators for e-values, following original work in \cite{Wasserman2020}. The key is that martingales are preserved and allow for a composite style null hypothesis.

Under the null hypothesis the sequence $(Q_n)_{n>1}$, given by
$Q_n=P(T_N^{(n)}\in C)$, where 
$$T_N^{(n)}=\frac{\prod_{i=1}^n f(X_i\mid \widetilde\theta_{i-1})\prod_{i=n+1}^N f(X'_i\mid\widetilde\theta_{i-1})}{
\prod_{i=1}^n f(X_i\mid \widehat\theta_{i-1})\prod_{i=n+1}^N f(X'_i\mid\widehat\theta_{i-1})},$$
is a martingale. Here, for $m>n$, 
$$\widehat\theta_m=\widehat\theta(X_{1:m},X'_{m+1:N})\quad\mbox{and}\quad \widetilde\theta_m=\widetilde\theta(X_{1:m},X'_{m+1:N}),$$
where the $X'_{i}\sim f(\cdot\mid \widehat\theta_{i-1})$ for $i=n+1,\ldots,N$, and $\widetilde\theta_0=\widehat\theta_0$ is taken as a boundary value between $\Theta_0$ and $\Theta_1$.  

Here we present an illustration of this framework with a normal mean with known variance 1 and look at $H_0:\theta\leq 0$ versus $H_1:\theta>0$. We recast the hypothesis as in (\ref{MLEhyp}).
In this case we do not define $T_N^{(1)}$ and for $n=2,\ldots,N-1$ we define
$$T_N^{(n)}=\prod_{i=1}^n \frac{N(X_i\mid\widetilde\theta_{i-1},1)}{N(X_i\mid \widehat\theta_{i-1},1)}\,\prod_{i=n+1}^N\frac{N(X'_i\mid\widetilde\theta_{i-1},1)}{N(X'_i\mid\widehat\theta_{i-1},1)},$$
where 
$$\widetilde\theta_n=\max\{0,\bar{X}_n\}\quad\mbox{and}\quad \widehat\theta_n=\min\{0,\bar{X}_n\},$$
and $X'_i\sim N(\widehat\theta_{i-1},1)$ for $i>n$. Under the null hypothesis it is that $(T_N^{(n)})_{n>1}$ is a martingale with expectation 1 and so we can therefore reject the null hypothesis with Type I error bounded above by $\wt$ whenever $T_{N}^{(n)}>1/\wt$, $n=2,\ldots,N-1$. This would be the e-process approach. For our test we would compute
$Q_n=P(T_N^{(n)}\geq 1/\wt\mid X_{1:n}).$ This we can do using Monte Carlo methods, which for a given $n$  involves multiple sampling of $X'_{n+1:N}$ and for each one to see if the randomly generated $T_N^{(n)}$ is larger or smaller than $1/\wt$.

\begin{center}
\begin{figure}[!htbp]
\begin{center}
\includegraphics[width=12cm,height=5cm]{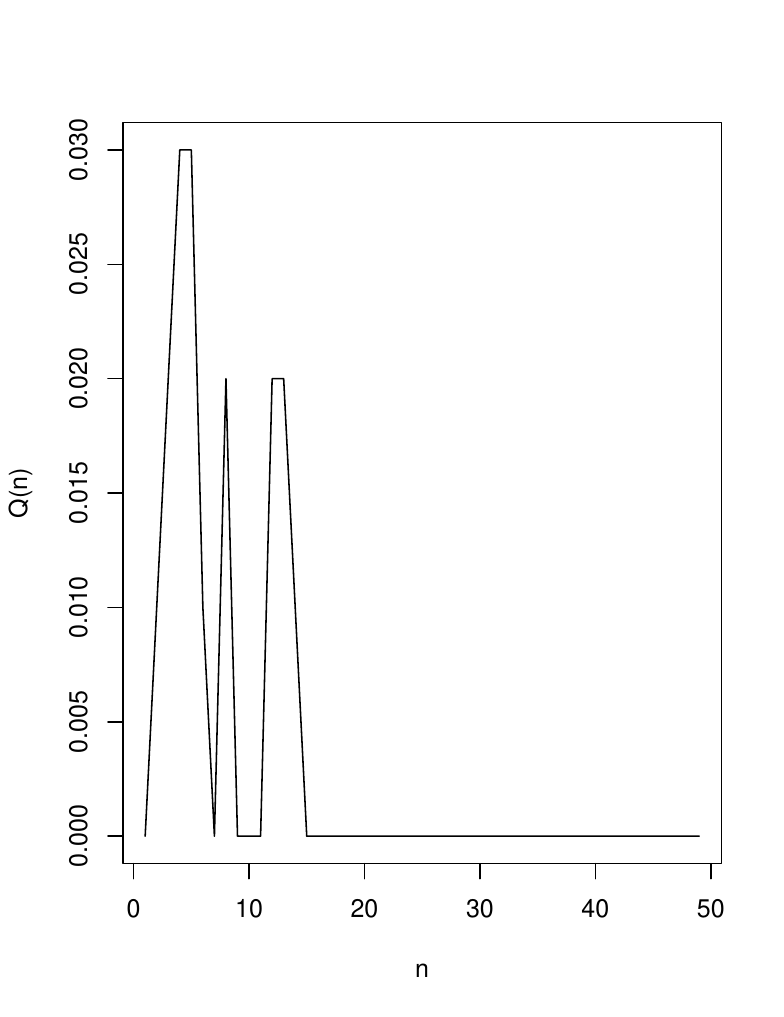}
\caption{$Q_n$ for MLE based sequential test using predictive sampling with true mean $-\half$}
\label{figa13}
\end{center}
\end{figure}
\end{center} 

\begin{center}
\begin{figure}[!htbp]
\begin{center}
\includegraphics[width=12cm,height=5cm]{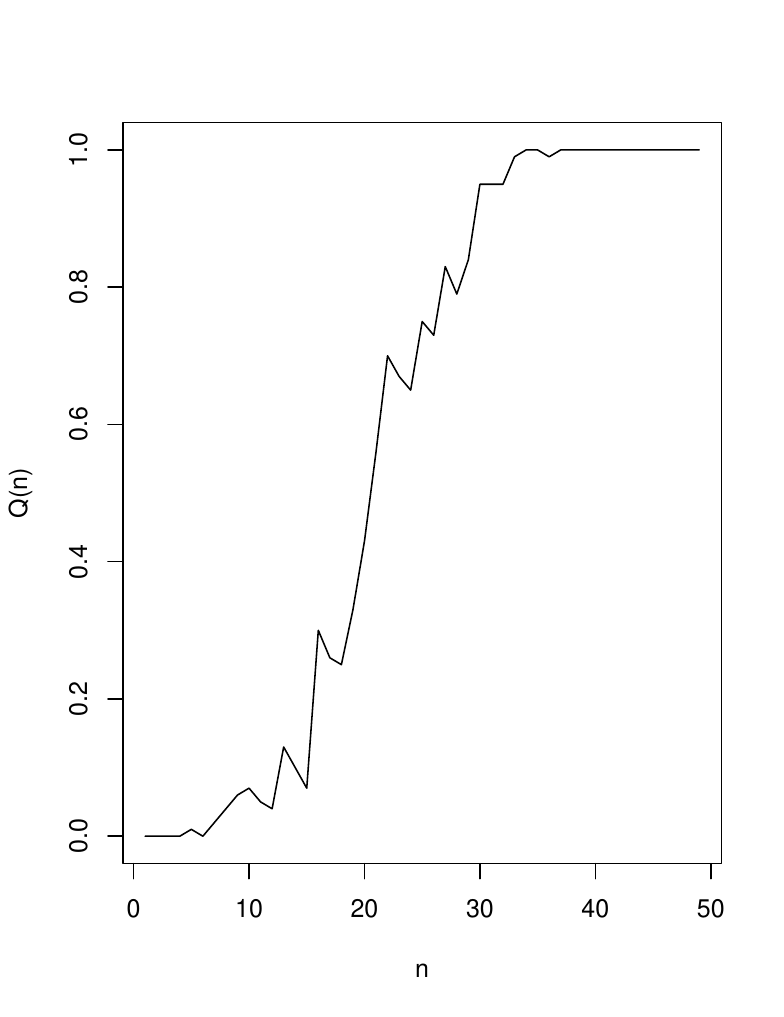}
\caption{$Q_n$ for MLE based sequential test using predictive sampling with true mean $\half$}
\label{figa14}
\end{center}
\end{figure}
\end{center} 

\noindent
We implemented this strategy with $N=50$ and with a true value of the normal mean as $-\half$ and chose $\wt=1/5$. A plot of $Q_n$ versus $n$ is presented in Fig.~\ref{figa13}. For comparison Fig.~\ref{figa14} presents the corresponding $Q_n$ with the true mean as $\half$.

\subsection{Time to event trial}\label{sec:Event}

In this section we consider a different theme based on a trial in which two groups have been allocated different 
treatments (e.g. placebo and active) and the observations are times to event. Each group has $N$ individuals at the start.

\subsubsection{Discrete time}

In this case time to event is recorded at discrete times represented for ease of notation as $t=1,2,\ldots,T$, where $T$ is a fixed horizon.
So each individual provides a time to event or no time, i.e. that individual's event time is beyond $T$. We represent the event times by $(n_{01},n_{02},\ldots,n_{0T})$ for group 0 and $(n_{11},n_{12},\ldots,n_{1T})$ for group 1, where $n_{jt}$ is the number of individuals in group $j\in\{0,1\}$ who experience event time at $t$. 
We then consider the sequences $(p_{0t},p_{1t})$ where, for all $t$ and $j\in\{0,1\}$,
$p_{jt}=n_{jt}/m_{jt}$ and $m_{jt}=N-\sum_{s<t}n_{js}$ for $t>1$ and $p_{1t}=n_{1t}/N$. The null hypothesis $H_0$ is 
$$H_0:\,\,\,n_{0t}\equiv \mbox{Bin}(q_t,m_{0t})\quad\mbox{and}\quad n_{1t}\equiv Bin(q_t,m_{1t})\quad\mbox{independently}$$
for some unknown set of probabilities $(q_t)$.

The test at time $T$ would be based on some measure of distance between the two vectors $p_0=(p_{01},\ldots,p_{0T})$ and $p_1=(p_{11},\ldots,p_{1T})$, such
as reject $H_0$ if
$$S_T=\sum_{t=1}^T |p_{0t}-p_{1T}|\geq c.$$
To find the value of $c$ for which $P(S_T\geq c)\leq \wt$ we can take $q=\half$, since this value provides the largest stochastic value for $S_T$. The value of $c$ can be found from simulation by finding the appropriate quantile of a large sample of $S_T$ with $q=\half$. For example, with $N=50$, $T=10$ and $\wt=0.05$, it is that $c=3.52$.

We take, for $t=1,\ldots,T-1$,
$Q_t=P(S_T\geq c\mid {\cal F}_t)$
where ${\cal F}_t$ is the information up to and including time $t$. Explicitly,
$$Q_t=P\left(\sum_{s=1}^t |p_{0s}-p_{1s}|+\sum_{s=t+1}^T|p'_{0s}-p'_{1s}|\geq c\Big| {\cal F}_t\right),$$
where the $p'_{0\,t+1:T}$ and $p'_{1\,t+1:T}$ are derived independently from binomial distributions with $q=\half$. 
Further, $Q_T=1(S_T\geq c)$.

It can be shown that under the null hypothesis $(Q_t)$ is a sub-martingale and hence from the Doob sub-martingale inequality,
$$P(\max\{Q_1,\ldots,Q_{T}\}\geq \gamma)\leq E(Q_{T})/\gamma\leq \wt/\gamma$$
since $P(S_T\geq c)\leq\wt$.

\begin{center}
\begin{figure}[!htbp]
\begin{center}
\includegraphics[width=12cm,height=5cm]{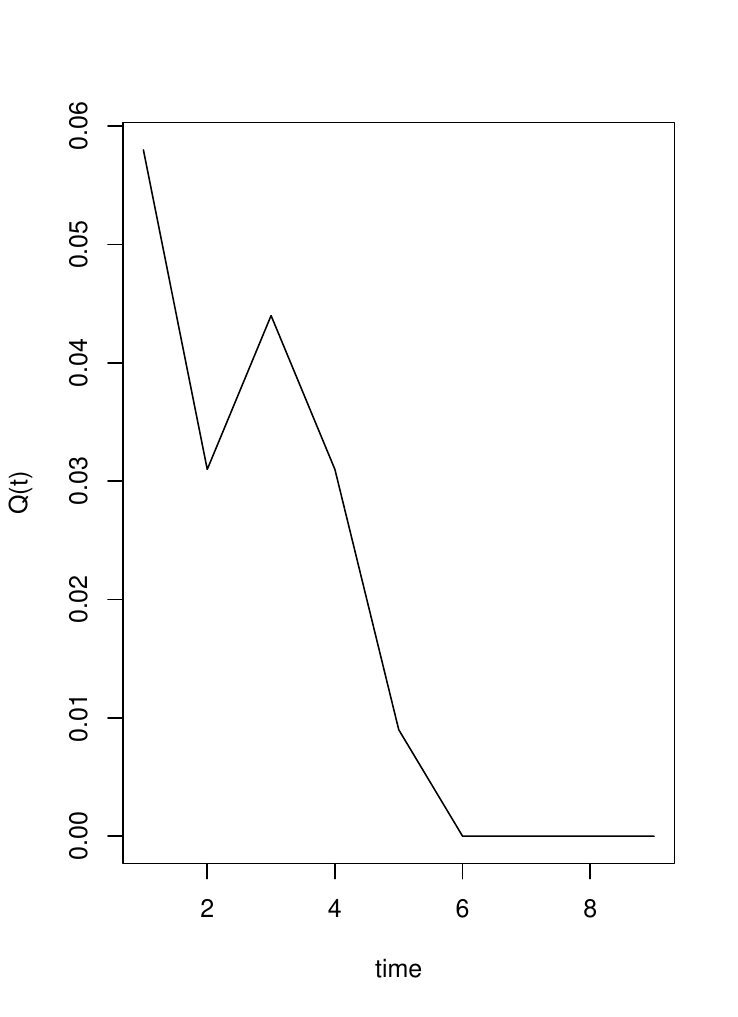}
\caption{Sample path of $Q_t$ under null hypothesis with $q_t=0.2$ for all $t$}
\label{figa16}
\end{center}
\end{figure}
\end{center} 

\begin{center}
\begin{figure}[!htbp]
\begin{center}
\includegraphics[width=12cm,height=5cm]{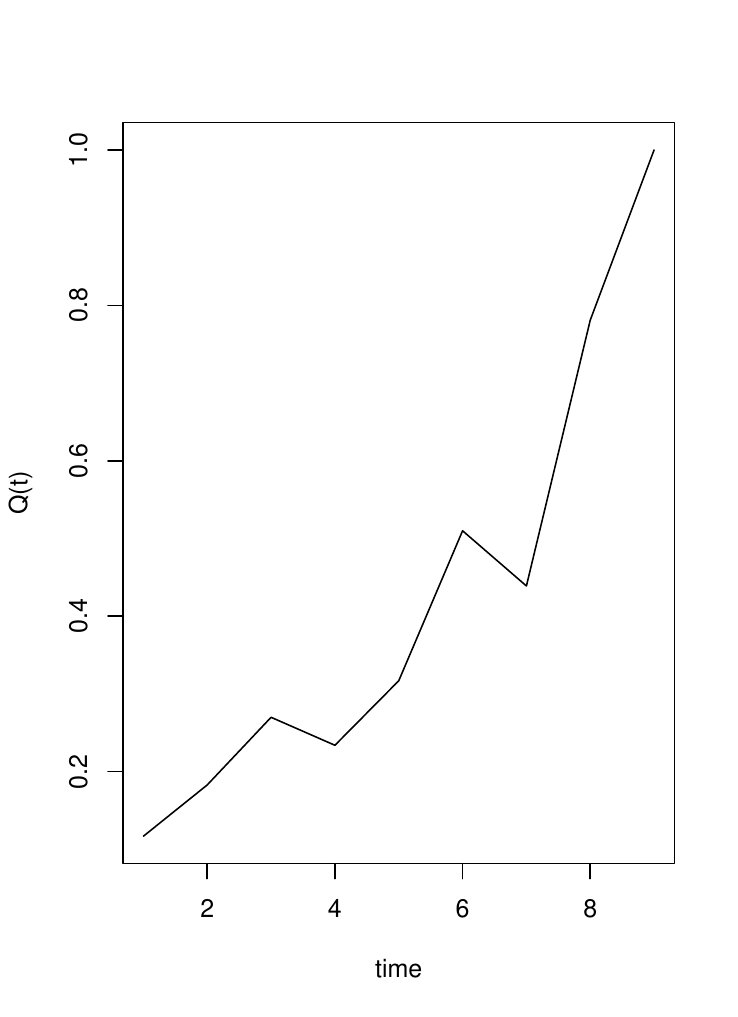}
\caption{Sample path of $Q_t$ with $q_{0t}=0.2$ and $q_{1t}=0.5$ for all $t$}
\label{figa17}
\end{center}
\end{figure}
\end{center} 

As an illustration we took $N=50$ and with the data under the null hypothesis being generated with $q_t=0.2$ for all $t$ and with $c=3.52$. A sample path of $Q_t$ for $t=1,2,\ldots,9$ is provided in Fig.~\ref{figa16}. In Fig.~\ref{figa17} a sample path of $Q_t$ is presented for which the $q_{0t}$, i.e. the binomial probability for group 0 is 0.2 for all $t$ and the binomial probability for group 1 is $q_{1t}=0.5$ for all $t$.

The power for this procedure can be improved when $q$ is likely to be small by including in the hypothesis the upper bound $q_t\leq q_0$ for all $t$. This $q_0$ can then replace the $\half$ for establishing the critical value and the sampling of unobserved data.

\subsubsection{Continuous time}

In this case the observations are continuous on the interval $(0,T]$. This makes the following easier with respect to notation and it is straightforward to extend to the case when observations can lie beyond $T$. Also, without loss of generality, we take $T=1$. The test at time $T$ is based on the statistic
$$S=\sup_{u}|F_N(u)-G_N(u)|$$
where $F_N$ and $G_N$ are the empirical distribution functions from the two groups. The test will be the two sample Kolmogorov-Smirnov (KS) test. An important property of the test is that the distribution of $S$ under the null hypothesis, i.e. $H_0:\,F\equiv G$, is independent of $F$. Hence, for establishing the critical value of $S$, of the form $S>c$, can be determined by taking $F$ to be uniform on $(0,1)$.  

Our first result is that we can write, for any $0<t<1$,
\begin{equation}\label{split}
S=\max\left\{\sup_{u\leq t} |F_N(u)-G_N(u)|,\,\sup_{t<u}|F_N(u)-G_N(u)| \right\}
\end{equation}
and we can write
$F_N(u)=F_N^{(t)}(u)+F_N(t)$ where $F_N^{(t)}$ is the empirical distribution function restricted to the sample larger than $t$. So $F_N^{(t)}(t)=0$.
If we consider the distribution
$$F_t(u)=\left\{\begin{array}{ll}
F(u) & u\leq t \\ 
F^{(t)}(u)+F(t) & u>t
\end{array}\right.
$$
then (\ref{split}) would be the form of the Kolmogorov-Smirnov statistic for testing $F_t\equiv G_t$. 

This then allows us to define $Q_t$ as
$$Q_t=P\left(\max\left\{\sup_{u\leq t} |F_N(u)-G_N(u)|,\,\sup_{t<u}|F'_N(u)-G'_N(u)| \right\}\geq c\Big| {\cal F}_t\right)$$
where ${\cal F}_t$ is the information collected up to time $t$.
Here, $F_N(u)$ is the empirical distribution distribution of the $(X_i\leq t)$  and, for $u>t$, $F'_N(u)=F_N^{(t)}(u)+F_N(t)$ with $F^{(t)}_N(u)$ being the empirical distribution function of a uniform sample of size $N-n_t$ from $(t,1)$, where $n_t=\sum_{i=1}^n 1(X_i\leq t)$. Similarly for $G_N$, $G_N'$ and the $(Y_i)$. 

It can be shown that $(Q_t)_{t\geq 0}$, with $Q_0$ defined with independent uniform $(0,1)$ samples of size $N$ from each group, is a martingale. That is, for any $s<t$,
$E(Q_t\mid {\cal F}_s)=Q_s$. 

\begin{center}
\begin{figure}[!htbp]
\begin{center}
\includegraphics[width=12cm,height=5cm]{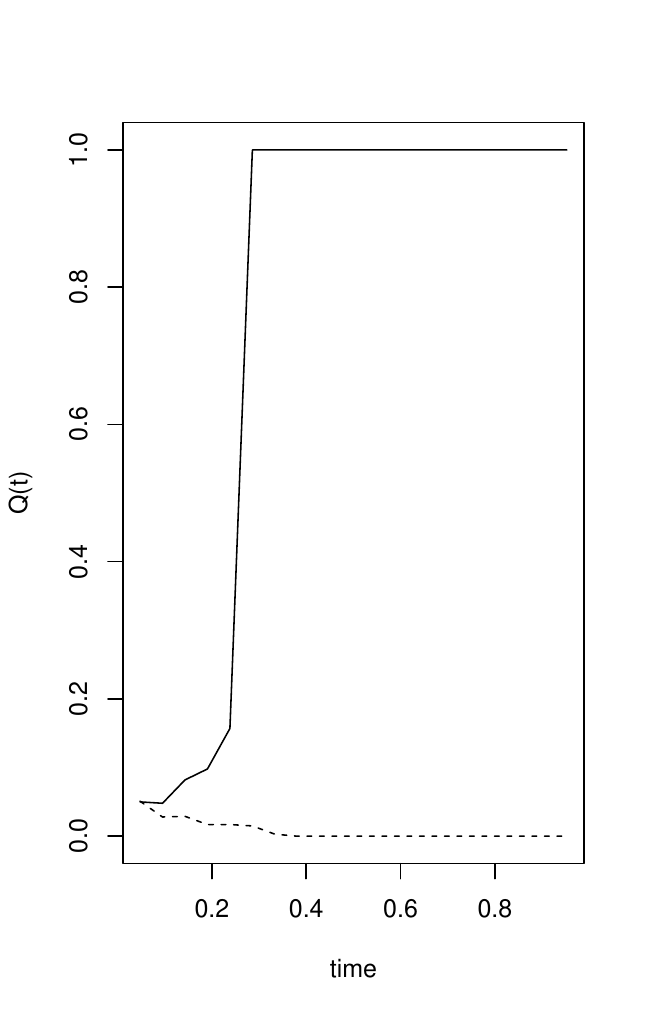}
\caption{$Q_t$ under null hypothesis (dashed line) and $Q_t$ under alternative hypothesis with one group as beta(2,3) and the other group as beta(3,4) }
\label{figa18}
\end{center}
\end{figure}
\end{center} 

As an illustration we use $N=100$ for each group and aim to evaluate $Q_t$ at 20 time points $t_j=j/21$, for $j=1,\ldots,20$. The value of $c$ is 0.19, based on a KS two sample test with $\wt=0.05$. Fig.~\ref{figa18} contains a sample $Q_t$ curve for the null hypothesis that the two distributions from the groups are the same (dashed line) and also contains a sample $Q_t$ curve where the distribution from one group is beta(2,3) while that of the other group is beta(3,4).

The more general setting of observations coming from $(0,\infty)$ can be easily covered, by taking the $X_i>t$ to be, for example, standard exponential random variables, which become effectively censored if they are beyond $T$. 

\subsection{Longitudinal trial}\label{sec:Long} 

In this section we consider a longitudinal phase 3 clinical trial as described in \cite{Hopper23}. Two randomized groups, one of size 163 who received the active treatment {\sl sotatercept}, and the other of size 160 who received placebo, were studied for 24 weeks. At the start, each individual provided a distance walked in a 6 minute period, and repeated the exercise after the 24 weeks with the change in distance walked between the two occasions being recorded. For the sake of the present article we assume and will base our simulated data on the recording of the changes in distance walked in 6 minutes for each individual every 3 weeks. So, in total, each individual provides 9 distances, transformed to 1 baseline distance at the start and 8 changes in distance.    

For a description of the testing procedure, and for ease of notation, we will assume the size of the two groups are the same at $N$. The data are represented by $(X_{i0},Y_{i0})$ and
$(\Delta^X_{it},\Delta^Y_{it})$ for $t=1:T$, and for all $i=1,\ldots, N$. Here $\Delta^X_{it}=X_{it}-X_{i\,t-1}$ and the $(X_{it})$ are distances walked in 6 minutes for the $X$ group, and similarly for the $Y$ group.
The test will be formulated around a hypothesis related to $\Delta^X\equiv_d \Delta^Y,$
that is, all the $\Delta^X_{it}$ and $\Delta^Y_{it}$ are all independent and identically distributed from some common (unknown) distribution. The null hypothesis can then be re-formulated sequentially, as
$$H_0: (\Delta^X_{t},\Delta^Y_{t})\quad\mbox{i.i.d.}\quad F_{t-1},\quad t=2,\ldots,T,$$
where $F_{t-1}$ is the empirical distribution function of all the changes from both groups up to time $t-1$ and $\Delta^X_{t}=\{\Delta^X_{it},\,\,i=1,\ldots,N\}$ and similarly for the $Y$ group.

The null hypothesis is rejected at time $T$ if 
$\sup_u\left|F^X_T(u)-F^Y_T(u)\right|>c$
where $F_T^X$ is the empirical distribution function for the sum of changes over the $X$ group recorded at time $T$. Similarly for $F^Y_T$. We now define the empirical distribution function $F^X_{t,T}$ which is constructed from the observed changes for group $X$ up to and including time $t$, and the sampled values $\Delta^X_{t+1:T}$ which come from the sampling plan in $H_0$. Similarly for $F^Y_{t,T}$ and $\Delta^Y_{t+1:T}$. Then
$$Q_t=P\left( \sup_u\left|F^X_{t,T}(u)-F^Y_{t,T}(u)\right|>c\mid {\cal F}_t\right),$$
where ${\cal F}_t$ is all the observed information up to and including time $t$. It is easy to see that under the null hypothesis $H_0$ it is that $Q_t$ is a martingale.

\begin{center}
\begin{figure}[!htbp]
\begin{center}
\includegraphics[width=12cm,height=5cm]{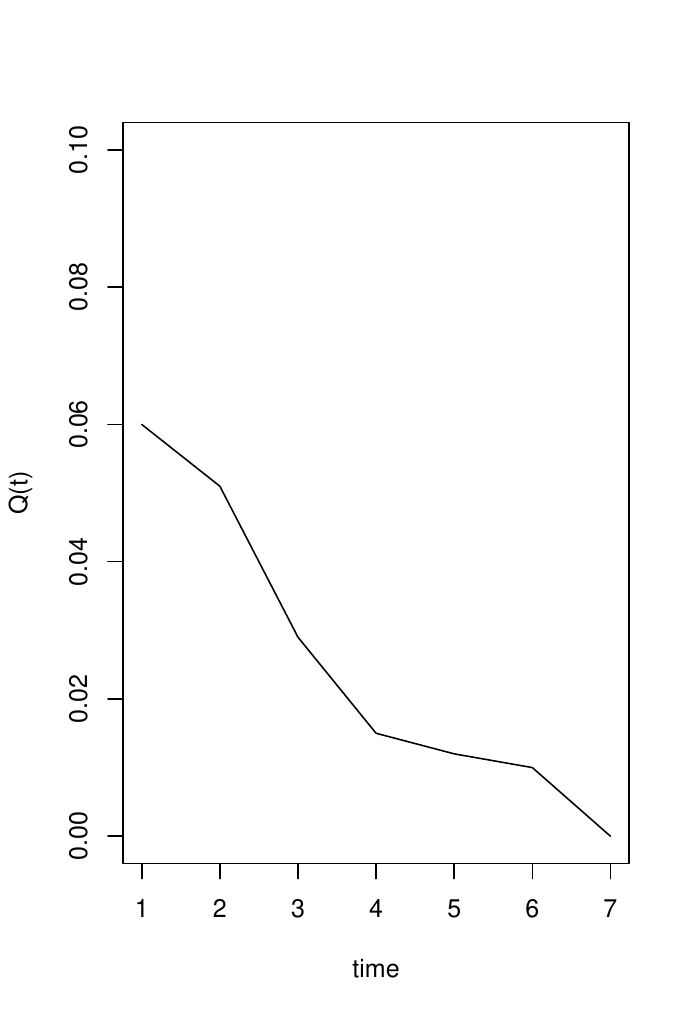}
\caption{$Q_t$ under null hypothesis}
\label{figa19}
\end{center}
\end{figure}
\end{center} 

\begin{center}
\begin{figure}[!htbp]
\begin{center}
\includegraphics[width=12cm,height=5cm]{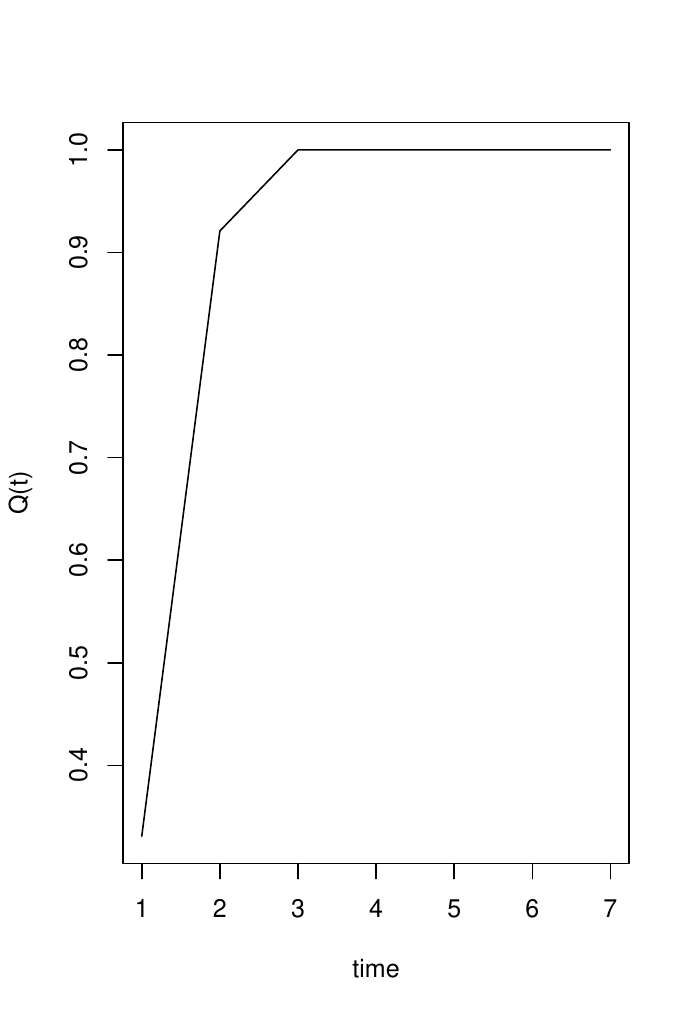}
\caption{$Q_t$ under alternative hypothesis}
\label{figa20}
\end{center}
\end{figure}
\end{center} 

\noindent
As an illustration we took $N=150$ individuals in each group. 
We took $T=8$ and so we compute $(Q_t)$ for $t=1,\ldots,7$. First we illustrate with a true null hypothesis and sampled all the changes to be independent normal with mean 0 and variance $0.1^2$.
A sample path of $(Q_t)$ is presented in Fig.~\ref{figa19}. On the other hand, a sample path of $Q_t$ under the null hypothesis is presented in Fig.~\ref{figa20}. The distributions of the changes for one group are independent normal with zero mean and variance $0.1^2$ while the changes for the other group are independent normal with mean 0.05 and variance $0.1^2$.


\section{Real-data illustration}
\label{sec:S9}
\noindent
{\sc Nonparametric two-sample test for the International Stroke Trial}.
Real data does not often fit into a parametric model and for testing $H_0:F_X=F_Y$ for two samples, $X_{1:N}$ and $Y_{1:M}$, a resource is the distribution free two sample Kolmogorov-Smirnov (KS) test.
The statistic for sample sizes $N$  and $M$ is given by
$$T_{N,M}=\sup_{0<u<T}|F_N(u)-G_M(u)|,$$
where $F_N$ is the empirical distribution of the $X$ sample and $G_M$ is the empirical distribution for the $Y$ sample. All samples are assumed to lie in $(0,T)$, though the following can easily be adapted if $T=\infty$.

Predictive sampling from the true common distribution under the null hypothesis is not possible since it is unknown as the null hypothesis only says the distributions are the same. We assume there are $N$ individuals from one group and $M$ individuals from the other group and each individual provides an event time (in the real data analysis to come it is a death time) within an interval $(0,T)$ for some fixed $T$. Interim analyses will be conducted at times $0<t_1<\cdots<t_L<T$ which can be as fine as required.

We first define $T_N^{(t)}$ for any $0<t<T$. To this end we define
$$F_N^{(t)}(u)=\frac{\sum_{i=1}^N 1(X_i\leq t,X_i\leq u)+\sum_{i=1}^N 1(t<X_i'\leq u)}{N},$$
where the $(X'_{n_t+1:N})$ are independent and identically distributed from the uniform distribution on $(t,T)$, and $n_t$ is the number of the $(X_{1:N})$ which are less than or equal to $t$. The point here is that the $X_i>t$ are unobserved but known to lie in $(t,T)$ and hence from a predictive point of view we sample them as uniform. This is without loss of generality due to the invariant property of the KS two sample test.

Likewise, we define $G_M^{(t)}(u)$ for the $Y$ group. The interesting aspect regarding the two sample KS statistic is that it is invariant to the common distribution, even if it is based on a mixture as in $F_N^{(t)}$ and $G_N^{(t)}$. That is, under the null hypothesis, that $F_X\equiv F_Y$,
$$T_{N,M}^{(t)}=\sup_{0<u<T} \bigg|F_N^{(t)}(u)-G_M^{(t)}(u)\bigg|$$
is distribution invariant under the given construction. Hence, if we define
$$Q_{j}=P\left(T_{N,M}^{(t_j)}\geq c\right),\quad j=1,\ldots,L,$$
and $Q_0=P(T_{N,M}^{(0)}\geq c)$,  where in this case all samples are uniform on $(0,T)$, and further define $Q_{M+1}=1(T_N^{(T)}\geq c)$, then $(Q_j)$ will be a martingale. 

The critical value $c$ will be based on the two sample KS test, which for a Type I error of $\alpha=0.05$ 
we take as $c=1.36\sqrt{(N+M)/NM}$.
For $N=M=500$ this is $c=0.086$.
Taking $N=M=500$, we illustrate $(Q_j)$ by taking all samples independently from a beta$(2,4)$ distribution and plot the sequence of probabilities $(Q_t)$. See Fig.~\ref{fig28} for a plot $t$ and $Q_t$ taking $t_j=j/11$ for $j=1,\ldots,10$. 

With random right censoring, as we have with the real data analysis, we exchange the two sample KS test with the two sample nonparametric log rank test. 
The statistic in this case is given by
$$T_{N,M}=\frac{\sum_{j=1}^J (n_X(j)-e_X(j))}{V},$$
where $n_X(j)$ is the number of events in category $j$ (deaths in  month $j$) from the $X$ group and
$e_X(j)=n_X(j)\,m_X(j)/m(j)$, where $m_X(j)$ is the number of individuals at risk of dying in month $j$ and $m(j)=m_X(j)+m_Y(j)$.
Also, $V=\sum_{j=1}^J V_j$ and
$$V_j=e_X(j)\left(1-\frac{n_X(j)+n_Y(j)}{m(j)}\right)\,\frac{m_Y(j)}{m(j)-1}.$$
Under the null hypothesis the $T_{N,M}$ is approximately a standard normal random variable.

\begin{center}
\begin{figure}[!htbp]
\begin{center}
\includegraphics[width=12cm,height=5cm]{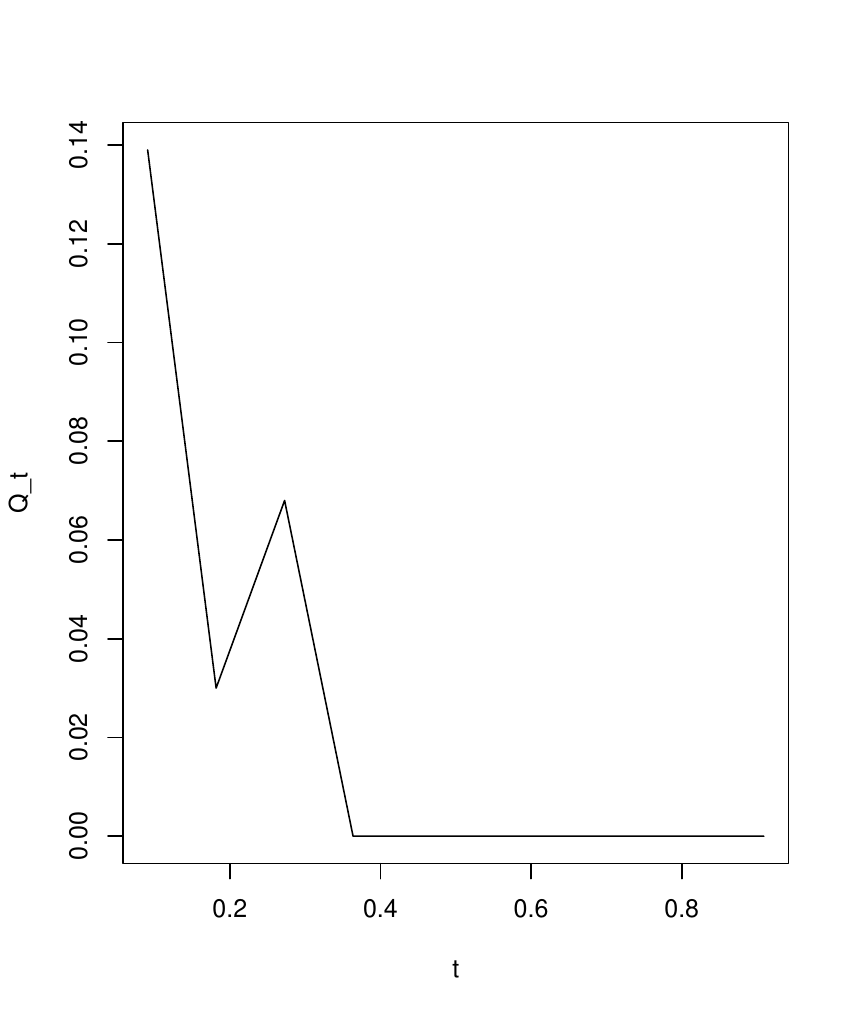}
\caption{Plot of $Q_t$ for a true null hypothesis }
\label{fig28}
\end{center}
\end{figure}
\end{center}

Here we provide details for the International Stroke Trial (IST) analysis reported in the main text, see \cite{Sander11}. One of the aims of the trial is whether the administration of aspirin influences recovery following a stroke.
There were a total of 19,345 patients in the study from 467 hospitals and 36 countries. The total numbers recorded for the assignment of aspirin for the first 14 days of the trial was 9071 with 10326 not being assigned aspirin over the 14 days. There is no record of assignment or otherwise for the remaining patients. 
From the former group there were 1897 death times while for the latter group there were 2462 death times, with all other times given being censoring times. The number of categories is $J=15$ which represent the number of months and so all events (death and censoring) are recorded as occurring in some month $j\in\{1,\ldots,15\}$. The final category 15 contains all observations greater than 14 months.


\begin{center}
\begin{figure}[!htbp]
\begin{center}
\includegraphics[width=12cm,height=5cm]{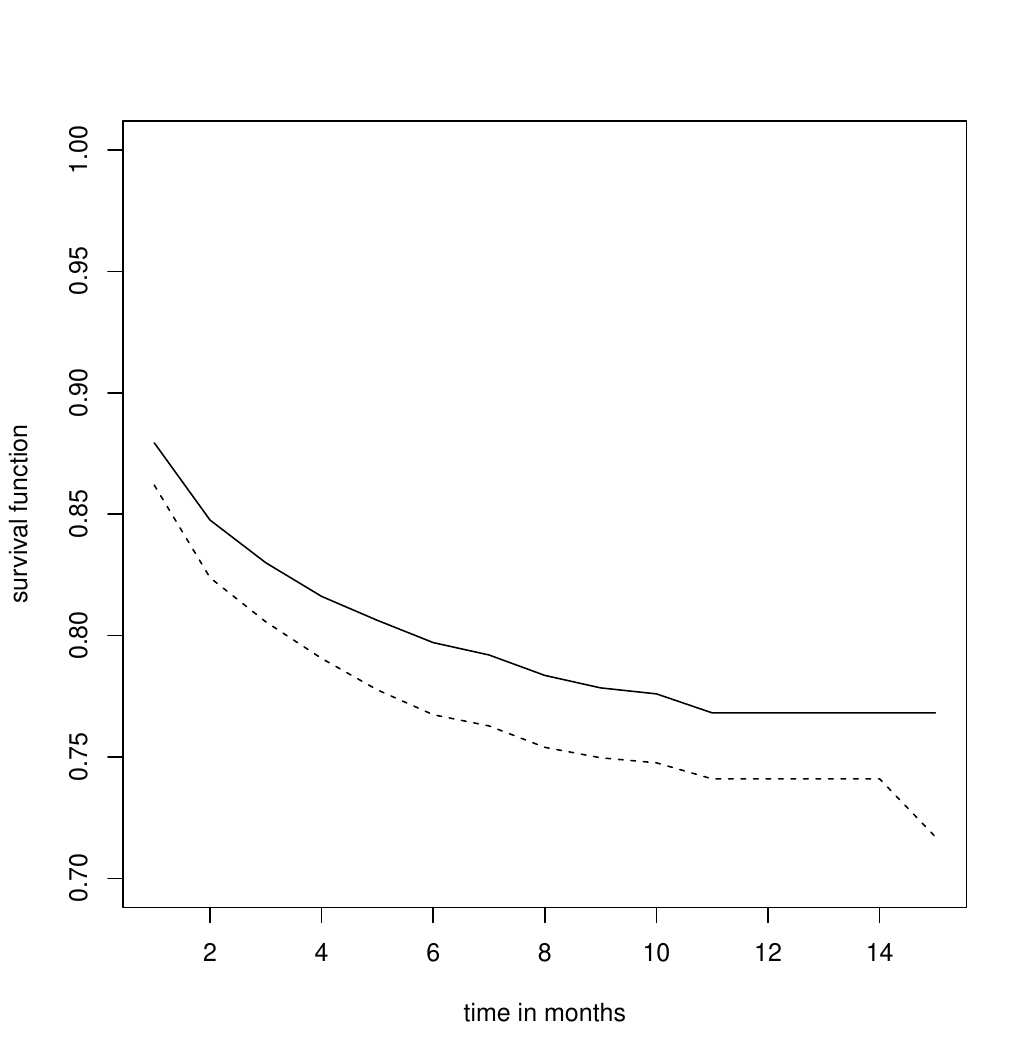}
\caption{ Kaplan-Meier estimators from the two groups for the first 15 months of the trial: assigned aspirin (bold line), not assigned aspirin (dashed line) }
\label{figsm8}
\end{center}
\end{figure}
\end{center}

The null hypothesis is that the conditional distributions of the death times for the two groups are equal.
The two Kaplan-Meier survival functions are shown in Figure~\ref{figsm8}, with the assigned aspirin group shown in bold and the not assigned aspirin group shown as the dashed line.
For the two sample log rank test, the relevant statistic is 8.69 while the critical value for a Type I error of 0.05 is 3.84. So the equality of the distributions is rejected.

\begin{center}
\begin{figure}[!htbp]
\begin{center}
\includegraphics[width=12cm,height=5cm]{FigSM9}
\caption{ Plot of $Q_t$ over 14 months of the trial. The function exceeds 0.95 after 7 months.}
\label{figsm9}
\end{center}
\end{figure}
\end{center} 

For the plot of $Q_t$ with $t=1,\ldots,15$, representing time in months, is shown. The statistic is based on completing the full data set as described for the two sample log rank test. For sampling the outcomes of the future observations at each time point $t$, all individuals will be sampled to provide a death time whether it is known there future observation is a death or censoring time. With the sample statistic at time 15 months the probability of rejecting the null hypothesis is computed, using the critical value of 3.93  to obtain an overall rejection probability of 0.05 if the null hypothesis is true.
The hypothesis can be rejected safely after 7 months; see Fig.~\ref{figsm9}.




\section{Stopping early for futility}\label{sec:Futility}


The main manuscript focuses on early stopping for efficacy, while controlling for the
Type I error. 
In some applications it is also desirable to stop early for futility, that is
to terminate the experiment when continued sampling is unlikely to detect an effect
of practical or scientific importance. In other words, strong evidence points to the null hypothesis being true. More generally, we can extend the approach to safely predict the truth status of a test for any hypothesis at $N$, by simulating the missing data assuming the hypothesis to be true.

\paragraph{Design alternative and updated power.}
Let $\theta$ index the effect size of interest, with $\theta=0$ corresponding to the
null hypothesis.
We define $\theta^*>0$ to be the \emph{design alternative}, i.e.\ the effect size used
in the original power calculation that motivated the choice of maximal sample size $N$.

At interim stage $n$, define the predictive rejection probability under the design
alternative,
\[
Q_n^* =
\mathbb{P}_{\theta^*}\!\left(T_N^{(n)} \notin \mathcal{C} \mid \mathcal{F}_n\right) \, , 
\]
where $T_N^{(n)}=T(X_{1:n},X^*_{n+1:N})$ with the $X^*_{n+1:N}$ coming from the alternative hypothesis $\mathbb{P}_{\theta^*}$ with $\theta=\theta^*$.
If the alternative hypothesis is true then $(Q_n^*)$ is a martingale.
Therefore, we can safely stop and {\em{not reject}} the null hypothesis if ever $Q_n^*$ is too large.

This quantity can be interpreted as the updated Type II error at sample size $n$, namely, 
the conditional probability, given the data observed so far, that the fixed-$N$
test would not reject the null hypothesis if the experiment were continued to its planned sample size and
the true effect were $\theta^*$. Note that $Q_0^*$ specifies the power of the fixed-N test, i.e. $Q_0^*=P_{\theta^*}(T_N\notin \mathcal{C})$, and Power $= 100\% \times (1- Q_0^*)$.

\paragraph{Futility stopping rule.}
A natural decision rule for futility is to stop sampling when the updated Type II error passes a threshold $\gamma^{\mathrm{F}}$,
\[
\tau_F = \inf\{n \le N : Q_n^* \ge \gamma^{\mathrm{F}}\},
\]
with a default choice such as $\gamma^{\mathrm{F}}=0.99$. 
Hence, we stop the experiment and don't reject the null hypothesis if ever $Q_n^* \geq\gamma^F$. As $Q_n^*$ is a martingale, so
$$P\bigg(\max_n Q_n^*\geq\gamma\bigg)\leq Q_0^*/\gamma^F$$
and with $Q_0^*$ typically given by the design of the fixed sample test, we would then choose $\gamma$ so that $Q_0^*/\gamma^F$ is small, say 0.01. 

Intuitively, the decision rule stops for futility whenever the current experiment falls below a predicted power threshold to detect the effect size that the study was designed to detect; the
probability of the eventual null hypothesis being rejected is small and future measurement of units is wasteful, as continuing with the study is unlikely
to change the scientific conclusion.

\paragraph{Screening via the null-calibrated process $Q_n$.} 

In many common settings (e.g.\ one-sided tests in monotone likelihood ratio families), under the alternative hypothesis, 
the rejection event $\{T_N \in \mathcal{C}\}$ is increasing in the data and the model
is stochastically ordered in $\theta$.
Under these conditions, when the null hypothesis is false, 
$
1- Q_n^* \ge Q_n,
$
where $Q_n=\mathbb{P}_0(T_N^{(n)} \in \mathcal{C} \mid \mathcal{F}_n)$ is the
null-calibrated predictive rejection probability.
Consequently, if $1 - Q_n < \gamma^{\mathrm{F}}$, then $Q_n^* <  \gamma^{\mathrm{F}}$
and the decision to stop for futility won't be made. 
This allows $Q_n$ to be used as a conservative screening statistic, so that
$Q_n^*$ need only be computed when $1- Q_n$ falls below $\gamma^{\mathrm{F}}$. 

\paragraph{Type I error protection.}
Stopping early for futility cannot inflate the Type I error rate.
By construction, rejection of the null hypothesis only occurs via the efficacy
stopping rule based on the null-calibrated process $Q_n$.
Any futility rule can only terminate sampling without rejection and therefore can
only reduce (or leave unchanged) the probability of rejecting a true null.
Errors associated with futility stopping correspond to stopping and non-rejection of the null when the true effect is at least $\theta^*$ and are interpreted as losses in design-stage power rather
than violations of Type I error.

\begin{center}
\begin{figure}[!htbp]
\begin{center}
\includegraphics[width=14cm,height=5cm]{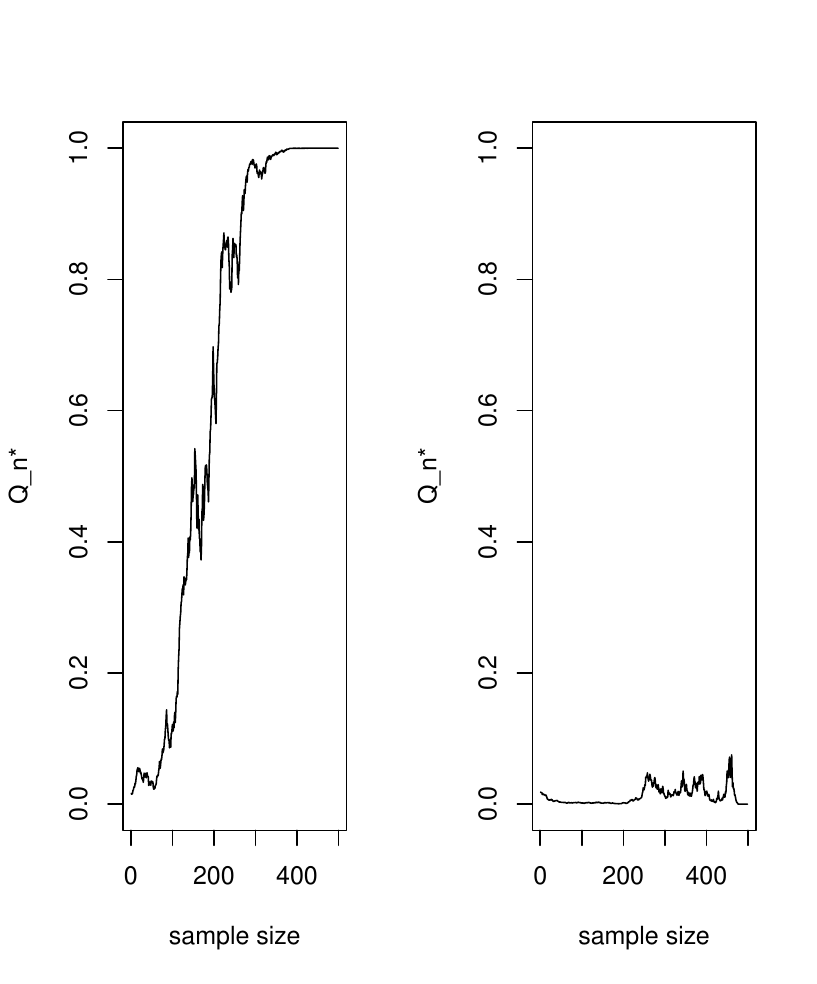}
\caption{Plots of $Q_n^*$. Stopping for futility. The left panel is when the null hypothesis $H_0$ is true and the right panel is when the alternative hypothesis is true with $\theta^*=0.13$ 
}
\label{figsm4}
\end{center}
\end{figure}
\end{center}

\paragraph{Relation to Bayesian adaptive futility stopping.}
Bayesian adaptive trials often include a decision to stop early for futility by monitoring the posterior
probability that a clinically meaningful effect will be detected at the end of the
study, and terminating the trial when this probability becomes sufficiently small.
The predictive futility criterion described here addresses an analogous
forward-looking question, but does so without reliance on prior distributions or
posterior inference.
Instead, futility is assessed relative to the original fixed-$N$ test and the
design alternative $\theta^*$ used in the study’s power calculation, ensuring that
early stopping decisions remain anchored to the frequentist calibration of the
original fixed-sample analysis.

\paragraph{A canonical example: one-sided testing of a normal mean}

For the normal mean illustration, 
$$Q_n^*=\Phi\left(\frac{\sqrt{N}z_{\wt}-T_n-(N-n)\theta^*}{\sqrt{N-n}}\right).$$
The experiment is stopped and the null hypothesis is not rejected if ever $Q_n^*\geq\gamma^F$.

\section{Predictive repeated confidence intervals}\label{sec:Confidence}



A confidence interval of size $1-\alpha$ and based on a sample of size $N$ is of the form $I_N=(L_N,U_N)$ where
$P(I_N\ni \theta)\geq 1-\alpha,$
and $\theta$ is the true parameter value.
A repeated confidence interval (RCI) looks for a sequence of confidence intervals which apply over all sample sizes $n\leq N$, that is 
$$P\big(I_n\ni\theta\,\,\,\mbox{for}\,\,\,n=1,\ldots,N\big)\geq 1-\alpha.$$ 
One of the pioneering papers on sequential confidence intervals is \cite{Robbins70} and \cite{Robbins71}. 
The math is provided by a theorem due to \cite{Ville39} concerning martingales. If $(L_n)$ is a non-negative martingale and $E(L_n)=1$ for all $n$ then
$$P(L_n\geq 1/\alpha\,\mbox{for some} \,\,n\geq 1)\leq\alpha$$
for any $0<\alpha<1$. One such martingale is
$$L_n=\frac{p'(x_1,\ldots,x_n)}{p_0(x_1,\ldots,x_n)},$$
where $p_0$ is a hypothesized density (for a test) or is the true density (for confidence intervals) and $p'$ an alternative density function. The test $H_0:p=p_0$ can be rejected if ever $L_n\geq 1/\alpha$. This has become known as anytime valid testing and $(L_n)$ is known as an $e$-process; see \cite{Shafer11} and \cite{Vovk21} for example.
Exploiting the well known connections between testing and confidence intervals, it is possible to construct a sequence of confidence intervals for $\theta$, on which $p_0$ depends, by equating the events $\{L_n\leq 1/\alpha\}$ with
$\{I_n\ni \theta\}$, the $(I_n)$ being a sequence of random intervals. For example, if $p_0(x_{1:n};\theta)=\prod_{i=1}^n \phi(x_i-\theta)$, the $(I_n)$ can be found using some elementary algebra.

The influential paper by \cite{Jennison89} focused on repeated confidence intervals and their connection with sequential tests and group sequential methods. The ideas presented in the paper remain the benchmark for today, see, for example, \cite{Lewis2023}. 

The key idea to group sequential methods and sequential tests is a finite number of statistics to be observed sequentially, say $(T_n)_{n=1:N}$ and the joint distribution of $(T_1,\ldots,T_n)$ is assumed known for all $n$ under the null hypothesis or true parameter model and for convenience is set or restricted to be multivariate normal with standard normal marginals for each $T_n$. The tests and repeated confidence intervals are constructed from a set of choice $(\pi_n)$ for which $\sum_{i=1}^n\pi_i=\alpha$ and the two sided critical values are set by deriving sequentially the $(c_n)$ for which
$$P(|T_n|\geq c_n,|T_1|<c_1,\ldots,|T_{n-1}|<c_{n-1})=\pi_n.$$
Tables of such $(c_n)$ appear in Table 1 in \cite{Jennison89}. We will look more at this later.

As argued in \cite{Jennison89} flexibility is required when stopping a medical trial prior to a pre-determined maximal sample  size. Often the decision is complex and a number of factors will be involved. For this reason, flexibility is essential and repeated or sequential confidence intervals for the parameter of interest allow for such.

Under certain circumstances we can compute a sequence of confidence intervals $(I_n)_{n=1:N}$ such that $P(I_n\ni\theta^*\,\,n=1,\ldots,N)\geq 1-\alpha$ or equivalently 
$P(I_n \,\,\mbox{does not contain}\,\, \theta^*\,\,\mbox{for some}\,\,n=1,\ldots,N)\leq\alpha$. Here $\theta^*$ is the true parameter value and here we  illustrate via an example.

\subsection{RCI for normal mean with known variance}

Consider a normal mean $\theta$ with unit variance and consider the hypothesis $H_0:\theta=\theta^*$ versus $H_1:\theta>\theta^*$. 
The fixed sample test at $N$ is given by reject $H_0$ if
$T_N\geq c$, where $T_N=\sum_{i=1}^N X_i$, and for a Type I error of $\wt$ it is that $c=\sqrt{N}\Phi^{-1}(1-\wt)+N\theta^*$.
We now define $Q_n=P(T_N^{(n)}\geq c\mid X_{1:n})$
where $T_N^{(n)}=T_n+\sum_{i=n+1}^N X'_i$ and the $X'_{n+1:N}$ are i.i.d. from the normal distribution with unit variance and mean $\theta^*$. Hence, 
$$Q_n=1-\Phi\left(\frac{\sqrt{N}\Phi^{-1}(1-\wt)+n\theta^*-T_n}{\sqrt{N-n}}\right).$$
As usual, $(Q_n)$ is a martingale and $P(\max_{n=1:N}Q_n\geq\gamma)\leq\wt/\gamma$ for all $0<\gamma<1$. Recall that we do not rely on the $(Q_n)$ being a martingale and can find $\gamma$ via Monte Carlo methods if we use an example for which $(Q_n)$ not a martingale.
The null hypothesis is rejected if ever 
$Q_n\geq\gamma$ which is equivalent to
$$T_n\geq \sqrt{N}\Phi^{-1}(1-\alpha\gamma)-\sqrt{N-n}\Phi^{-1}(1-\gamma)+n\theta^*.$$
Re-arranging, this is equivalent to
$$I_n=T_n/n-\sqrt{N}\Phi^{-1}(1-\alpha\gamma)/n+\sqrt{N-n}\Phi^{-1}(1-\gamma)/n\geq\theta^*.$$
From the properties of the test,
$P\left(I_n\geq\theta^*\,\,\mbox{for some}\,\,n=1,\ldots,N\right)\leq\alpha$
and so $(I_n,\infty)$ is a $(1-\alpha)$\% one-sided sequential confidence interval for $\theta^*$.
Note that $I_N=T_N/N-\Phi^{-1}(1-\alpha\gamma)/\sqrt{N}$ and so to match the fixed sample confidence interval at $N$ we will take $\gamma$ close to 1.

\begin{center}
\begin{figure}[!htbp]
\begin{center}
\includegraphics[width=12cm,height=5cm]{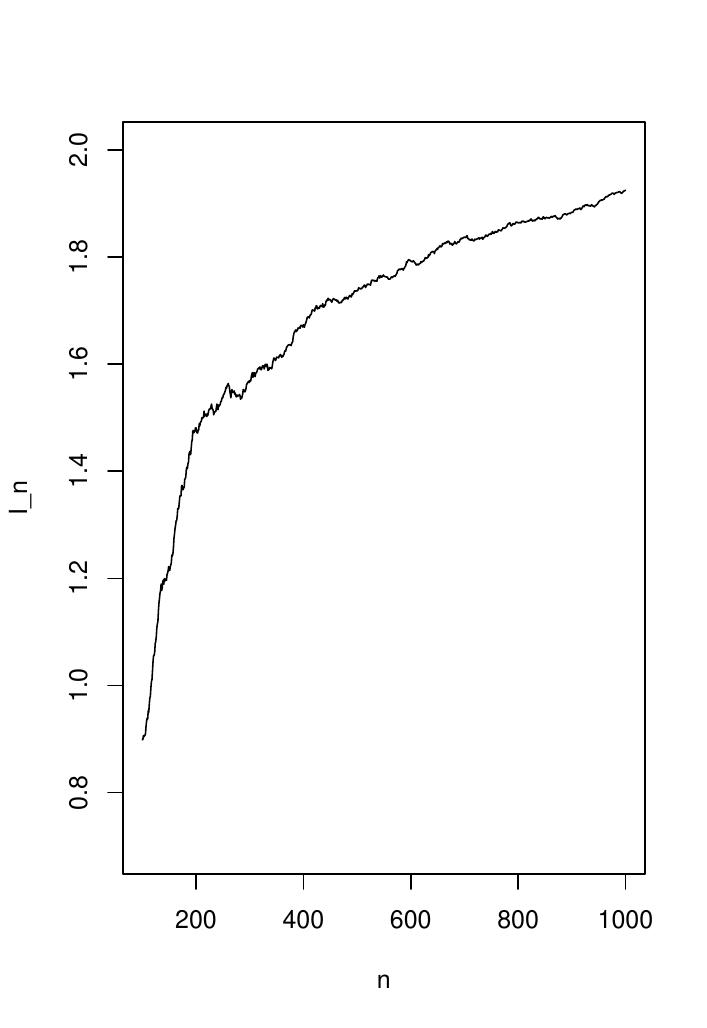}
\caption{A sample path of the lower bound for the one-sided  confidence interval with true mean value of $\theta^*=2$ and using $N=1000$, $\alpha=0.05$, and $\gamma=0.98$}
\label{figc1}
\end{center}
\end{figure}
\end{center} 

\begin{center}
\begin{figure}[!htbp]
\begin{center}
\includegraphics[width=12cm,height=5cm]{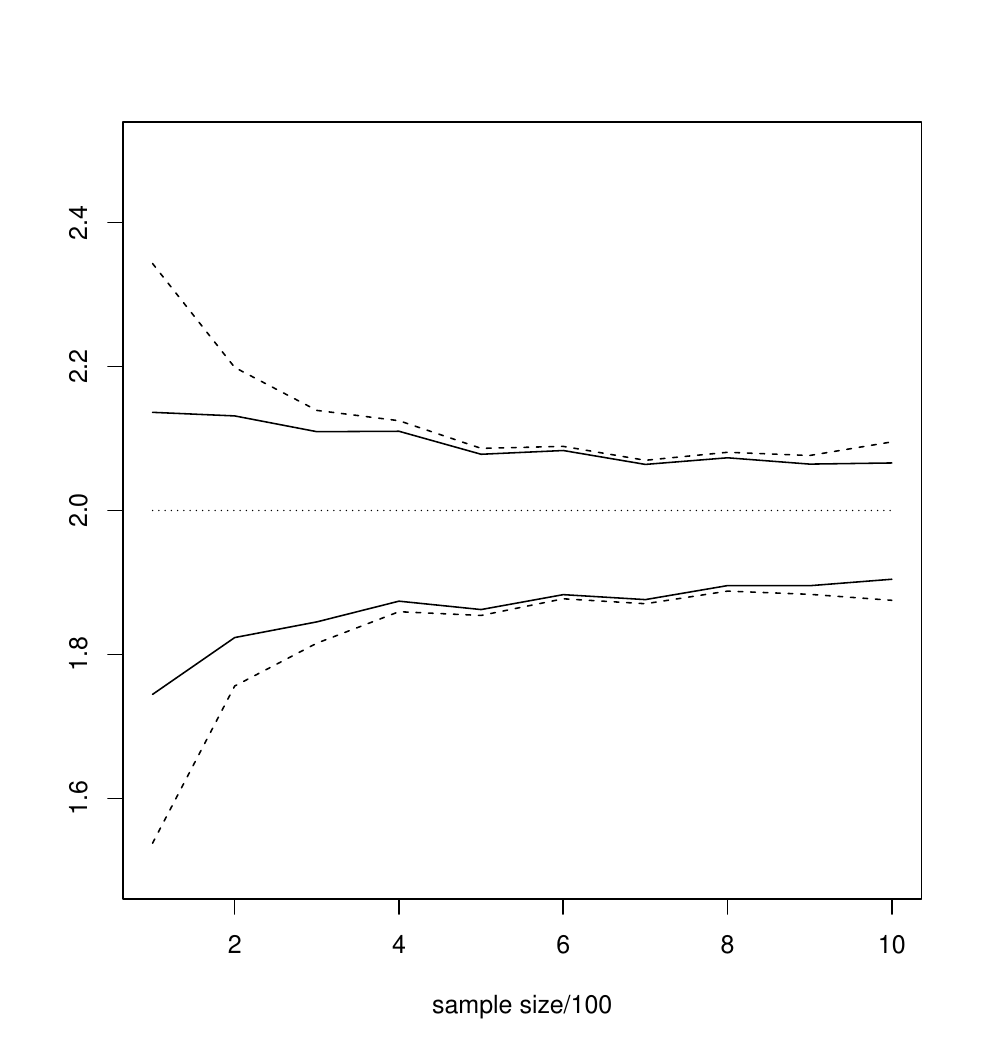}
\caption{A sample path of the bounds for the two-sided  confidence interval (dashed line) with true mean value of $\theta^*=2$, using $N=1000$, $\alpha=0.05$, and $\gamma=0.01$. The bold lines are the corresponding confidence intervals from equation (2.9) in \cite{Jennison89} using the Pocock values.
The intervals are evaluated in blocks of 100 samples.}
\label{figc2}
\end{center}
\end{figure}
\end{center} 

By means of an illustration, we take $N=1000$ and the true parameter value is $\theta^*=2$. A sample path of a confidence interval with lower bound $I_n$ is presented in Fig.~\ref{figc1}, using $\alpha=0.05$ and $\gamma=0.95$. 
A corresponding sequence of two-sided  confidence intervals with lower bound
$$I_{n1}=\bar{X}_n+(\sqrt{N-n}/n)\Phi^{-1}(1-\gamma)-(\sqrt{N}/n)\Phi^{-1}(1-\alpha\gamma/2)$$
and upper bound
$$I_{n2}=\bar{X}_n+(\sqrt{N-n}/n)\Phi^{-1}(\gamma)+(\sqrt{N}/n)\Phi^{-1}(1-\alpha\gamma/2),$$
is presented in Fig.~\ref{figc2}. Here we took $\gamma=0.01$ and plotted the intervals in blocks of 100 samples so as to compare with the bounds in equation (2.9) of \cite{Jennison89} using the Pocock values. 
  

The general structure is based on the hypothesis $H_0:\theta=\theta^*$. From the usual definition of $Q_n$, if $Q_n<\gamma$ implies $\theta^*\in I_n$ for some interval $I_n$ then $(I_n)$ is a $(1-\alpha)$\% sequence of confidence intervals in the sense that
$P(I_n\ni\theta^*\,\,\mbox{for all}\,\,n=1,\ldots,N)\geq 1-\alpha$.
To add some detail, suppose the hypothesis is to reject $H_0$ if
$T_N\in C_N(z_\alpha,\theta^*)$ and 
$$T_N^{(n)}=T_n+C_{N-n}(Z,\theta^*)$$
where $Z$ is a random variable independent of $T_n$ and has distribution $G$. Note that $z_\alpha=G^{-1}(1-\alpha)$.
The null hypothesis is rejected with Type I error $\alpha$ if ever
$$T_n\geq C_N(G^{-1}(1-\alpha\gamma),\theta^*)-C_{N-n}(G^{-1}(1-\gamma),\theta^*)$$
and so
$$I_n=\left\{\theta:\,\,C_N(G^{-1}(1-\alpha\gamma),\theta)-C_{N-n}(G^{-1}(1-\gamma),\theta)\geq T_n\right\}$$
is a sequence of $(1-\alpha)$\% confidence intervals.


\subsection{RCI for normal mean with unknown variance}

In this case the relevant test statistic is 
$T_N(\theta)=\sqrt{N}(\bar{X}_N-\theta)/S_N$
and we take the predicted test statistic as
$T_N^{(n)}(\theta)=\sqrt{N}(\bar{X}_N-\theta)/S_n$
where $X'_{n+1:N}$ are i.i.d. from the normal model with mean $\theta$ and variance $S_n^2$.
Then
$$Q_n=1-\Phi\left((c\sqrt{N}-\sqrt{m}T_n(\theta)/\sqrt{N-n}\right)$$
and we can find the relevant $\gamma$ for which $P(\max_{n_0\leq n\leq N} Q_{n}\leq\gamma)=1-\alpha$ for some starting value $n_0$; i.e. no interim test will be performed before sample $n_0$ is seen. 
This $\gamma$ will hold for all $\theta$. The null hypothesis is rejected  if ever
$Q_n\leq\gamma$ which arises if ever
$$T_n(\theta)\geq c\sqrt{N/n}-\sqrt{N/n-1}\Phi^{-1}(1-\gamma).$$
Hence, a one sided RCI for $\theta$ is presented by $I_n$ where
$$I_n=\{\theta:\,T_n(\theta)\leq c\sqrt{N/n}-\sqrt{N/n-1}\Phi^{-1}(1-\gamma)\}.$$
For example, by looking at $T_n(\theta)=\sqrt{n}(\bar{X}_n-\theta)/S_n$ we obtain the one sided interval of the form $(L_n,\infty)$ and 
$$L_n=\bar{X_n}-(S_n/\sqrt{n})\,\bigg(c\sqrt{N/n}-\sqrt{N/n-1}\Phi^{-1}(1-\gamma)\bigg).$$
It is also straightforward to find the two sided interval of size $1-\alpha$.

\subsection{Nonparametric test and RCI}

To highlight the extent of the predictive approach we provide an illustration of a nonparametric test and pointwise RCI. These are based on the Kolmogorov-Smirnov test. 

\subsubsection{Test} Here we aim to demonstrate by 
considering the test $H_0:F=F_0$ versus $H_0:F\ne F_0$ for some distribution function $F_0$.
The statistic $T_N$ can, for example, be the Kolmogorov-Smirnov test statistic, i.e.
$T_N=\sup_x|F_N(x)-F_0(x)|$
where $F_N$ is the empirical distribution function. The null hypothesis is rejected if $T_{N}\geq c$. 
To obtain $Q_n$ we first define 
$T_N^{(n)}=\sup_x|F_N^{(n)}(x)-F_0(x)|$
where
$$F_{N}^{(n)}(x)=\frac{nF_n(x)+\sum_{i=n+1}^N 1(X_i'\leq x)}{N}$$
and the $X'_{n+1:N}$ are i.i.d. from $F_0$.

We now define $Q_n=P(T_N\geq c\mid \mathcal{F}_n)=P(T_{N}^{(n)}\geq c)$ which can be computed via Monte Carlo simulation. That is, for $m=1,\ldots,M$, for a large $M$, we generate $(T_{N}^{(n)}(m))_{m=1:M}$  and estimate
$$Q_n=M^{-1}\sum_{m=1}^M 1\bigg(T_N^{(n)}(m)\geq c\bigg).$$
This describes the test and the null hypothesis will be rejected if ever $Q_n\geq \gamma$ once $c$ and $\gamma$ have been chosen appropriately to ensure a Type I error of $\alpha$. 

\subsubsection{RCI}

We restrict attention to pointwise confidence intervals for the unknown true distribution $F(x)$.
First, write
$$T_N^{(n)}(x)=|w_n\Delta_n(x)+(1-w_n)\Delta'_{N-n}(x)|$$
where $w_n=n/N$ and $\Delta_n(x)=F_n(x)-F(x)$ and
$$\Delta'_{N-n}(x)=(N-n)^{-1}\sum_{i=n+1}^N 1(X'_i\leq x)-F(x),$$
where the $(X'_i)$ should come from $F$. We obviously can not achieve this and the aim is to approximate
a random sample of a $\Delta'_{N-n}(x)$ by $D(x)=|(N-n)\sum_{i=n+1}^{N}(X_i'\leq x)-F_n(x)|$
where the $(X'_i)$ are i.i.d. from $F_n$, the empirical distribution function of the $X_{1:n}$.
Hence, we have samples of $T_{N}^{(n)}(x)$ given by
$$|w_n\Delta_n(x)+(1-w_n)D_j(x)|$$ for $j=1,\ldots,M$, for some large $M$. We now for each $n$ perform a numerical search to find the value of $c_n(x)=\Delta_n(x)$ such that
$$M^{-1}\sum_{j=1}^M 1\left(|w_nc_n(x)+(1-w_n)D_j(x)|\geq c\right)=\gamma.$$
The RCI for $F(x)$ is then given by $F_n(x)\pm c_n(x)$.

\begin{figure}
\center
\includegraphics[width=12cm,height=6cm]{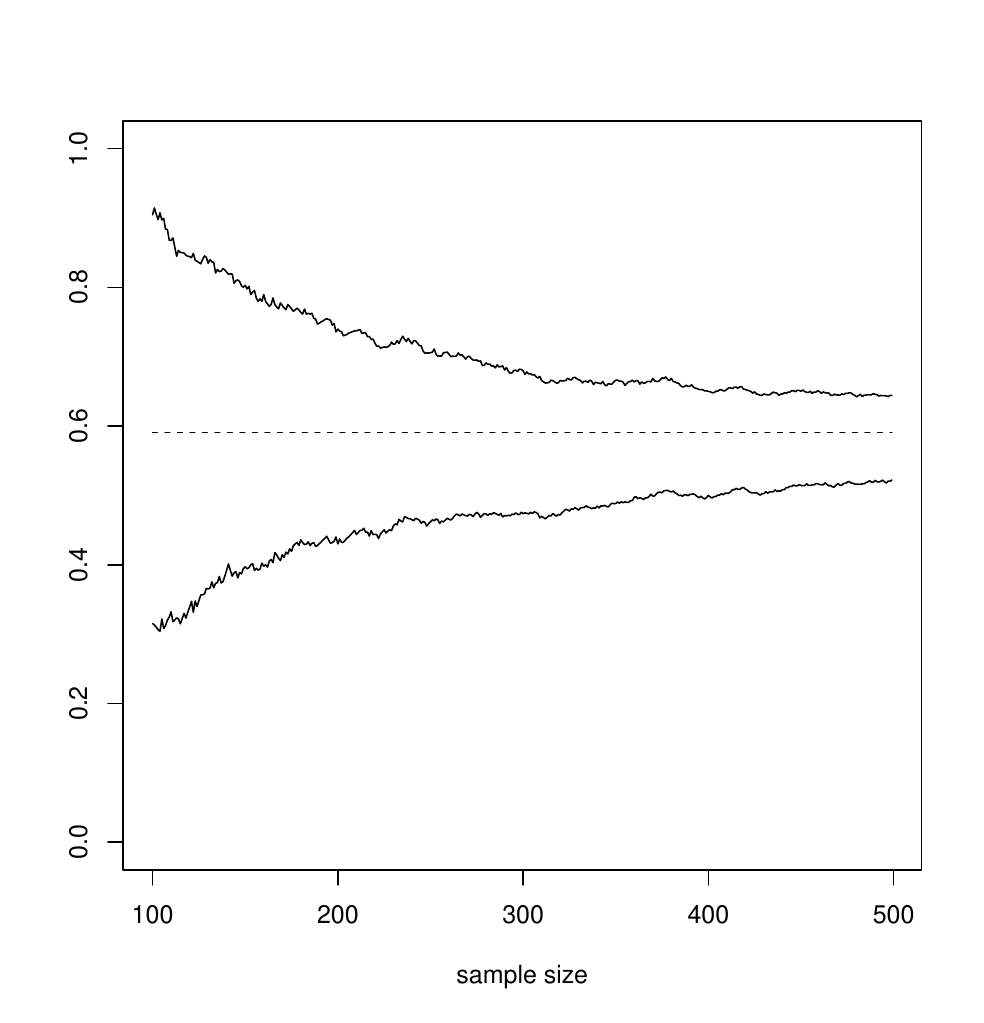}
\caption{Two sided RCI for $F(0.23)$ using $c=0.06$ and $\gamma=0.5$. The dotted line is the true value $\Phi(0.23)$}
\label{fig8}
\end{figure}

An an illustration, we find the RCI for $F(x)$ at $x=0.23$ when the true $F$ is the standard normal distribution function, and $F(0.23)=0.591.$
The RCI is presented in Fig.~\ref{fig8}, where we took the value of $c=0.06$ and $\gamma=0.5$.

\section{Brownian motion based tests and RCI}
\label{sec:Brownian}

A number of the illustrations appearing in \cite{Jennison89} can be attributed to the sequence of statistics $(S_k)_{k=1:K}$ being embedded within approximate Brownian motion. Here we use $S$ to represent a statistic and reserve $T$ and $t$ for times. So, for time points $(t_k)$ it is that $S_k=B(t_k)$ where $B$ denotes Brownian motion.
Hence, $B$ is Markov,  for any $t$, $B(t)$ is normal with mean $0$ and variance $t$, and for $t>s$ it is that $B(t)=B(s)+Z_{t-s}$
where $Z_{t-s}$ is a normal random variable with 0 mean and variance $t-s$ and is independent of $B(s)$. 
They then use multivariate normal theory to obtain critical values for setting tests and constructing RCI.
For some experiments there is the possibility that information arrives over time and appears at any time. If the sequence of statistics can be equivalent to a time change of Brownian motion, then we can define
$$Q_T(t)=P(S(T)\in C\mid \mathcal{F}_t)$$
which can be shown to be a martingale. 
Hence, we can construct tests and construct continuous time RCI via the $Q_T(t)\geq\gamma$ and $\{Q_T(t)\leq\gamma\}$, respectively. We also know, given the connection with Brownian motion how to predict $S(T)$ in terms of $S(t)$.

The prototype example is the accumulation of i.i.d. observations estimating a normal mean with known variance, say $\sigma=1$.
Then 
$S(n)=\sum_{i=1}^n X_i-n\theta$
can be embedded in Brownian motion since $E(S(n))=0$, $\mbox{Var}(S(n))=n$ and $\mbox{Cov}(S(n),S(m))=\min\{n,m\}$.
Here $\theta$ represents a hypothesized mean for a hypothesis test and represents the true mean for a RCI.

Hence we assume the sequence of statistics, suitably transformed to be standardized, approximate Brownian motion.
We now have a generic approach to construct tests and RCI. So define, for some $V(t)_{0<t\leq T}$,
$$Q_T(t)=P\bigg(B(V(T))\in C\mid B(V(t))\bigg)$$
which if $C=(c,\infty)$ becomes
$$Q_T(t)=1-\Phi\left(\frac{c-B(V(t))}{\sqrt{V(T)-V(t)}}\right).$$
The $V(T)$ is predicted using $V(t)$; i.e. $V(T)=(T/t)\,V(t)$ which turns out to be an excellent approximation, in that typically $V(t)=t\,H(\theta)$ for some function $H$ of the parameter which can be estimated from the data.

The corresponding test can be rejected if ever
$Q_T(t)\geq\gamma$ which is equivalent to 
$$B(V(t))\geq c-\sqrt{V(T)-V(t)}\,\,\Phi^{-1}(1-\gamma)$$
for any $0<t\leq T$. For the continuous time RCI, if
$B(V(t))=(S(t)-\theta/)W(t)$
then 
the one sided sequence of RCI is given by
$(I(t),\infty)$ where
$$I(t)=S(t)-W(t)\{c-\sqrt{V(T)-V(t)}\Phi^{-1}(1-\gamma)\}.$$
For the two sided test, so now $C=(-\infty,c)\cup (c,\infty)$, we have
$$Q_T(t)=\Phi\left(\frac{-c-B(V(t))}{\sqrt{T-t}}\right)+1-\Phi\left(\frac{c-B(V(t))}{\sqrt{V(T)-V(t)}}\right).$$
The two sided RCI, with $B(V(t))=(S(t)-\theta)/W(t)$, 
is given by
$I(t)=S(t)\pm W(t)\,D(t)$
where $D(t)$ is the positive solution of $x$ to
$$\Phi\left(\frac{-c-x}{\sqrt{V(T)-V(t)}}\right)+1-\Phi\left(\frac{c-x}{\sqrt{V(T)-V(t)}}\right)=\gamma.$$

\noindent
In subsections~\ref{sec:survival} and \ref{sec:binary} observations arise in continuous time and we exploit the connection with Brownian motion to develop continuous time RCI.

\subsection{Survival data}\label{sec:survival}

In this section we follow the set up of \cite{Jennison89} in the section of the same title: survival data. The underlying idea is that there are two groups $A$ and $B$ and for each interim time $t$ the pooled death times are $\tau_1<\tau_2<\cdots<\tau_{d(t)}$ which represent time from entry to death. Here $d(t)$ is the pooled number of deaths by time $t$. The assumption is that for those in group $A$ have unknown hazard function $h(t)$, while those in group $B$ have have hazard function $e^\theta h(t)$ for some $\theta$.
The aim is to provide a test and provide RCI for $\theta$. 

To this end, for each $\tau_i(t)$, define $r_{i1}(t)$ and $r_{i2}(t)$ to be those from group $A$ and $B$, respectively, to have survived a time $\tau_{i}(t)$. 
The log-rank statistic, originally considered by \cite{Peto1972}, is given by
$$L(t)=\sum_{i=1}^{d(t)}\left\{\frac{r_{i1}(t)}{r_{i1}(t)+r_{i2}(t)}-\delta_i(t)\right\}$$
where $\delta_i(t)=1$ if $i$ belongs to group $A$ and $\delta_i(t)=0$ if $i$ belongs to group $B$.

As explained in \cite{Jennison89}, under certain conditions,
$$S(t)=\frac{L(t)-\theta d(t)/4}{\sqrt{d(t)/4}}$$
is approximately marginally standard normal for each $t$.
When the conditions for the approximate normality for $L(t)$ fail, \cite{Jennison89} propose an alternative statistic which is
$S(t)=L(t)/\sqrt{V(t)}$  where now
$$L(t)=\sum_{i=1}^{d(t)}\left\{\frac{r_{i1}(t)}{r_{i1}(t)+e^\theta\,r_{i2}(t)}-\delta_i(t)\right\}$$
and 
$$V(t)=\sum_{i=1}^{d(t)}\frac{r_{i1}(t)\,r_{i2}(t)\,e^\theta}{(r_{i1}(t)+e^\theta r_{i2}(t))^2}.$$
The marginals for $S(t)$ are approximately standard normal. 
It is the process $L(t)$ which can be approximated by time change Brownian motion, i.e. $L(t)=B(V(t))$.

\subsection{Binary data}\label{sec:binary}

Here the data are based on the completion of $N(t)$ trials by time $t$ and the cumulative successes are $X(t)$, where each success is assumed to occur with probability $\theta$.
Then \cite{Jennison89} point out that
$$S(t)=\frac{X(t)-\theta N(t)}{\theta(1-\theta)}$$
is approximately Brownian motion with time change; i.e. $$S(t)=B\left(\frac{N(t)}{\theta(1-\theta)}\right).$$
We can proceed by defining $Q_T(t)=P(|S(T)|\geq c\mid\mathcal{F}_t)$ which is a martingale and therefore leads to both tests and the construction of continuous time RCI.

A specific application involves two groups in which $\theta$ is now the log odds ratio of two success probabilities $p_A$ and $p_B$, i.e. $\theta=\log\{p_A(1-p_B)/(p_B(1-p_A)\}$.
To complete the notation $X(t)$ is the cumulative successes for group $A$ from $N(t)$ trials and $Y(t)$ is the cumulative successes for group $B$  from $M(t)$ trials.
Then according to \cite{Jennison89} it is that
$S(t)=\widehat\theta(t)-\theta/\sqrt(V(t))$
is approximately Brownian motion, i.e. $S(t)=B(1/V(t))$ where
$$\exp\{\widehat\theta(t)\}=\frac{X(t)(M(t)-Y(t))}{Y(t)(N(t)-X(t))}$$
and
$$V(t)=\frac{1}{X(t)}+\frac{1}{N(t)-X(t)}+\frac{1}{Y(t)}+\frac{1}{M(t)-Y(t)}.$$




\subsection{Applications}

Here we consider two types of statistic, also considered by \cite{Jennison89}; the MLE statistic and the score test statistic. In the former the test statistic is
$T_n(\theta)=n(\widehat\theta_n-\theta)/\sqrt{V_n(\theta)}$
where $\widehat\theta_n$ is the MLE and $$V_n(\theta)=n^2\,\mbox{Var}(\widehat\theta_n)\approx n/I(\theta),$$ where $I(\theta)$ is the Fisher information. 
For the latter, 
$T_n(\theta)=U_n(\theta)/\sqrt{V_n(\theta)}$
where 
$U_n(\theta)=(\partial/\partial\theta)\,\log l_n(\theta)$
and $l_n(\theta)$ is the log-likelihood function and $V_n(\theta)=n\,I(\theta)$. 

The aim given $X_{1:n}$ is to predict $T_N(\theta)$. We do this universally using
\begin{equation}\label{eq:predictT}
T_{N}(\theta)=\sqrt{n/N}\,T_n(\theta)+\sqrt{(N-n)/N}\,Z
\end{equation}
where $Z$ is an independent standard normal random variable.
We find the positive solution $\xi_n$ to
$$1-\Phi\left(\frac{c\sqrt{N}-\sqrt{n}\xi_n}{\sqrt{N-n}}\right)+\Phi\left(\frac{-c\sqrt{N}-\sqrt{n}\xi_n}{\sqrt{N-n}}\right)=\gamma$$
where $c=\Phi^{-1}(1-\alpha\gamma/2)$
which provides the two sided RCI of size $1-\alpha$ given by
$$I_n=\{\theta:\,|T_n(\theta)|\leq \xi_n\}.$$
In both cases, as pointed out in \cite{Jennison89}, 
the covariance between $T_n$ and $T_{N}$ is $\sqrt{V_n/V_{N}}$ which is $\sqrt{n/N}$. Hence, we can motivate (\ref{eq:predictT}) as the prediction of $T_N$ given $T_n$. It also maintains the correct marginal distribution, that is, if $T_n$ is standard normal then so is $T_N$.

So, in the following two subsections we will be looking at two types of statistics. 
The first is based on an MLE and the process which can be embedded in Brownian motion is
$S^*_n(\theta)=(\widehat\theta_n-\theta)/V_n(\theta)$, where $V_n(\theta)=\mbox{Var}(\widehat\theta_n)$, and
$S^*_n=B(1/V_n)$.
The second is a score statistic, $S_n(\theta)=U_n(\theta)/\sqrt{V_n(\theta)}$
where $U_n(\theta)=(\partial/\partial\theta)\log l_n(\theta)$
and $V_n(\theta)=nI(\theta)$ with $I(\theta)$ being the Fisher information. It is $U_n(\theta)$ which can be embedded in Brownian motion with a time change, i.e. $U_t(\theta)=B(tI(\theta))$.

\subsubsection{MLE statistic}

The example we consider appears in section 5.2 in \cite{Jennison89}
and involves an odds ratio. Bernoulli trials from two arms $A$ and $B$ proceed over time and at interim analysis $k$ the number of completed trials is $n(k)$ with $X(k)$ successes for arm $A$ and $m(k)$ trials with $Y(k)$ successes for arm $B$. 
If $p_A$ and $p_B$ are the unknown Bernoulli parameters then the MLE for $\theta=\log(p_A(1-p_B)/(p_B(1-p_A))$ is
$$\widehat\theta_k=\log\left\{\frac{n(k)(m(k)-Y(k))}{m(k)(n(k)-X(k))}\right\}.$$

\begin{figure}
\center
\includegraphics[width=12cm,height=6cm]{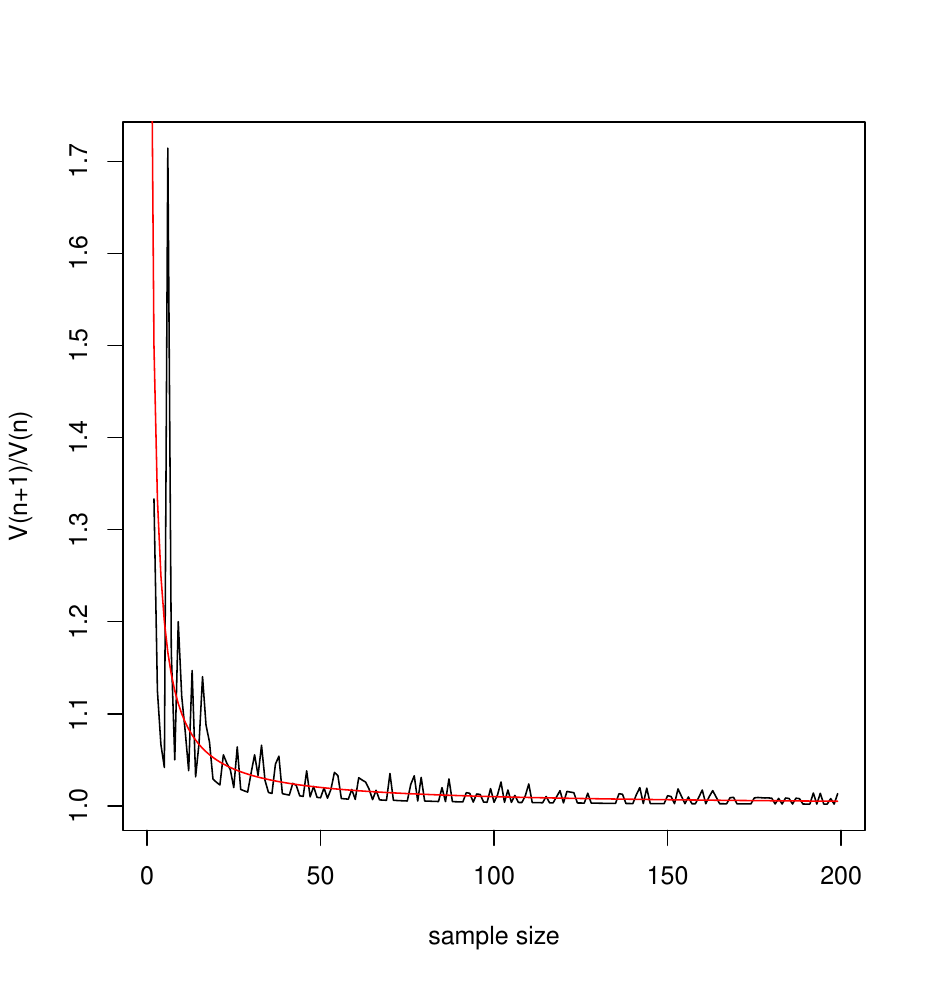}
\caption{Plot of $V_{n+1}/V_n$ (bold black line) alongside $(n+1)/n$ (red line))}
\label{fig9}
\end{figure}

\begin{figure}
\center
\includegraphics[width=12cm,height=6cm]{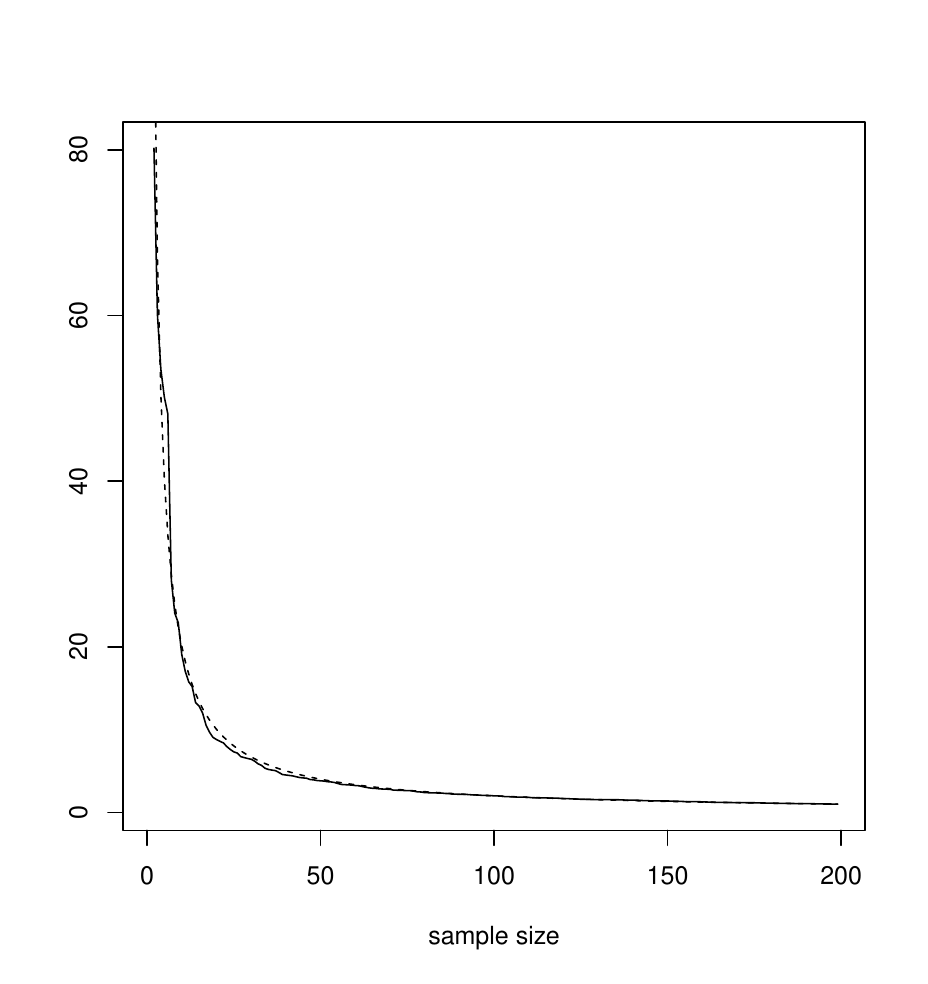}
\caption{P{lot of $V_N/V_n$ (bold line) alongside $N/n$ (dotted line)}}
\label{fig10}
\end{figure}

\begin{figure}
\center
\includegraphics[width=12cm,height=6cm]{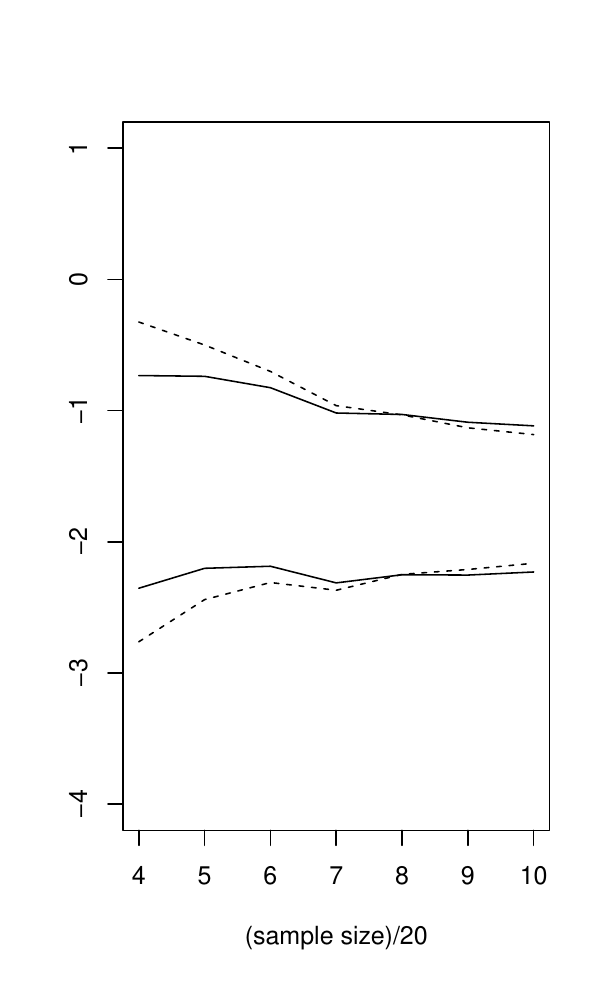}
\caption{Comparison of RCI for GSM approach (bold lines) with martingale RCI (dotted lines) for odds ratio model with sample size of 200 and an interim analysis every 20 observations. The true value of the parameter is -1.69}
\label{fig11}
\end{figure}

\begin{figure}
\center
\includegraphics[width=12cm,height=6cm]{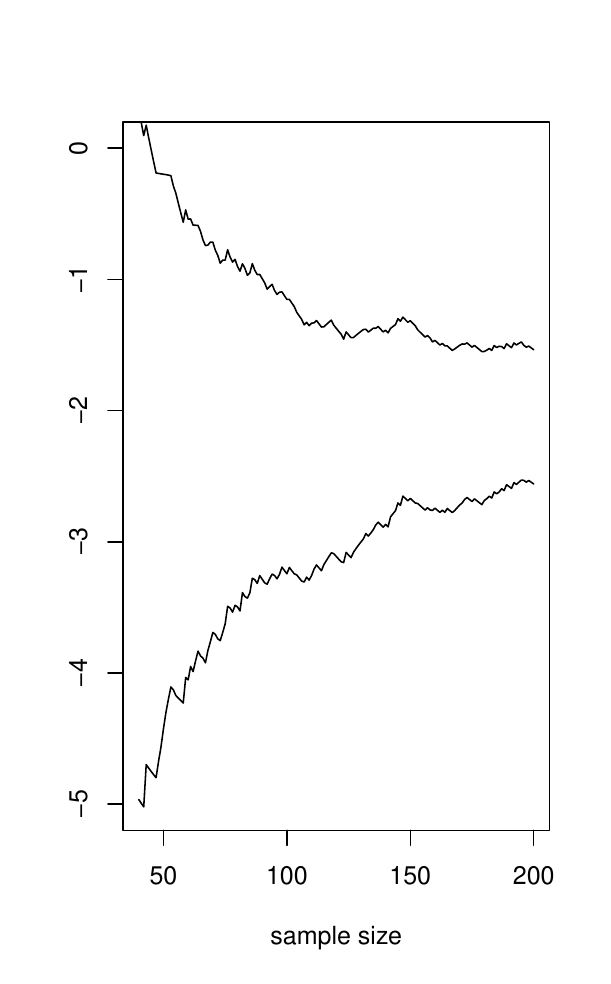}
\caption{Martingale RCI for all samples for the log odds parameter}
\label{fig12}
\end{figure}

\noindent
Further, the variance is estimated by
$$\widehat{V}_k=1/n(k)+1/m(k)+1/(X(k)-n(k))+1/(Y(k)-m(k)).$$
We first look at the prediction of $T_N$ given $T_n$. We took $N=200$ and $n=100$ and $p_A=0.3$ and $p_B=0.7$. We fix the statistic at $n=100$ and in one case we sample a further 100 outcomes and compute the $T_N$ from data and in the other case we predict $T_N$ using (\ref{eq:predictT}). We repeat this 1000 times and record the variances from the two cases. The fixed statistic at $n=100$ is 0.081. The variance of $T_N$ from further experiments is 0.513 and the variance of $T_N$ from the predictions is 0.512.

To look at the behavior of the variances, notably that $V_n/V_N=n/N$ we show a plot of $(V_{n+1}/V_{n})$ in Fig.~\ref{fig9},
with the ratio of variances the bold line and $((n+1)/n)$ as the red line. In Fig.~\ref{fig10} we show the sequence $(V_N/V_n)$ (bold line) along side the sequence $(N/n)$ (dotted line).

In Fig.~\ref{fig11} we present a comparison of two sided RCI of size 0.95. There are 10 interim analyses and a total of 200 samples. We show from the 4th interim time onwards. The bold lines are the \cite{Jennison89} RCI and the dotted lines are the martingale RCI. The true value of $\theta$ is -1.69 with $p_A=0.3$ and $p_B=0.7$. For the martingale RCI we took $\gamma=0.5$.
The martingale approach is not restricted to group interim analysis and Fig.~\ref{fig12} show the RCI for all samples from 40 onwards.

\subsubsection{Score statistic}

\begin{figure}
\center
\includegraphics[width=12cm,height=6cm]{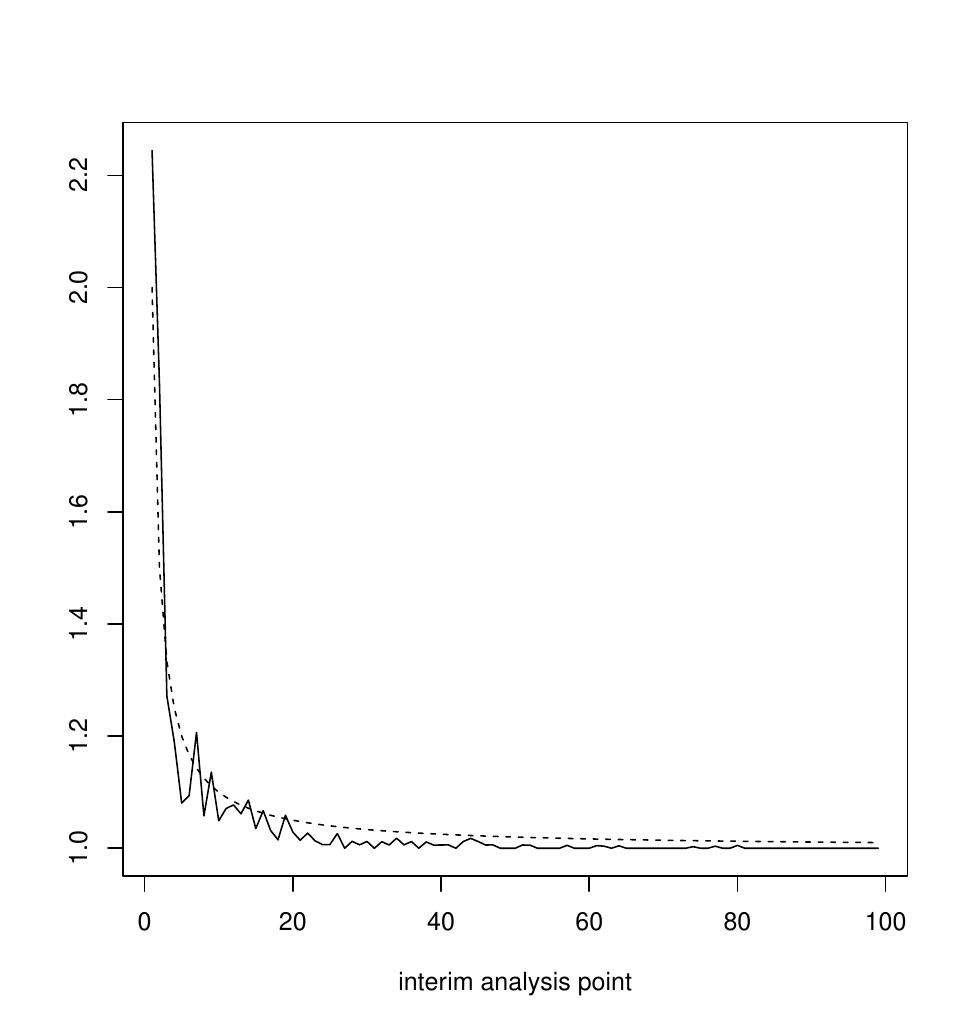}
\caption{Plot of $V_{k+1}/V_k$ and $(1+k)/k$ for the score statistic based on the partial likelihood model}
\label{fig14}
\end{figure}

\begin{figure}
\center
\includegraphics[width=12cm,height=6cm]{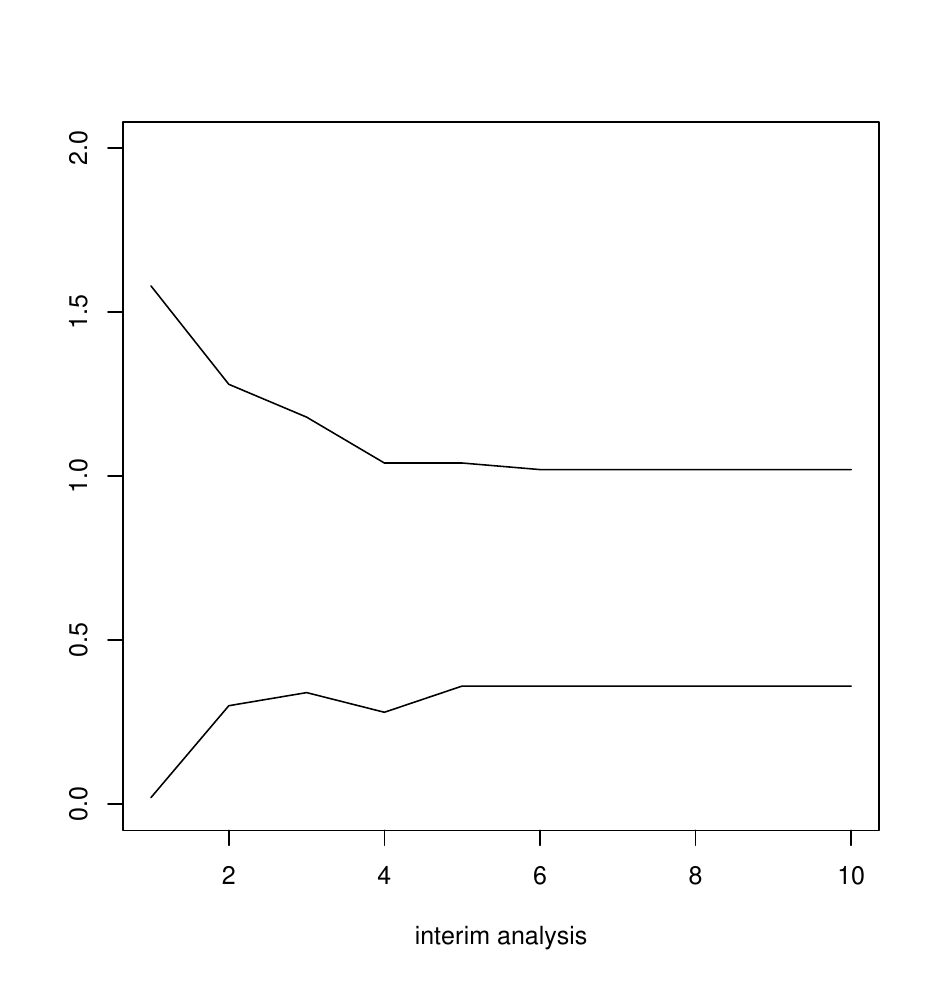}
\caption{Martingale RCI for all samples for the partial likelihood parameter}
\label{fig13}
\end{figure}

The example here appears in section 4.1.2 in \cite{Jennison89} and involves survival data with the partial likelihood of \cite{Cox1972}.
The sequence of score statistics are detailed in section 5.1.1 and
for each $k$, $T_k(\theta)=L_k(\theta)/\sqrt{V_k(\theta)}$ at time $t_k$. We first look at the sequence of variances $V_k$ behaving as $V_{k+1}/V_k=1+1/k$. See Fig.~\ref{fig14}.

We took 100 interim analysis points spaced equally from 0 to 5.8. We computed the variances at the true value of $\theta$ which is 0.69. A total of 200 observations were generated, from the exponential(1) distribution for arm A and exponential(2) for arm B.  Two sided RCI for the same setting with 10 interim points of analysis are shown in Fig.~\ref{fig13}.

\section{Boostrap tests}
\label{sec:bootstrap}

The bootstrap (\cite{Efron79}) is a resource for testing a hypothesis when the distribution of the sample is unknown. The illustration we use throughout is for a mean of a sample and the distribution of the sample is unknown.
The strategy for conducting a bootstrap test, $H_0:\mu=0$ versus $H_0:\mu>0$, with data $X_{1:m}$,  is as follows:

\begin{description}

\item 1. Center the data to $\widetilde{X}_{1:m}=X_{1:m}-\bar{X}_m$ and denote by $F_m$ the empirical distribution function of $\widetilde{X}_{1:m}$.

\item 2. Sample $X^*_{1:m}$ i.i.d. from $F_m$ and construct $T=\bar{X^*}_m$.

\item 3. Repeated 2. and the use of Monte Carlo methods, allows for an estimator of $c$, for which $P(T\geq c)=\alpha$ for some chosen Type I error $\alpha$.

\item 4. The null hypothesis is rejected if $\bar{X}_m\geq c$.

\end{description}

\noindent
This is a well known and established procedure when the distribution or the family of distribution of the sample is unknown.
We now show how to adapt this test to an anytime valid test. The key starting point remains the same, before any progress can be made, including the critical value, an observed sample of size $m$ is required.   


\subsection{Anytime valid bootstrap test}

As usual we make assumptions about the sample size which is that there is a maximal sample size $N$. 
We also assume that there is a sample size $m$, sufficiently large to allow for an adequate estimator of an empirical distribution function. Throughout we will use $m=100$ and $N=500$. Hence, early stopping by rejecting the null hypothesis is allowed for $m<n\leq N$.

In constructing the test we follow the principles behind the use of the Efron bootstrap. This is that statistics of interest are sampled by treating the empirical distribution function of an observed sample as a stand in for the true distribution function.
The first action, for a test of the hypothesis $\mu=0$, is to center the observed  data of size $m$; i.e. 
$$F_m(x)=m^{-1}\sum_{i=1}^m 1(\widetilde{X}_i\leq x)$$
where $\widetilde{X}_i=X_i-\bar{X}_m$, $i=1,\ldots,m$.
Hence, from this, a bootstrapped dataset of size $N$ can be generated by taking $X^*_{1:N}$ to be i.i.d. from $F_m$.
Multiple independent such samples are what can be used to construct critical values for the anytime valid test.

From one such $X^*_{1:N}$ we compute $\bar{X^*}_N$. We first find $c$ for which $P(\bar{X^*}_N\geq c)=\widetilde{\alpha}$ for some $\widetilde{\alpha}<\alpha$ where $\alpha$ is the designated overall type I error. For example, $\widetilde{\alpha}=0.95\alpha$. It is easy to find $c$ using the bootstrapped samples of $\bar{X^*}_N$.

We need to bootstrap to determine the $\gamma$ ensuring an overall type I error of $\alpha$. That is, we require
$P(Q\geq\gamma)=\alpha$ where $Q=\max_{m<n\leq N}\{Q_n\}$.
We estimate $Q_{m+1:N}$ using bootstrapped samples $X^*_{1:N}$. To elaborate let us fix an $n$ between $m$ and $N$.
We generate multiple copies of $T_N^{(n)}$ given by
$$\frac{\sum_{i=1}^n X_i^*+\sum_{i=n+1}^N X^{**}_i}{N}$$
where the $X^*_{1:n}$ are i.i.d. from $F_m$ and the $X^{**}_{n+1:N}$ are i.i.d. from $F^*_n$ which is the empirical distribution function of $\widetilde{X}^*_i=X_i^*-\bar{X^*}_n$,
for $i=1,\ldots,n$.  From these multiple copies we can obtain $Q_n$. This can clearly be extended to obtain $Q_{m+1:N}$ from which we can derive a single $Q$. Repeated, starting with new multiple bootstrapped $X^*_{1:N}$, gives us multiple  $Q$'s from which we can find $\gamma$.

In order to complete the test we use as the sequence of statistics of interest the probability, under the null hypothesis, that a predicted value for $\bar{X^*}_N$ given a sample size of $n>m$, exceeds $c$. The predicted $\bar{X^*}_N$ under the null hypothesis is given by
$$T_N^{(n)}=\frac{\sum_{i=1}^n X_i+\sum_{i=n+1}^N X_i^*}{N},$$
where the $X^*_{n+1:N}$ are i.i.d. from $F_n$, the empirical distribution function of the $\widetilde{X}_i=X_i-\bar{X}_n$, for $i=1,\ldots,n$.
Then 
$Q_n=P(T_N^{(n)}\geq c)$.
We also define $Q_N=1(T_N\geq c)$.
The null hypothesis will be rejected if ever $Q_n\geq\gamma$ for some $\gamma$.


\subsubsection{Algorithm}

A bootstrap approach to anytime valid testing of a mean without knowing the distribution of the sample. Assume the test is $H_0:\mu=0$ versus $H_1:\mu>0$. The algorithm starts with an observed sample of size $m$, just as it does for the fixed sample size bootstrap test. First we describe obtaining the critical value $c$ for a chosen $\widetilde\alpha$.

\begin{description}

\item 1. Construct $\widetilde{X}_{1:m}=X_{1:m}-\bar{X}_{m}$ and take $F_m$ to be the empirical distribution of $\widetilde{X}_{1:m}$.

\item 2. Take $X^*_{1:N}$ to be i.i.d.  from $F_m$.

\item 3. Repeated simulation of 2. gives $c$ such that $F_m^{-1}(c)=\widetilde{\alpha}$. 

\end{description}

\noindent
Now we describe how to obtain the critical value $\gamma$.

\begin{description}

\item 1. Given $X^*_{1:n}$ which are i.i.d. from $f_m$, let  $F^*_n$ be the empirical distribution of $\widetilde{X}^*_{1:n}=X^*_{1:n}-\bar{X^*}_n$

\item 2. Generate  $X^{**}_{n+1:N}$  i.i.d. from $F^*_n$ and construct 
$$T_N^{(n)}=\frac{\left(\sum_{i=1}^n X^*_i+\sum_{i=n+1}^N X^{**}_i\right)}{N}.$$

\item 3. Repeated simulation of $T^{(n)}_N$ from 2.  and extending to all $m<n\leq N$ yields 
$$\left(Q_n=P(T_N^{(n)}\geq c)\right)_{n=m+1:N},$$ via Monte Carlo methods, which provides the maximum value $Q$. 

\item  4. Repeated 1., 2. and 3. provide multiple copies of $Q$ from which $\gamma$ can be found.

\end{description}

\noindent
Finally we detail how the test is implemented given a to be  observed sample $X_{1:N}$.

\begin{description}

\item 1. For a given $m<n\leq N$ construct $F_n$ which is the empirical distribution of the $\widetilde{X}_i=X_i-\bar{X}_n$, for $i=1,\ldots,n$.
 
 \item 2. Sample $X^*_{n+1:N}$ to be i.i.d. from $F_n$ and construct
 $$T_N^{(n)}=\frac{\sum_{i=1}^n X_i+\sum_{i=n+1}^N X^*_i}{N}.$$
 
 \item 3. Repeated 2. yields $Q_n$. If $Q_n\geq\gamma$ the null hypothesis is rejected.

\end{description}

\subsection{Illustration}

We take $m=100$ and $N=500$ and test for a mean with the sample coming from a normal distribution with fixed variance 1. This is so we can compare the bootstrap anytime valid test with the anytime valid test which relies on martingales. 
If it is known the sample are normal with variance 1 then it is possible to determine $Q_n$ exactly, it is given by
$$Q_n=1-\Phi\left(\frac{d\sqrt{n}-\sum_{i=1}^n X_i)}{\sqrt{N-n}}\right)$$
with $Q_n=1(\bar{X}_N\geq d/\sqrt{N})$. Here $d=\Phi^{-1}(1-\alpha\gamma)$ and the null hypothesis is rejected if ever $Q_n\geq \gamma$.
The overall type I error is given by $\alpha$. The main point here is that the $(Q_n)$ forms a martingale sequence and hence the type I error can be established using the martingale inequality. See \cite{Lan1982}.

In the figures, sample paths of $(Q_n)$ are provided; the black lines are from the anytime valid bootstrap test and the red line is from the normal model anytime valid $z$ test. The match between the two approaches is that we took $\widetilde\alpha=\alpha\gamma$.
In Fig.~\ref{fig1} the data sample is normal with mean 0 and variance 1 ($H_0$ correct) and in Fig.~\ref{fig2} the data sample is normal with mean 0.2 and variance 1 ($H_0$ not correct). As can be seen the bootstrap sample paths are a good match with those from the model based normal test.

\begin{center}
\begin{figure}[!htbp]
\begin{center}
\includegraphics[width=14cm,height=5cm]{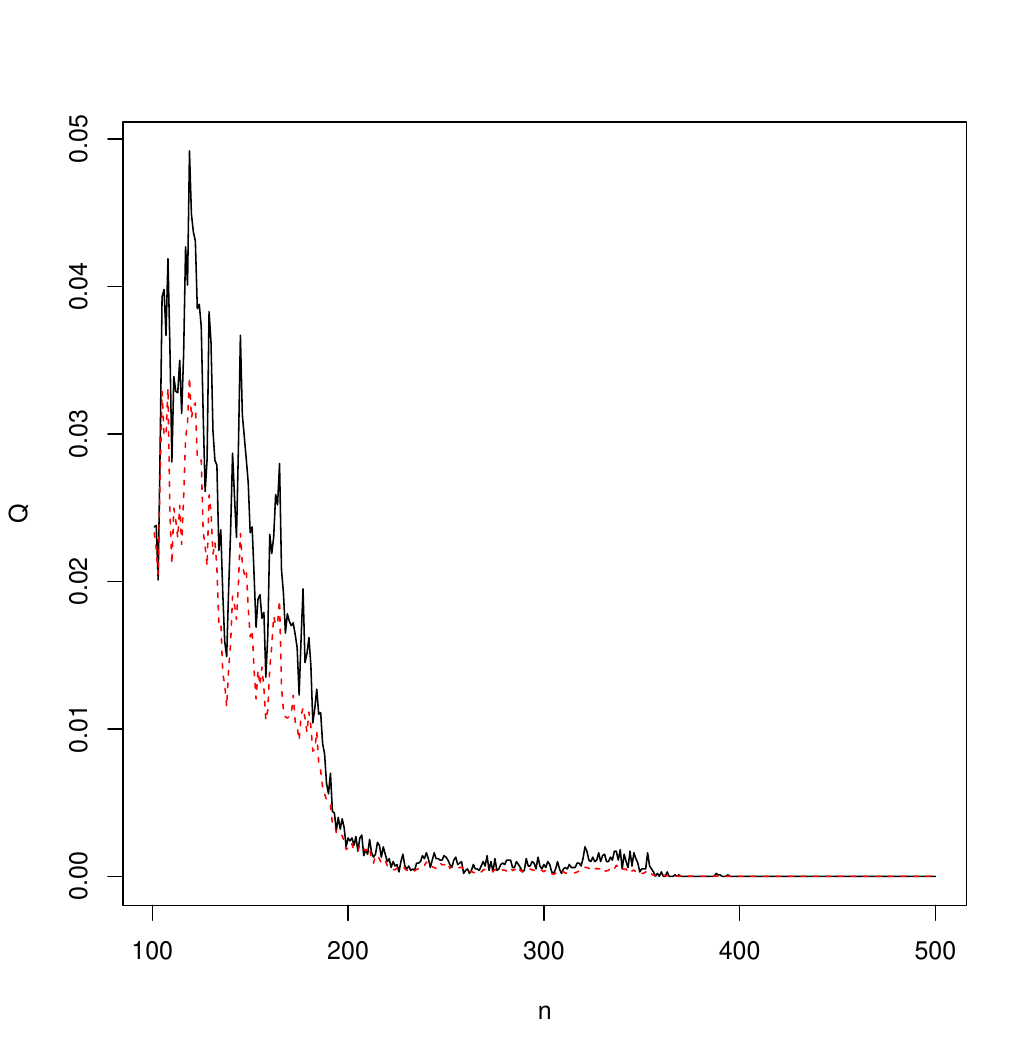}
\caption{$Q_n$ functions with data sample from normal with mean 0}
\label{fig1}
\end{center}
\end{figure}
\end{center} 

\begin{center}
\begin{figure}[!htbp]
\begin{center}
\includegraphics[width=14cm,height=5cm]{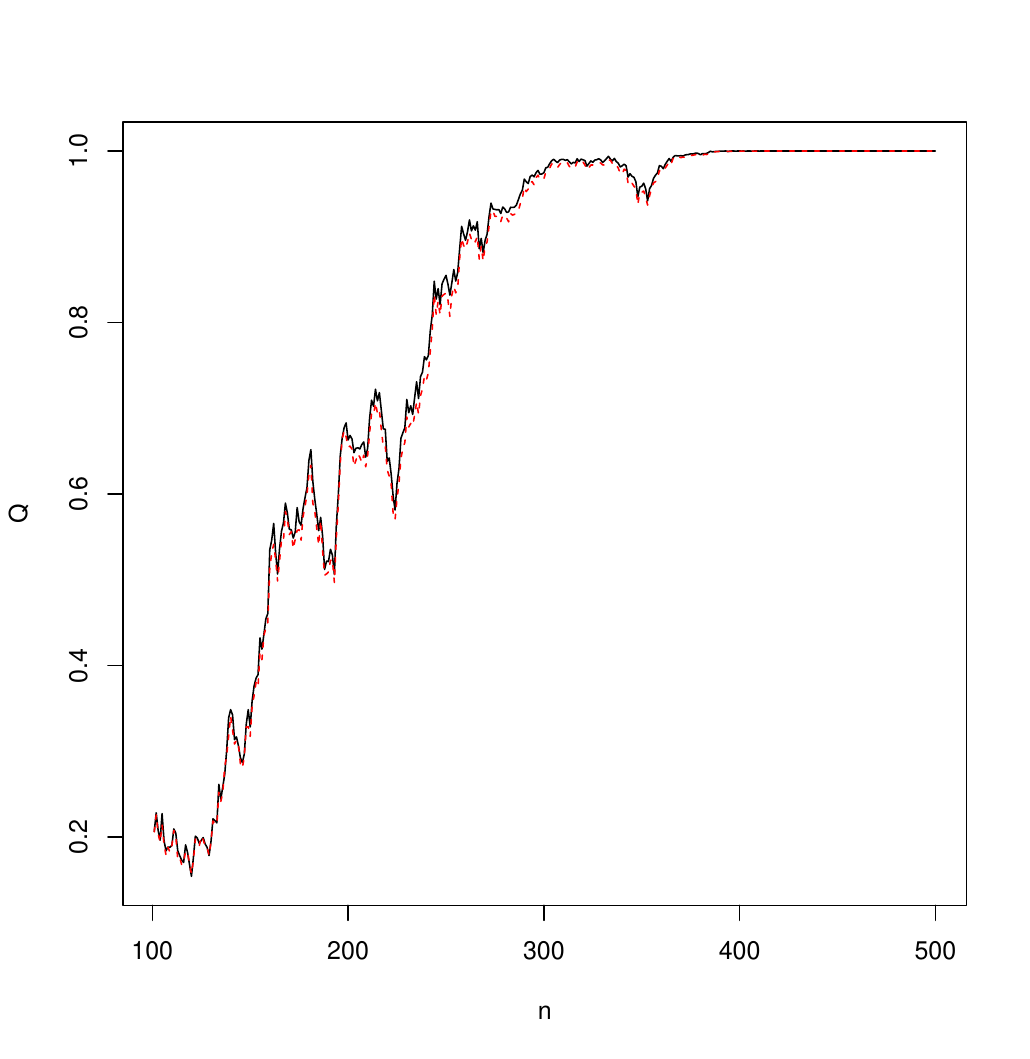}
\caption{$Q_n$ functions with data sample from normal with mean 0.2}
\label{fig2}
\end{center}
\end{figure}
\end{center}

\subsection{Real data analysis} In this section we use the dataset ``iris" which is in the R package ``datasets".
We use column 2 which consists of $N=150$ sepal width observations, comprising 50 observations each from 3 species of iris.
See \cite{Anderson35} and \cite{Fisher36} for further details. 
\begin{center}
\begin{figure}[!htbp]
\begin{center}
\includegraphics[width=14cm,height=5cm]{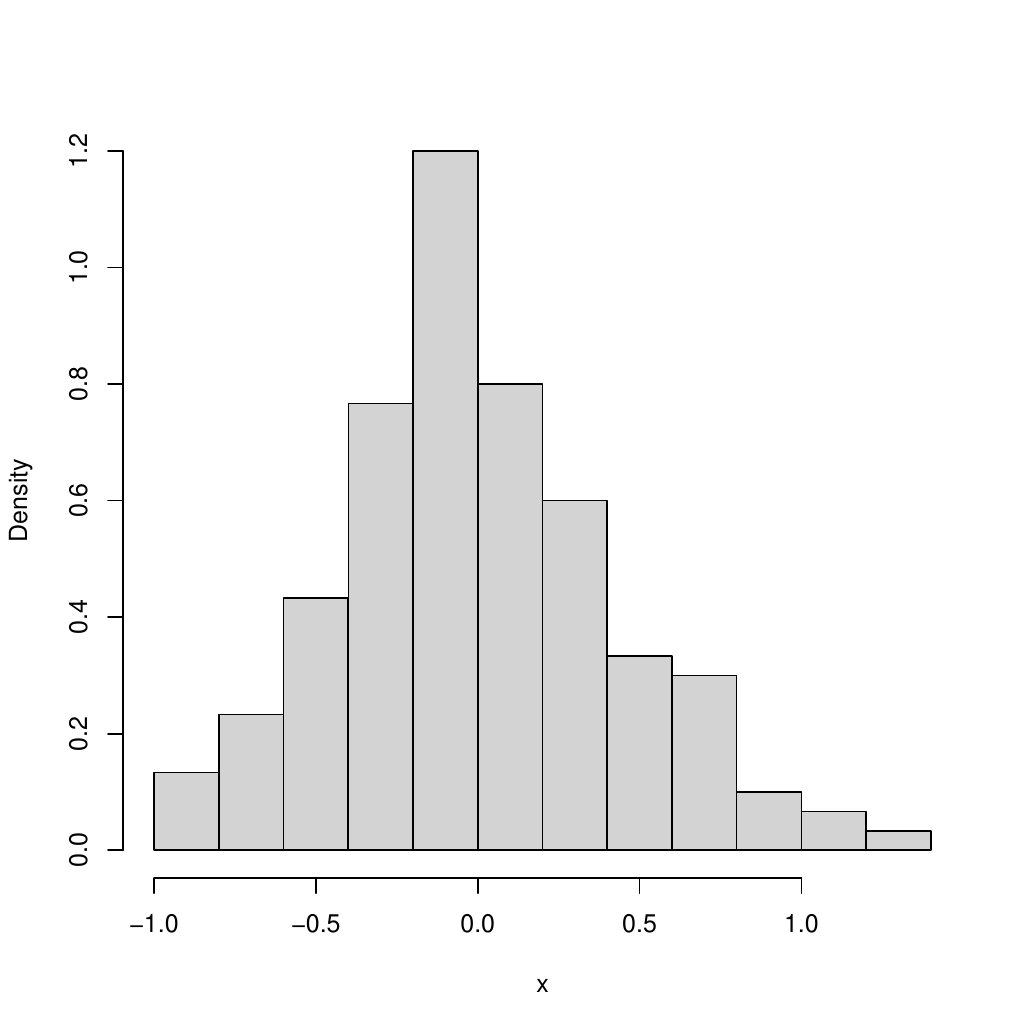}
\caption{Histogram of 150 iris sepal width data}
\label{fig3}
\end{center}
\end{figure}
\end{center} 

\begin{center}
\begin{figure}[!htbp]
\begin{center}
\includegraphics[width=14cm,height=5cm]{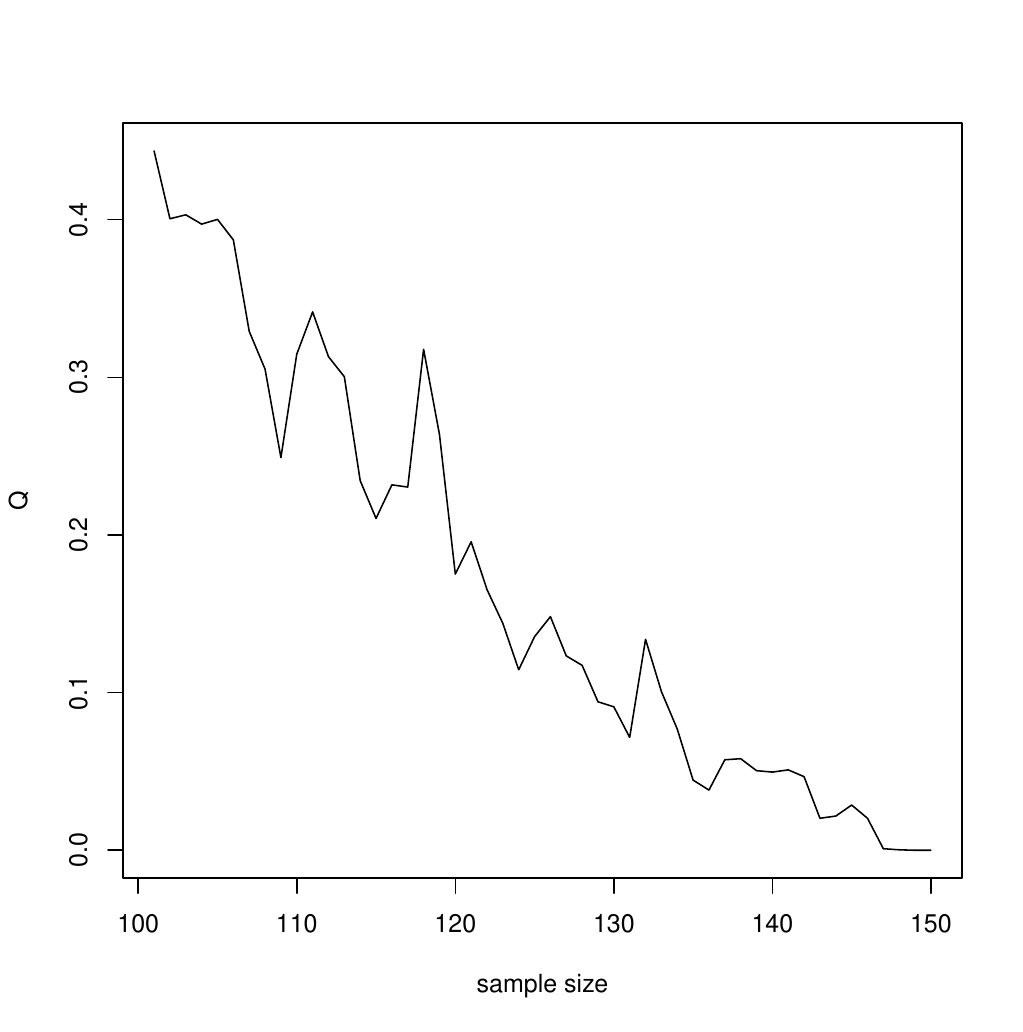}
\caption{Sample path of $Q_n$ for $m<n\leq N$ }
\label{fig4}
\end{center}
\end{figure}
\end{center}

\noindent
A histogram of the data, having subtracted 3 form each observation, is presented in Fig.~\ref{fig4}.   The hypothesis is $H_0:\mu=0$ versus $H_1:\mu>0$ and as can be seen the assumption of normality is suspect. Hence the appropriateness of the bootstrap test.
We take $m=100$ and $\alpha=0.05$. Using bootstrap sample sizes of 10,000 we obtain the value of $c$ to be 0.071. The corresponding value of $\gamma$ is given by 0.487. 

The sample path of the $(Q_n)$ for $m<n\leq N$ is presented in Fig.~\ref{fig4}. As can be seen the maximum value is smaller than $\gamma$, indeed the largest value is 0.444. Hence, the null hypothesis is not rejected. For reference, the mean of the data is 0.057.

\subsection{Parametric bootstrap test}

Here we come almost full circle.  The parametric version of the bootstrap, see \cite{Efron2012}, takes a bootstrap sample from the model with a plugged in MLE; i.e. instead of taking $X^*_{1:n}$ to be i.i.d. from $F_n$, the empirical distribution of the sample, it is taken from the parametric model $f(\cdot\mid \widehat\psi_n)$, where $\widehat\psi_n$ is the MLE. For a hypothesis test, the $F_n$ would be adapted to adhere to the null hypothesis constraint. In the parametric case this would look like a null hypothesis on a part of the parameter space, say $\psi=(\theta,\phi)$ in which case we would take $X^*_{1:n}$ i.i.d. from $f(\cdot\mid\theta_0,\widehat\phi_n)$ when the null hypothesis is $H_0:\theta=\theta_0$. This is exactly how we proceeded as in Section \ref{sec:student} with the Student $t$ test when applied to the anytime valid test.

It is worth providing the algorithm for performing an anytime valid test using the parametric bootstrap. For some suitable large $m$, estimate $\widehat\phi_m$. Further, the test statistic is represented by $T_n=T(X_{1:n})$ for a sample of size $n$.

\begin{description}

\item 1. Sample $X^*_{1:N}$ to be i.i.d. from $f(\cdot\mid\theta_0,\widehat\phi_m)$ and construct $T^*_N$ from $X^*_{1:N}$. Repeated sampling yields $\mathcal{C}_{\wt}$ for which $P(T^*_N\in \mathcal{C}_{\wt})=\wt$.

\item 2. For each $n>m$ sample $X^*_{1:n}$ to be i.i.d. from $f(\cdot\mid\theta_0,\widehat\phi_m)$ and then estimate
$\widehat\phi_n^*$ from $X^*_{1:n}$. Sample $X^{**}_{n+1:N}$ to be i.i.d. from $f(\cdot\mid\theta_0,\widehat\phi_n^*)$ and compute
$T_N^{(n)}=T(X^*_{1:n},X^{**}_{n+1:N})$. Repeated sampling gives $Q_n=P(T_N^{(n)}\in\mathcal{C}_{\wt})$ for each $n>m$ and hence $Q=\max_{1\leq n\leq N}\{Q_n\}$. This provides the $\gamma$ for which $P(Q\geq\gamma)=\alpha$.

\item 3. With observed data $X_{1:n}$ for $n>m$, sample $X^*_{n+1:N}$ to be i.i.d. from $f(\cdot\mid\theta_0,\widehat\phi_n)$. Repeated sampling
gives $Q_n=P(T(X_{1:n},X^*_{n+1:N})\in \mathcal{C}_{\wt})$. Reject the null hypothesis if $Q_n\geq\gamma$.

\end{description}

\section{Discussion}

Without any reference to martingales, which has been the foundational structure for anytime valid testing, we adopt the following sequence of strategies. 

\begin{description}

\item 1. Predict the missing data with a model $p(X_{n+1:N}\mid X_{1:n})$ for all $1\leq n\leq N$ which satisfies the constraints imposed by the null hypothesis.

\item 2. The model is used to define the predicted statistic $T_N^{(n)}$; i.e. $T_{N}^{(n)}=T(X_{1:n},X'_{n+1:N})$.
Note that $T_N=T(X_{1:N})$ and $T_N^{(n)}$ is predicting $T_N$.

\item  3. The $T_N^{(n)}$ are used to compute $Q_n=P(T^{(n)}_N\in\mathcal{C}_{\wt})$ for a given $\wt$.

\item 4. Using Monte Carlo methods it is possible to find the $\gamma$ for which $P(Q\geq \gamma)=\alpha$
where $Q=\max_{1\leq n\leq N}\{Q_n\}$.

\item 5. The null hypothesis is rejected at sample size $n$ if  $Q_n\geq \gamma$.

\end{description}

\noindent
We believe this is the simplest structure to date for constructing anytime valid tests. To elaborate, it is typically always possible to obtain a Type I error critical region $\mathcal{C}_{\wt}$ for $T_N$. For point 4., we would, using Monte Carlo methods, sample multiple
$T_N^{(n)}(b)$ for $b=1,\ldots,B$ and $n=1,\ldots,N$, estimate
$$Q_n=B^{-1}\sum_{b=1}^B 1\left(T_N^{(n)}(b)\in\mathcal{C}_{\wt}\right), \quad n=1,\ldots,N.$$
Each set of $(Q_n)$ gives $Q=\max_{n\leq N}\{Q_n\}$ from which we can, again using multiple copies of $Q$, find $\gamma$ such that
$P(Q\geq\gamma)=\alpha$.

\bibliographystyle{plainnat}
\bibliography{adapt}




\nocite{*}
\bibliographystyle{plainnat}
\bibliography{adapt}

\end{document}